\definecolor{refkey}{gray}{.78}
\definecolor{labelkey}{gray}{.78}
\newtheorem{Theorem}{Theorem}[section]
\newtheorem{TheoremA}{Theorem}
\newtheorem{Lemma}[Theorem]{Lemma}
\newtheorem{Proposition}[Theorem]{Proposition}
\newtheorem{Corollary}[Theorem]{Corollary}
\newtheorem{Remark}[Theorem]{Remark}
\newtheorem{Claim}[Theorem]{Claim}
\newtheorem{Definition}[Theorem]{Definition}
\newtheorem{Warning}{Warning}[section]
 \definecolor{darkgreen}{rgb}{0,0.4,0}
\definecolor{light}{gray}{0.9}
\newcommand{\cA}{\ensuremath{\mathcal A}}
\newcommand{\cB}{\ensuremath{\mathcal B}}
\newcommand{\cC}{\ensuremath{\mathcal C}}
\newcommand{\cE}{\ensuremath{\mathcal E}}
\newcommand{\cG}{\ensuremath{\mathcal G}}
\newcommand{\cH}{\ensuremath{\mathcal H}}
\newcommand{\cN}{\ensuremath{\mathcal N}}
\newcommand{\cP}{\ensuremath{\mathcal P}}
\newcommand{\cW}{\ensuremath{\mathcal W}}
\newcommand{\bbB}{{\ensuremath{\mathbb B}} }
\newcommand{\bbE}{{\ensuremath{\mathbb E}} }
\newcommand{\bbL}{{\ensuremath{\mathbb L}} }
\newcommand{\bbN}{{\ensuremath{\mathbb N}} }
\newcommand{\bbR}{{\ensuremath{\mathbb R}} }
\newcommand{\bbZ}{{\ensuremath{\mathbb Z}} }
\let\a=\alpha \let\b=\beta   \let\d=\delta  \let\e=\varepsilon
 \let\g=\gamma     \let\k=\kappa  \let\l=\lambda
      \let\o=\omega      
  \let\s=\sigma \let\t=\tau   
\let\D=\Delta     \let\L=\Lambda 
\let\O=\Omega      
\newcommand{\da}{\downarrow}
\newcommand{\toup}{\rightharpoonup}
\newcommand{\be}{\begin{equation}}
\newcommand{\en}{\end{equation}}
\newcommand{\bes}{\begin{equation*}}
\newcommand{\ens}{\end{equation*}}
\thanks{This work has been partially supported by the ERC Starting Grant 680275 MALIG and by the Grant  PRIN 20155PAWZB "Large Scale Random Structures"}
\author[A.~Faggionato]{Alessandra Faggionato}
\address{Alessandra Faggionato.
  Dipartimento di Matematica, Universit\`a di Roma `La Sapienza'
  P.le Aldo Moro 2, 00185 Roma, Italy}
\email{faggiona@mat.uniroma1.it}
\newcommand{\ra}{\rangle}
\newcommand{\la}{\langle}
\title[Miller--Abrahams resistor network and  Mott random walk]{Miller--Abrahams random  resistor network, Mott random walk and 2-scale homogenization}
\begin{document}

\maketitle

\textcolor{blue}{This preprint is  and will remain unpublished.  The results contained here have been generalized to a very large class of random resistor networks in arXiv:2108.11258: A. Faggionato; \emph{Scaling limit of the conductivity of random resistor networks on point processes}.  Please see arXiv:2108.11258 for results and proofs.
}

\begin{abstract} The Miller-Abrahams (MA) random resistor network is given by a complete graph on a marked simple  point process with  edge conductivities depending on the marks and decaying exponentially in the edge length. As  Mott random walk,  it is an  effective model to study Mott variable range hopping in amorphous solids as doped semiconductors. By using 2-scale homogenization 
we prove that a.s.  the  infinite volume conductivity  of the MA  resistor network  is given by  an effective  homogenized matrix $D$ times the mean point density. We also derive homogenization results for the electrical potential.
Moreover, $D$ admits a variational characterization and  equals  the limiting diffusion matrix of Mott   random walk. This result clarifies the relation between the two models and it  also  allows to extend to the MA  resistor network   the existing bounds on  $D$ in agreement with the physical Mott law \cite{FM,FSS}. The latter   
concerns the low temperature stretched exponential decay of conductivity in amorphous solids. The techniques developed here  can be applied to other models, as e.g. the random conductance model, without ellipticity assumptions.

\smallskip

\noindent 
{\em Keywords}:
Marked simple  point process,   Mott variable range hopping, Miller--Abrahams random resistor network,  Mott random walk, homogenization, 2-scale convergence.

\smallskip

\noindent
{\em AMS 2010 Subject Classification}: 
60G55, 
74Q05, 
82D30 

\end{abstract}

\section{Introduction} The  Miller--Abrahams (MA) random resistor network  \cite{MA} has been introduced in order to study   electron transport in amorphous media as doped semiconductors in the regime of strong Anderson localization. These solids present an anomalous conductivity decay at zero temperature, described by Mott law. 
Calling $x_i$ the  impurity positions  in the doped semiconductor, the electron Hamiltonian has exponentially localized quantum eigenstates with localization centers $x_i$ and corresponding energy $E_i$  close to the Fermi level, set equal to zero in what follows. At low temperature phonons  induce transitions between the localized eigenstates, the rate of which can be calculated by the Fermi golden rule \cite{MA,SE}. In the simplification of spinless electrons,   the resulting rate for an electron to  hop from $x_i$ to the unoccupied site $x_j$ is then given by  (cf. \cite[Eq.~(3.6)]{AHL}) 
\be\label{tunnel} \exp\Big\{-\frac{2}{\g}  |x_i-x_j| - \beta \{ E_j -E_i \}_+\Big\} \,.\en
In \eqref{tunnel} $\g$ is the localization length, $\beta=1/k T $ is the inverse temperature and $\{a\}_+:= \max\{0,a\}$.

The above set $\{x_i\}$ can be modelled by a   random simple point process, marked by  random variables $E_i$ (called energy marks) which can be taken i.i.d. with common distribution $\nu$. The physically relevant distributions for inorganic media are of the form  $\nu(dE)= c\, |E|^{\alpha} dE$ with  finite support  $[-A,A]$  for some exponent $\a\geq 0$ \cite{Min,SE} (one says that the marked simple point process $\{(x_i,E_i)\}$ is the $\nu$--randomization of $\{x_i\}$). Mott law \cite{Mott,MD,SE} then  predicts  that, for $d\geq 2$,  the DC conductivity matrix $\s(\b)$ of the medium decays to zero as  $\b \nearrow \infty$ as 
\be\label{mottino}
\s(\b) \approx  A(\b)\exp \bigl\{ -\k \, \beta ^\frac{\a+1}{\a+d +1}\bigr\}\,,
 \en
 where the prefactor matrix $A(\b)$ exhibits  a negligible $\b$--dependence (we keep the matrix  formalism to cover  anisotropic media).
 Strictly speaking, Mott derived the above asymptotics for $\a=0$, while  Efros and Shklovskii derived it for $\a=d-1$. The cases $\a=0,d-1$ are the two physically relevant ones. 
 For $d=1$ the DC conductivity presents an Arrenhius--type decay  as $\b \nearrow \infty$  for all $\a\geq 0$ \cite{Kur}, i.e.\
 \be\label{mottino1}
 \s(\b)\approx  A(\b)\exp \bigl\{ -\k \b\}\,.
 \en

Due to   localization one can treat the above electron conduction by a hopping process of classical particles (see also \cite{ABS,BRSW}), thus leading  anyway to a complicate simple exclusion process due to the  Pauli blocking. The reversible measure of the exclusion process is  the Fermi-Dirac distribution (i.e. the Bernoulli product probability measure  such that the probability of having a particle at $x_i$ is proportional to $e^{-\b (E_i-\g)}$, $\g\in \bbR$ being the chemical potential)  .
 Effective simplified models in  a  mean field approximation   are  given by the MA random resistor network \cite{AHL,MA,POF,SE}   and  by Mott random walk \cite{FSS}.
The MA random resistor network has nodes $x_i$ and,   between any  pair of nodes $x_i\not=x_j$, it has  an electrical filament of  conductivity 
\be\label{condu}
c_{x_i,x_j}:=\exp\Big\{ - \frac{2}{\gamma} |x_i-x_j| -\frac{\b}{2} ( |E_i|+ |E_j|+ |E_i-E_j|) \Big\}\,.
\en
Mott random walk is the continuous--time random walk with state space $\{x_i\}$ and probability rate for a jump from $x_i$ to $x_j$ given by \eqref{condu}. We point out that the r.h.s. of \eqref{condu}   corresponds to the leading term for $\b$ large of \eqref{tunnel} multiplied by the probability in the Fermi-Dirac distribution that $x_i$ and $x_j$ are, respectively,   occupied and unoccupied  by an electron 
(see \cite[Eq.~(3.7)]{AHL}). 

\smallskip

The original  derivation of the  laws \eqref{mottino} and \eqref{mottino1} is rather heuristic.  More robust    arguments  have been proposed in the physical  literature (see \cite{AHL,MA,POF,Sa,SE}). We recall   some rigorous results  for Mott random walk.  They hold under general conditions (see the references below for the details).
 We start with $d\geq 2$.
In \cite[Thm.~1]{FSS} and  \cite[Thm.~1.2]{CFP} an invariance principle (respectively annealed and quenched) is stated  for Mott random walk, with asymptotic diffusion matrix $D(\beta)$ admitting a variational characterization \cite[Thm.~2]{FSS}.  In addition,    lower and upper bounds on $D(\b)$ in agreement with Mott law  \eqref{mottino} have been obtained respectively in \cite[Thm.~1]{FSS} and \cite[Thm.~1]{FM}: for $\b$ large
\be \label{silenzioso1}
c_1 \exp \bigl\{ -c_1' \, \beta ^\frac{\a+1}{\a+d +1}\bigr\}\mathbb{I} \leq D(\b)\leq 
c_2 \exp \bigl\{ -c_2' \, \beta ^\frac{\a+1}{\a+d +1}\bigr\}\mathbb{I}\,,
\en
for suitable $\b$--independent positive  constants $c_1,c_1',c_2,c_2'$.
For $d= 1$  annealed and quenched  invariance principles  have been obtained in \cite[Thm.~1.1]{CP1}.  Again $D(\b)$ has a variational characterization and satisfies bounds in agreement with \eqref{mottino1} (see \cite[Thm.~1.2]{CP1}):
\be \label{silenzioso2}
c_1 \exp \bigl\{ -c_1' \, \beta  \bigr\} \leq D(\b)\leq 
c_2 \exp \bigl\{ -c_2' \, \beta \}\,,
\en
for suitable $\b$--independent positive  constants $c_1,c_1',c_2,c_2'$.
By invoking Einstein relation (which has been rigorously proved for $d=1$ in \cite{FGS2}) the bounds in \eqref{silenzioso1} and \eqref{silenzioso2} extend to the mobility matrix defined in terms of linear response.

\smallskip
Similar results for the  conductivity matrix  of the  MA resistor network  were absent.
Our main result (cf. Theorem \ref{teo1})  fills this gap and clarifies the connection between the MA resistor network  and Mott random walk. Indeed, for  ergodic stationary  marked simple point processes  $\{(x_i,E_i)\} $ we prove  that the  limit of the conductivity of the MA resistor network read on enlarging boxes exists  under suitable rescaling (we call this  limit the \emph{infinite volume conductivity matrix}). We also provide a variational characterization of the infinite volume conductivity matrix of the  MA resistor network, implying that 
 it   equals  the asymptotic diffusion matrix $D(\beta)$ of Mott random walk times the mean point density.
 As a consequence we get that   the infinite volume conductivity matrix satisfies \eqref{silenzioso1} for $d\geq 2$ under the assumptions of \cite[Thm.~1]{FSS} and \cite[Thm.~1]{FM} and satisfies 
 \eqref{silenzioso2} for $d=1$ under the assumptions  of \cite[Thm.~1.2]{CP1}.  The matrix $D(\b)$ equals  also the effective homogenized matrix associated to the  rescaled Markov generator of Mott random walk (see \cite[Thm.~1]{Fhom}).
 As a consequence, due to \cite[Thm.~2]{Fhom}, under conditions much weaker than the ones leading to the above quenched/annealed invariance principles, Mott random walk satisfies a weak form of central limit theorem with asymptotic diffusion matrix $D(\b)$.
 Our second main result  is given by the homogenization property of the electrical potential in the MA random resistor network (cf. Theorem \ref{teo2}).  
 
We point out that our results do not restrict to Mott variable range hopping (shortly, v.r.h.), i.e. to the MA random resistor network with conductivities \eqref{condu}. Indeed, our  Theorems \ref{teo1} and \ref{teo2} are stated  for more general MA random resistor networks.
We also stress  that we have followed here the convention  used  in Physics   for the diffusion matrix, which is  given by twice the covariance matrix of $B_1$, where  $(B_t)_{t\geq  0}$ is  the  Brownian motion emerging in the CLT/invariance principle. This explains the  factor  $1/2$
  appearing in  Definition \ref{fasma98} for $D$  and not appearing in \cite[Thm.~2]{FSS}, \cite[Thm.~1.1]{CP1}. This choice has the advantage to identify $D(\b)$ with the effective homogenized matrix (cf. \cite{Fhom}).


 \smallskip
 
We conclude with some  comments on the technical aspects. Our  proofs are based on homogenization with 2-scale convergence (cf. \cite{Z,ZP} and references therein). Thinking of $\o:=\{(x_i,E_i)\}$ as a microscopic picture of the medium   and introducing the scaling parameter $\e>0$, the 2-scale convergence allows  to
explore the  the ergodicity  properties  of the medium (cf. Prop. \ref{prop_ergodico} below) when  averaging on enlarging boxes of size $1/\e$ quantities as $\varphi(\e x_i) g(\t_{x_i} \o)$,    $\t_{x_i} \o$ being the environment viewed from site $x_i$.  Note that, while $\e x_i $ belongs to the macroscopic world, $\t_{x_i} \o$ refers to the microscopic one (hence the presence of 2 scales).

 In \cite{ZP} the authors have proved homogenization  for the Poisson equation $u+ \bbL u=f$ by 2-scale convergence, on $\bbR^d$ and on bounded domains with mixed boundary conditions, $\bbL$ being the generator of a  diffusion in random environments. Analogous results for Mott random walk  on $\bbR^d$, but not on bounded domains,  have been obtained in \cite{Fhom}. In  \cite[Section 7]{ZP}  the above  results of \cite{ZP} on bounded domains have been applied   to get that the effective homogenized  matrix  $D$ equals the infinite volume conductivity in a  model related to percolation, under the  a priori check  that $D>0$.  
 To get Theorem \ref{teo1} one could  have also thought  to adapt the strategy  
 developed for diffusions with random coefficients  in  \cite{BP} to difference operators by using the results of \cite{PR},  but \cite{PR} requires ellipticity assumptions (which are not valid in Mott v.r.h.).
    We have developed here a direct proof based on 2-scale homogenization, which  avoids both the  a priori check that $D>0$ and  elliptic assumptions.     Our proof of Theorem \ref{teo1} and \ref{teo2}  is very general and can be applied as well to other resistor networks, as e.g. the  conductance model \cite{Bi}  without any ellipticity assumption. For the conductance model the identification between the asymptotic diffusion matrix and  the conductivity matrix had already been derived under ellipticity  assumptions (see \cite{BSW} and references therein).  We stress that, in addition to the lack of ellipticity, Mott v.r.h. presents further technical difficulties due to long  jumps and the absence of an underlying lattice structure (thus leading to the concept of amorphous gradients), not present in the above models.
      
    We conclude mentioning that other rigorous results 
   on the  Miller--Abrahams random resistor network 
have been recently obtained in \cite{FAMH1} and \cite{FAMH2}, where percolation properties of the subnetwork given by filaments with lower bounded conductances  have been analyzed.
By means of  these results and the present Theorem \ref{teo1}, 
in a forthcoming paper  we will show for $d\geq 2$  that one can go beyond the bounds \eqref{silenzioso1} and get the asymptotics of the infinite volume conductivity for $\b$ large.

%
\medskip
\noindent
{\bf Outline of the paper}. In the rest  we remove the inverse temperature $\b$ from the notation. In Section \ref{MM} we introduce the model and state our main results (cf.  Theorem \ref{teo1} and  Theorem \ref{teo2} for $D_{1,1}>0$). In Section \ref{eff_equation} we analyze the effective diffusive equation. In Section \ref{GS} we recall  basic facts on marked simple point processes and their Palm distribution. In Section \ref{sec_hilbert} we introduce the proper Hilbert space to  analyze the electrical potential. In Section \ref{renato_zero} we prove Theorem \ref{teo1} when $D_{1,1}=0$. In Section \ref{sec_review} we consider the space of square integrable forms. In Section \ref{sec_tipetto} we define the family of typical environments. In section \ref{anatre12} we recall the definitions of several types of convergence (including the weak 2-scale convergence). Sections \ref{sec_bike} and \ref{limit_points} are devoted to the weak 2-scale limit  points of the electrical potential and its gradient. Finally, in Section \ref{limitone} we conclude the proof of Theorems \ref{teo1} and \ref{teo2} when $D_{1,1}>0$. We collect some minor results in Appendix \ref{ailo}.

\section{Model and main results}\label{MM}
We 
 denote by  
   $\O$ the space of  locally finite subsets $\o \subset \bbR^d\times \bbR$ such that for each $x\in \bbR^d$ there exists at most one element $E\in \bbR$ with $(x,E)\in \o$.    
We write a generic element  $\o \in \O$ as 
 $\o= \{ ( x_i , E_i)\}$ ($E_i$ is called the  \emph{mark} at the point  $x_i$) and we set  $\hat\o :=\{x_i\}$.  
 We will identify the sets $\o= \{ ( x_i , E_i)\}$ and $\hat\o =\{x_i\}$ 
 with the counting measures $\sum _i  \d_{(x_i,E_i)}$ and $\sum_i \d_{x_i}$, respectively.  
 On $\O$ one defines in a standard way a  metric   such that  the 
  $\s$--algebra $\cB(\O)$ of  Borel sets   is generated by the sets $\{ \o (A)= k\}$, with $A$ and  $k$ varying respectively among the Borel sets of $\bbR^d\times \bbR$  and the nonnegative integers \cite{DV}.

We consider  a 
 \emph{marked simple point process}, which  is a measurable function from a probability space to  the measurable space $\left(\O,\cB(\O)\right)$. 
 We denote by $\cP$  its law  and by $\bbE[\cdot]$ the associated expectation. $\cP$ is therefore a probability measure on $\O$.  We assume that $\cP$ is stationary and ergodic w.r.t. translations. 
 More precisely, given $x\in \bbR^d$ we define the translation $\t_x:\O \to \O$ as 
\[ \t_x \o:= \{ ( x_i -x, E_i)\}
\; \text{ if  }\;\o= \{ ( x_i , E_i)\}\,.
\]
 Then  stationarity means  that $\cP (\t_x A)=A$ for any $A\in \cB( \O)$, while ergodicity means that  $\cP(A)\in\{0,1\}$
 for any  $A\in \cB( \O)$ such that 
 $\t_x A=A$ for all $x\in \bbR^d$. 
 Due  to our  assumptions stated below, $\cP$ 
 has  finite positive  intensity $m$, i.e. 
  \begin{equation}\label{mom_palma0}
m:= \bbE\bigl[ \hat \o \bigl( [0,1]^d\bigr)\bigr]\in (0, +\infty)\,.
\end{equation}
   As a consequence,  the Palm distribution $\cP_0$ associated to $\cP$ is well defined \cite[Chp.~12]{DV}.  Roughly, $\cP_0$ can be thought as $\cP$ conditioned to the event $\O_0$, where    \be 
    \O_0:=\{ \o \in \O\,:\, 0 \in \hat \o\}\,.
   \en
  We will provide more details on $\cP$ and $\cP_0$ in  Section \ref{GS}.
Below, we write $\bbE_0[\cdot]$ for the expectation w.r.t. $\cP_0$.

\medskip
 
In addition to the marked simple point process with law $\cP$ we  fix  a nonnegative  Borel function
\[
\bbR^d\times \bbR^d \times \O \ni (x,y,\o) \mapsto c_{x,y} (\o)\in [0,+\infty)
\]
such that $c_{x,x}(\o)=0 $ for all $x\in \bbR^d$.
The value of $ c_{x,y} (\o)$ will be relevant only when $x,y \in \hat \o$.
For later use we define
    \begin{equation}\label{organic}
  \l_k(\o):=\int _{\bbR^d} d\hat \o (x) c_{0,x}(\o)|x|^k\,,
 \end{equation}
 where   $|x|$ denotes the  norm of $x\in \bbR^d$.

\begin{Definition}\label{fasma98}  We define the effective diffusion matrix $D$ as 
the $d\times d$ nonnegative symmetric matrix such that
 \begin{equation}\label{def_D}
 a \cdot Da =\inf _{ f\in L^\infty(\cP_0) } \frac{1}{2}\int d\cP_0(\o)\int d\hat \o (x) c_{0,x}(\o) \left
 (a\cdot x - \nabla f (\o, x) 
\right)^2\,,
 \end{equation}
 where $\nabla f (\o, x) := f(\t_x \o) - f(\o)$.
\end{Definition}
Above, and in what follows, we will denote by $a\cdot b$ the scalar product of the vectors $a$ and $b$.

\medskip

{\bf Assumptions.} 
We make the following assumptions:
\begin{itemize}
\item[(A1)] the law $\cP$ of the marked simple point process is stationary and ergodic w.r.t. spatial translations;
 \item[(A2)] $\cP$ has finite positive intensity as  stated in \eqref{mom_palma0};
\item[(A3)] $\cP ( \o \in \O:  \t_x\o\not = \t_y \o   \; \forall x\not =y \text{ in }\hat \o )=1$;
\item[(A4)] the weights $c_{x,y}(\o)$ are symmetric and covariant, i.e. $c_{x,y}(\o)=c_{y,x}(\o)$  $\forall x,y\in \hat\o$   and 
$
 c_{x,y}(\o) = c_{x-a, y-a}( \t_a \o)$  $\forall x,y \in \hat \o$ and $\forall a \in \bbR^d$;
\item[(A5)]   $\l_0, \l_2 \in L^1(\cP_0)$; 
 \item[(A6)]  for some $\a\in (0,1)$ it holds 
 \begin{align}
& \bbE_0\bigl[ \int d\hat \o (z) c_{0,z} (\o)^{\a}\bigr ] <+\infty\,, \label{sorpresina}\\
& \bbE_0\bigl[ \int d\hat \o (z) c_{0,z} (\o)^{\a}|z|^2 \bigr ] <+\infty\,, \label{zarina}\\
& \limsup  _{\ell  \to +\infty} \ell ^2 \sup_{\o \in \O_0}\sup_{z \in \hat \o: |z| \geq \ell} c_{0,z}(\o) ^{1-\a}   <+\infty 
\,; \label{downtown}
\end{align}
\item[(A7)]  $c_{x,y}(\o) >0$ for all $x,y\in \hat \o$.
\end{itemize}

We discuss the above assumptions at the end of this section.
\begin{Warning}\label{stellina59}
Since $D$ is a symmetric matrix, at cost of an orthonormal change of coordinates and  without loss of generality,
 we will  suppose that $D$ is diagonal.  In other words, our results refer to the principal directions of $D$.  Note that $a\in \bbR^d\setminus \{0\}$ is eigenvector of eigenvalue zero if $a\cdot D a =0$.
 \end{Warning}

In the rest, $\ell$ will be a positive number.
We consider the stripe   $S_\ell:=\bbR\times (-\ell/2,\ell /2)^{d-1}$ and the   box $\L_\ell:=(-\ell/2,\ell/2)^d$. We consider the  $\ell$--parametrized resistor network ${\rm (RN)}^\o_\ell$  on $S_\ell$  with  electrical filaments  defined as follows. To each unordered pair  $\{x,y\}$, such that  $x \in  \hat \o \cap \L_\ell$ and $y \in \hat \o \cap S_\ell$, 
we associate an electrical filament of conductivity $c_{x,y}(\o)$. We can think  of${\rm (RN)}^\o_\ell$ as a weighted unoriented graph  
with vertex set $ \hat \o \cap S_\ell$, edge set 
\be\label{latticini}
\bbB^\o_\ell:=\bigl\{\{x,y\}\,:\, x \in  \hat \o \cap \L_\ell\,,\; y \in \hat \o \cap S_\ell\,,\; x\not =y
\bigr\}
\en  and weight  of the edge $\{x,y\}$ given by the conductivity $c_{x,y}(\o)$, see Figure \ref{messicano1}.

\begin{figure}
\includegraphics[scale=0.5]{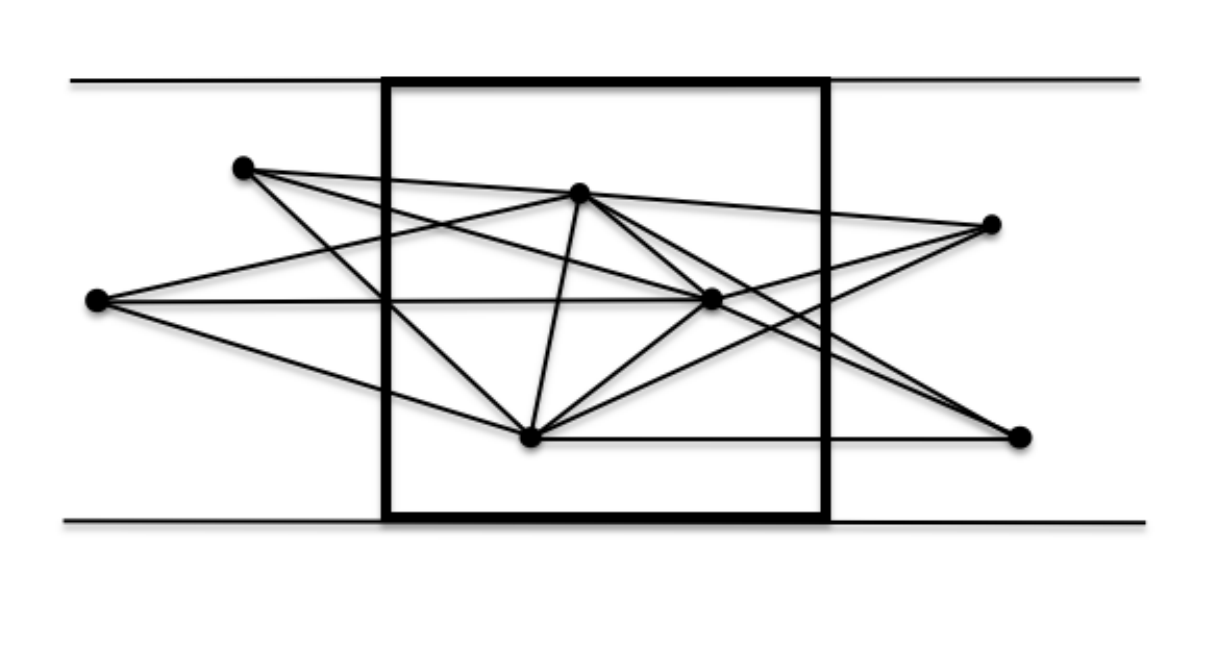}
\caption{A portion of the resistor network ${\rm (RN)}^\o_\ell$. The box and the stripe correspond to  $\L_\ell$ and   $S_\ell$, respectively. }\label{messicano1}
\end{figure}

Since the marked  simple point process is stationary and ergodic with positive intensity and $\bbE_0[\l_0]<+\infty$,  it is simple to prove that  there exists a translation invariant  Borel set $\O'\subset \O$  with   $\cP(\O')=1$ such that, for all  $\o\in \O'$ and  for all  $\ell \geq \ell_0(\o)$, it holds 
\begin{equation}
\label{pienezza}
\begin{split}
&\hat \o \cap \L_\ell \not = \emptyset \,,
\\
& \{ x\in  \hat \o\cap S_\ell \,:\, x_1\leq -\ell/2\} \not =\emptyset  \, ,\\
 & \{ x\in  \hat \o \cap S_\ell  \,:\, x_1\geq  \ell/2\} \not= \emptyset  \,,\\
 & {\sum} _{y\in \hat \o \cap S_\ell} c_{x,y}(\o) <+\infty \;\;\; \forall x \in \hat \o \cap\L_\ell\,.
 \end{split}
  \end{equation}
Indeed,  it is enough to apply Proposition~\ref{prop_ergodico} in Section \ref{GS} with suitable test functions $\varphi$, to bound the series in \eqref{pienezza} by ${\sum} _{y\in \hat \o} c_{x,y}(\o)  =\l_0(\t_x \o)$   and use that $\bbE_0[\l_0]<+\infty$.
\bigskip
\begin{Definition}[Electrical potential]  Suppose that $\o, \ell$ satisfy \eqref{pienezza}.  Then we denote by   $V_\ell^\o$  the \emph{electrical potential} of the resistor network (RN)$^\o_\ell$ with values $0$ and $1$ on $ \{ x\in \hat \o\cap S_\ell  \,:\, x_1\leq -\ell/2\}    $ and $\{ x\in \hat \o \cap S_\ell\,:\, x_1\geq  \ell/2\} $, respectively. In particular, $V_\ell^\o$ is the unique function $V_\ell^\o: \hat \o \cap S_\ell\to \bbR$ such that 
\be\label{thenero}
{\sum} _{y\in \hat \o \cap S_\ell} c_{x,y}(\o) \left(  V^\o _\ell (y)-  V^\o _\ell (x)\right) =0 \qquad \forall 
x\in \hat \o \cap \L_\ell\,,
\en
and satisfying the boundary conditions
\be\label{thebianco}
\begin{cases}
 V^\o_\ell (x) = 0  & \text{ if } x \in \hat \o \cap S_\ell, \;x_1\leq -\ell/2\,,\\
 V^\o_\ell (x) = 1  & \text{ if } x \in \hat \o \cap S_\ell, \;x_1\geq  +\ell/2\,.
\end{cases}
\en
\end{Definition}
As discussed in Section \ref{sec_hilbert}, the above  electrical  potential exists and is unique (here we use (A7)) and has values in $[0,1]$.
We recall that, given $(x,y)$ with $\{x,y\}\in \bbB^\o_\ell$ (cf. \eqref{latticini}),  
\be\label{ahahah} i_{x,y}(\o):=c_{x,y}(\o) \bigl( V^\o_\ell(y)- V^\o _\ell (x) \bigr)
\en is the current flowing from $x$ to $y$ under the electrical potential $V^\o_\ell$. For simplicity we have dropped the dependence on $\ell$ in the notation $i_{x,y}(\o)$.
\bigskip

\begin{Definition}[Effective conductivity] Suppose that $\o, \ell$ satisfy \eqref{pienezza}. 
We call $\s_\ell(\o)$ the \emph{effective conductivity} of the resistor network  $({\rm RN})^\o_\ell$ along the first direction under the electrical potential $V^\o_\ell$.  More precisely,  $\s_\ell (\o)$ is given by
 \be\label{ide1}
\begin{split}
 \s_\ell(\o)&: =  \sum _{\substack{x\in \hat \o \cap S_\ell:\\
 x_1\leq -\ell/2}} \; \sum_{y \in  \hat \o \cap \L_\ell} i_{x,y}(\o) =  \sum _{\substack{x\in \hat \o \cap S_\ell:\\
 x_1\leq -\ell/2}}\; \sum_{y \in  \hat \o \cap \L_\ell} c_{x,y} (\o) \bigl(V_\ell^\o (y)-V_\ell^\o(x)\bigr)\,.
\end{split}
 \en
\end{Definition} 

We recall two equivalent characterizations of the conductivity $\s_\ell(\o)$ (cf. Appendix \ref{ailo}).  For any $\g \in [-\ell/2, \ell/2)$, 
 $\s_\ell(\o)$ equals the current flowing through the hyperplane $\{x\in \bbR^d\,:\, x_1=\g\}$:
\begin{equation}\label{eq_ailo0}
\s_\ell(\o)=\sum _{\substack{x\in \hat \o \cap S_\ell:\\
 x_1\leq \g }} \; \sum_{\substack{ y \in  \hat \o \cap S_\ell:\\
 \{x,y\} \in \bbB^\o_\ell, \, y_1>\g
 }} i_{x,y}(\o) \,.
 \end{equation}
 Note that \eqref{ide1} corresponds to \eqref{eq_ailo0}  with $\g=-\ell/2$.
$\s_\ell(\o)$ also satisfies the identity
\begin{equation}\label{eq_ailo}
\s_\ell(\o)= \sum_{ \{x,y\} \in \bbB^\o_\ell } c_{x,y}(\o) \bigl( V^\o _\ell (x) - V^\o _\ell (y) \bigr)^2\,.
\end{equation}

We can now state our first main result concerning the infinite volume asymptotics of $\s_\ell (\o)$:
\begin{TheoremA}\label{teo1} For $\cP$--a.a. $\o$ it holds 
\be
\lim _{\ell \to +\infty} \ell^{2-d} \s_\ell(\o)= m D_{1,1}\,.
\en
\end{TheoremA}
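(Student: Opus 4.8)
The plan is to realise $\s_\ell(\o)$ as a discrete homogenized Dirichlet energy and to compute its limit by $2$-scale methods. Set $\e:=1/\ell$ and read the electrical potential $V^\o_\ell$ on the rescaled configuration $\e\hat\o$, so that the box $\L_\ell$ becomes the macroscopic unit cube $Q:=(-1/2,1/2)^d$ and the stripe becomes $\bbR\times(-1/2,1/2)^{d-1}$, with Dirichlet data $0$ on $\{y_1\le-1/2\}$ and $1$ on $\{y_1\ge1/2\}$. Since $V^\o_\ell$ rises from $0$ to $1$ across a stripe of width $\ell$ its discrete increments are of order $\e$, and there are of order $\ell^d$ nodes in play; by the energy identity \eqref{eq_ailo} this explains the normalisation $\ell^{2-d}$ and indicates that $\ell^{2-d}\s_\ell(\o)$ should converge to $m\int_Q\nabla u_0\cdot D\nabla u_0\,dy$, where $u_0$ is the limiting macroscopic potential. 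By the analysis of the effective equation (Section \ref{eff_equation}), $u_0$ is the unique weak solution of $\mathrm{div}(D\nabla u_0)=0$ in $Q$ with $u_0=0$, $u_0=1$ on the faces $\{y_1=\mp1/2\}$ and no flux on the lateral boundary; since $D$ is diagonal (Warning \ref{stellina59}), when $D_{1,1}>0$ this forces $u_0(y)=y_1+\tfrac12$, so that $\int_Q\nabla u_0\cdot D\nabla u_0\,dy=D_{1,1}$ (recall $|Q|=1$) and the target value becomes $m\,D_{1,1}$. Accordingly I would split the proof into the degenerate case $D_{1,1}=0$ and the nondegenerate case $D_{1,1}>0$, mirroring the degeneracy of the effective equation.

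\emph{Upper bound (both cases).} I would use the Dirichlet principle $\s_\ell(\o)\le\sum_{\{x,y\}\in\bbB^\o_\ell}c_{x,y}(\o)\bigl(w(x)-w(y)\bigr)^2$ over $w$ satisfying \eqref{thebianco}. Given $\d>0$, pick $f\in L^\infty(\cP_0)$ that is $\d$-optimal in \eqref{def_D} for $a=e_1$ and set $w(x):=\phi(\e x)+\e\,\psi_\eta(\e x)\,f(\t_x\o)$, where $\phi$ is a fixed smooth profile on the stripe, equal to $y_1+\tfrac12$ on $\overline Q$ and carrying the prescribed boundary values, and $\psi_\eta$ is a macroscopic cutoff equal to $1$ outside the $\eta$-neighbourhood of $\partial Q$ and vanishing near $\partial Q$, so that $w$ indeed satisfies \eqref{thebianco}. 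Expanding the increments of $w$ and using $\nabla\phi\equiv e_1$ on $Q$, the leading contribution to $\ell^{2-d}\s_\ell(\o)$ is $\tfrac12$ times a Birkhoff-type average over the nodes of $\L_\ell$ of $\o\mapsto\int d\hat\o(z)\,c_{0,z}(\o)\bigl(e_1\cdot z-\nabla f(\o,z)\bigr)^2$, which converges $\cP$-a.s., by Proposition \ref{prop_ergodico} and $\l_0,\l_2\in L^1(\cP_0)$, to $\tfrac{m}{2}\,\bbE_0\bigl[\int d\hat\o(z)\,c_{0,z}(\o)\bigl(e_1\cdot z-\nabla f(\o,z)\bigr)^2\bigr]\le m\,D_{1,1}+m\d$; the cross terms and the errors produced by $\nabla\psi_\eta$, by long edges, and by the boundary layer of width $\eta$ are $O(\eta)+o_\ell(1)$, controlled via (A5) and the moment/tail bounds \eqref{sorpresina}--\eqref{downtown} (in particular \eqref{downtown} kills the long edges crossing $\partial\L_\ell$). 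Letting first $\ell\to\infty$, then $\eta\downarrow0$, then $\d\downarrow0$ yields $\limsup_\ell\ell^{2-d}\s_\ell(\o)\le m\,D_{1,1}$ for $\cP$-a.a.\ $\o$. Since $\s_\ell(\o)\ge0$, this already proves the theorem when $D_{1,1}=0$.

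\emph{Lower bound (case $D_{1,1}>0$).} Here one works with the genuine potential $V^\o_\ell$ in the Hilbert space of Section \ref{sec_hilbert}. From the a priori bound $\ell^{2-d}\s_\ell(\o)\le C$ furnished by the upper bound and from $0\le V^\o_\ell\le1$, one extracts weak $2$-scale limit points of the rescaled potential and of its rescaled gradient (Sections \ref{anatre12}, \ref{sec_bike}, \ref{limit_points}). Passing to the $2$-scale limit in the discrete harmonicity relation \eqref{thenero} identifies any such limit: its macroscopic part is the $u_0$ above, and its microscopic part is a square-integrable potential form $v$ (Section \ref{sec_review}), so that $\nabla V^\o_\ell$ $2$-scale converges to $\nabla u_0(y)\cdot z+v(y,\o,z)$. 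Then, by weak lower semicontinuity of the quadratic energy along $2$-scale convergence and the variational formula \eqref{def_D} extended from $L^\infty$ correctors to potential forms by density,
\[
\liminf_\ell\ell^{2-d}\s_\ell(\o)\ \ge\ \frac{m}{2}\int_Q dy\!\int d\cP_0(\o')\!\int d\hat\o'(z)\,c_{0,z}(\o')\bigl(\nabla u_0(y)\cdot z+v(y,\o',z)\bigr)^2\ \ge\ m\!\int_Q\nabla u_0\cdot D\nabla u_0\,dy=m\,D_{1,1}\,,
\]
and together with the upper bound this gives $\lim_\ell\ell^{2-d}\s_\ell(\o)=m\,D_{1,1}$ for $\cP$-a.a.\ $\o$.

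I expect the hard part to be the lower bound. Without any ellipticity the only quantitative a priori control is the energy itself, so the compactness of the rescaled potential and, above all, the meaning of the $2$-scale limit of its gradient together with the identification of that limit via \eqref{thenero} must be set up with care; moreover the long jumps of the Miller--Abrahams network --- edges joining far-apart points, and in particular edges crossing the lateral boundary of $S_\ell$ or entering the Dirichlet boundary layer --- have to be shown asymptotically negligible, which is exactly the purpose of assumptions (A5)--(A6), especially the tail bound \eqref{downtown}. A further technical point, needed so that the lower bound produced by $2$-scale limits matches the upper bound produced by bounded correctors, is a density argument establishing that the infimum in \eqref{def_D} over $L^\infty(\cP_0)$ coincides with the infimum over all square-integrable potential forms.
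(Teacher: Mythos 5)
Your proposal is correct in outline, and every ingredient you invoke is available in the paper, but for the main case $D_{1,1}>0$ you close the argument by a genuinely different route. You prove a two-sided bound: an upper bound $\limsup_\ell \ell^{2-d}\s_\ell(\o)\le mD_{1,1}$ from the Dirichlet principle (Lemma \ref{benposto}--(iii)) with the corrector test function $\phi+\e\psi_\eta f$ — this is exactly how the paper treats $D_{1,1}=0$ in Section \ref{renato_zero}, where the target is $0$ so the ordered-versus-unordered edge bookkeeping is harmless; in the general case you must keep the factor $\tfrac12$ in front of the ordered double sum, as you do, to land on $mD_{1,1}+m\d$ rather than $2mD_{1,1}+2m\d$ — and a lower bound via weak $2$-scale compactness, the structure theorems for the limit points (Propositions \ref{prova2}, \ref{oro}, \ref{diamanti}), lower semicontinuity of the quadratic energy under weak $2$-scale convergence, and the variational formula \eqref{giallo} applied pointwise in the macroscopic variable. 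The paper instead computes the limit exactly: by harmonicity of $V_\e$, $\la\nabla_\e V_\e,\nabla_\e V_\e\ra_{L^2(\nu^\e_{\o,\L})}=\la\nabla_\e\psi,\nabla_\e V_\e\ra_{L^2(\nu^\e_{\o,\L})}$ (cf.\ \eqref{antiguaz}), a functional \emph{linear} in $\nabla_\e V_\e$ which therefore passes directly to the weak $2$-scale limit and yields $\tfrac12\int_\L m\int d\nu\, z_1 w\,dx=mD_{1,1}$ via \eqref{mattacchione}, once the macroscopic limit is identified with $\psi$. What each buys: the paper's route needs neither a separate upper bound nor any lower semicontinuity, but does require the full identification $v=\psi$ through Proposition \ref{prova1} and uniqueness for the effective equation (identification it needs anyway for Theorem \ref{teo2}); your route needs only $u_0\in K$ together with the potential-form structure of the microscopic corrector for the lower bound, at the price of a matching upper bound and of the density argument replacing $L^\infty(\cP_0)$ correctors by $L^2_{\rm pot}(\nu)$ in \eqref{def_D}, which is precisely \eqref{giallo}. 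The delicate points you flag — attainment of the Dirichlet data in the limit, long edges, absence of ellipticity — are exactly those handled by Proposition \ref{diamanti}, Lemma \ref{lunghissimo} and assumption \eqref{downtown}.
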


To clarify the link with homogenization and state our further results, it is convenient 
to rescale space in order to deal with fixed stripe and box. More precisely, we set $\e:= 1/\ell$.
Then $\e>0$ is our scaling parameter. We set
\be\label{serpente}
\begin{cases}
S:=\bbR\times (-1/2,1/2)^{d-1}\,, &   \L:=(-1/2,1/2)^d\,,\\
S_-:= \{x\in S: x_1 \leq -1/2\} \,, &S_+:=\{x\in S: x_1 \geq 1/2\}\,.
\end{cases}
\en
We write  $V _\e: \e \hat \o \cap S \to [0,1] $ for the function given by $V_\e (\e x):= V_\ell ^\o (x)$ (note that the dependence on $\o$ in $V_\e$ is understood, as for other objects below). 

We introduce  the atomic measures 
\be\label{atomiche}
\mu^\e_{\o,\L}:= \e^d \sum _{x \in \e\hat \o  \cap \L} \d_x\,,\qquad 
 \nu ^\e_{\o,\L } : = 
  \sum_{(x,y)\in \cE_\e} \e^d  c_{x/\e,y/\e }(\o) \d_{( x ,(y-x)/\e)}\,,
\en
where   $\cE_\e$ is  the set  of  pairs  $(x,y)$ such that $x\not = y$ are in $\e \hat \o \cap S$ and $\{x,y\}$ intersect $\L$. Equivalently,  $\cE_\e:=\{ (\e x, \e y )\,:\, \{x,y\}\in \bbB_\ell^\o\}$.

Given a function $f: \e \hat \o \cap S \to \bbR$, we define the \emph{amorphous gradient} $\nabla_\e f$ on pairs $(x,z)$ with  $x\in \e \hat \o\cap S$ 
and $x+ \e z\in \e \hat\o\cap S$ as 
\begin{equation}\label{ricola}
 \nabla_\e f(x,z)= \frac{ f(x+\e z)- f(x)  }{\e}\,.
 \end{equation}
Moreover,  we define the operator
\be\label{degregori} \bbL^\e_\o f( x):= \e^{-2} \sum _{y  \in \e\hat \o \cap S }  c_{x/\e,y/\e}\left[ f( y) - f( x)\right]\,, \qquad x\in \e\hat \o  \cap \L\,,
\en
whenever  the series in the r.h.s. is  absolutely convergent. 

 Since $\bbE_0[\l_0]<+\infty$, we have $\cP_0(\l_0<\infty)=1$. By Lemma \ref{matteo} in Section \ref{GS} it  follows that $\cP(\O_1)=1$, where $\O_1$ is the 
translation invariant Borel set 
\be\label{bianchetto}
\O_1:=\{\o \in \O\,:\, \l_0 (\t_x \o) <+\infty \;\forall x\in \hat\o\}\cap \O'
\en
(see \eqref{pienezza} for the definition of $\O'$).
Let $\o \in \O_1$ and let  $f: \e \hat \o \cap S\to \bbR$  be a bounded function. 
Since $ \l_0 (\t_x \o)=\sum_{y\in \hat \o} c_{x,y}(\o)$, 
  $\bbL^\e_\o f(x) $ is well defined for all $x\in 
  \e \hat \o\cap \L$ and   the measure  $\nu^\e_{\o,\L}$ has finite mass ($\mu^\e_{\o,\L}$ has always finite mass as $\hat \o$ is locally finite).  As the amorphous gradient $\nabla_\e f$ is bounded too, we have that $\nabla_\e f\in 
 L^2(\nu^\e_{\o,\L})$. Moreover,  if in addition $f$ is zero outside $\L$,  it holds (cf. Lemma \ref{spiaggia})
\be\label{dir_form}
\la f, -\bbL^\e_\o f \ra _{L^2(\mu ^\e_{\o,\L})}
=
\frac{1}{2}\la \nabla _\e f, \nabla_\e f\ra _{L^2(\nu^\e_{\o,\L})}<+\infty \,.
\en
\begin{Definition}\label{def_hilbert} Given $\o \in \O_1$
we define  the Hilbert space 
\be
H^{1,\e}_{0,\o} :=\left\{ f :\e \hat \o \cap S \to \bbR \text{ s.t. } f(x)=0 \; \forall x  \in  \e \hat \o \cap (S_-\cup S_+) \right\}
\en endowed 
with norm
$\|f\|_{H^{1,\e}_{0,\o}}= \|f \| _{ L^2( \mu ^\e_{\o,\L})}+\| \nabla_\e  f \|_{L^2(  \nu ^\e_{\o,\L})}
$. In addition, we set $K^{\e}_{\o} :=H^{1,\e}_{0,\o}+ \psi $, where 
$\psi:S\to [0,1]$ is the function
\be\label{def_psi}
\psi(x):=
\begin{cases}
x_1+\frac{1}{2} & \text{ if } x\in \L\,,\\
0 & \text{ if }x\in S_-\,,\\
1 & \text{ if } x\in S_+\,.
\end{cases}
\en
\end{Definition}
Note that  $K^{\e}_\o$ is given by the functions  $ f:\e \hat \o \cap S \to \bbR$ such that 
 $f(x)=0$  for all $ x  \in  \e \hat \o \cap S_- $ and $f(x)=1$ 
 for all $ x  \in  \e \hat \o \cap S_+ $.  

Given    $\o \in \O_1$, in  Section \ref{sec_hilbert} we will derive that,  due to \eqref{thenero} and \eqref{thebianco},   $V_\e$ is the  unique function in $ K^\e _\o$ such that $\bbL^\e_\o V_\e (x)=0$ for all $x \in \e\hat \o \cap \L$ (cf. Lemma \ref{benposto}). We point out that, by \eqref{eq_ailo} and \eqref{dir_form},  the rescaled conductivity $\ell^{2-d} \s_\ell(\o)$  equals the flow energy associated to $V_\e$:
\be\label{chiave}
\ell^{2-d} \s_\ell(\o)= \la V_\e, -\bbL^\e_\o V_\e \ra_{L^2(\mu^\e_{\o,\L} )}=
 \frac{1}{2} \la   \nabla _\e V_\e, \nabla_\e V_\e \ra _{L^2(\nu ^\e_{\o,\L})} \,.
\en
Theorem \ref{teo1} can therefore  be restated as
\begin{equation}\label{mortisia} 
 \lim_{\e \da 0} \frac{1}{2} \la   \nabla _\e V_\e, \nabla_\e V_\e \ra _{L^2(\nu ^\e_{\o,\L})}= m D_{1,1}\la \nabla \psi, \nabla \psi\ra_{L^2(\L,dx)}=m D_{1,1}\,,\;\;\; \cP\text{-a.s.\,.}
 \end{equation}
 Note that the second identity in \eqref{mortisia} is immediate as $\nabla \psi = e_1$.
To prove Theorem \ref{teo1} we distinguish the cases $D_{1,1}=0$ and  $D_{1,1}>0$. The proof for $D_{1,1}=0$  (which is simpler) is given in Section \ref{renato_zero}, while  the proof for $D_{1,1}>0$ will take the rest of our investigation and will be concluded in Section \ref{limitone}.
In the case $D_{1,1}>0$ we can say more on the behavior of $V_\e$:
\begin{TheoremA}\label{teo2}
Suppose that $D_{1,1}>0$. Then 
there exists a translation invariant Borel set ${\O}_{\rm typ}$ of typical environments  with $ 
{\O}_{\rm typ} \subset  \O_1 $ and $\cP({\O}_{\rm typ})=1$, such that for any $\o \in {\O}_{\rm typ}$ 
\eqref{mortisia} holds, 
 $V_\e\in L^2(\mu^\e_{\o,\L}) $  converges  weakly and 2-scale converges weakly    to $\psi \in L^2(\L, dx)$.\end{TheoremA}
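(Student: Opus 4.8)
\emph{Overall plan.} I would argue by 2-scale homogenization, obtaining \eqref{mortisia} (which is Theorem~\ref{teo1} in the case $D_{1,1}>0$) and the strong convergence simultaneously, working on a common translation invariant Borel set $\O_{\rm typ}\subset\O_1$ with $\cP(\O_{\rm typ})=1$ on which the spatial ergodic theorem (Prop.~\ref{prop_ergodico}), the Palm identities of Section~\ref{GS}, the 2-scale compactness/structure results (Sections~\ref{sec_tipetto}--\ref{limit_points}) and the Poincar\'e inequality on the rescaled stripe (Section~\ref{sec_hilbert}) all hold. First I would establish an a priori energy bound: by Lemma~\ref{benposto} and \eqref{chiave}, $V_\e$ minimises $f\mapsto\tfrac12\la\nabla_\e f,\nabla_\e f\ra_{L^2(\nu^\e_{\o,\L})}$ over $K^\e_\o$, so testing against $\psi|_{\e\hat\o\cap S}\in K^\e_\o$ and using $|\psi(y)-\psi(x)|\le|y-x|$ bounds the energy by $\tfrac12\int|z|^2\,d\nu^\e_{\o,\L}$, which stays bounded as $\e\downarrow0$ by the ergodic theorem and (A5). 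Combined with $0\le V_\e\le1$, this yields (along subsequences) weak 2-scale limits $V_\e\rightharpoonup v_0$ and $\nabla_\e V_\e\rightharpoonup w$, and the structure theorem for families of bounded amorphous energy forces $v_0=v_0(x)$ to be independent of the environment, $v_0\in H^1$ with $v_0=0$ on $S_-$ and $v_0=1$ on $S_+$ (the Dirichlet data of $V_\e$ pass to the limit), and $w(x,\bar\o,z)=\nabla v_0(x)\cdot z+v_1(x,\bar\o,z)$ where, for a.e.\ $x$, $v_1(x,\cdot,\cdot)$ is a \emph{potential form}, i.e.\ a limit of gradients $\nabla g$, $g\in L^\infty(\cP_0)$, in the $c_{0,z}\,d\cP_0(\bar\o)\,d\hat{\bar\o}(z)$-weighted $L^2$.

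\emph{Identification of $v_0$.} Next I would pass to the 2-scale limit in the equation $\bbL^\e_\o V_\e=0$, tested via \eqref{dir_form} against oscillating test functions $x\mapsto\phi(x)+\e\,g(\t_{x/\e}\o)$, $\phi\in C^\infty_c(\L)$, $g\in L^\infty(\cP_0)$ (and $x$-modulated versions of $g$). This yields the two-scale weak formulation
\[
\int_\L\int\int c_{0,z}(\bar\o)\,w(x,\bar\o,z)\,\bigl(\nabla\phi(x)\cdot z+\nabla g(\bar\o,z)\bigr)\,d\hat{\bar\o}(z)\,d\cP_0(\bar\o)\,dx=0 .
\]
Here and below $\int\int c_{0,z}(\cdots)$ abbreviates $\int d\cP_0(\bar\o)\int d\hat{\bar\o}(z)\,c_{0,z}(\bar\o)(\cdots)$, as in \eqref{def_D}. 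Taking $\phi\equiv0$ shows $w(x,\cdot,\cdot)$ is $c_{0,z}$-orthogonal to all potential forms; since also $w(x,\cdot,\cdot)-\nabla v_0(x)\cdot z$ is a potential form, this is precisely the Euler--Lagrange identity for the minimiser in \eqref{def_D} with $a=\nabla v_0(x)$, whence $\tfrac12\int\int c_{0,z}\,w(x,\cdot)^2=\nabla v_0(x)\cdot D\nabla v_0(x)$ and $\int\int c_{0,z}\,w(x,\cdot)\,z=2D\nabla v_0(x)$. Taking $g\equiv0$ then gives $\int_\L D\nabla v_0\cdot\nabla\phi\,dx=0$ for all $\phi\in C^\infty_c(\L)$, so $v_0$ solves $\ddiv(D\nabla v_0)=0$ in $\L$ with $v_0=0$ on $\{x_1=-1/2\}$, $v_0=1$ on $\{x_1=1/2\}$ and the natural no-flux condition on the remaining part of $\partial\L$. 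Since $D$ is diagonal (Warning~\ref{stellina59}), $\psi$ solves exactly this problem, and by its uniqueness (Section~\ref{eff_equation}, which uses $D_{1,1}>0$) one gets $v_0=\psi$. In particular $w$ is independent of $x$ and uniquely determined, so the whole families $V_\e$ and $\nabla_\e V_\e$ converge weakly 2-scale.

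\emph{Energy and strong $L^2$ convergence.} Weak 2-scale lower semicontinuity of the energy gives $\liminf_\e\tfrac12\la\nabla_\e V_\e,\nabla_\e V_\e\ra_{L^2(\nu^\e_{\o,\L})}\ge m\int_\L\nabla\psi\cdot D\nabla\psi\,dx=mD_{1,1}$. For the matching upper bound, fix $\delta>0$, pick $f_\delta\in L^\infty(\cP_0)$ with $\tfrac12\int\int c_{0,z}(z_1-\nabla f_\delta)^2\le D_{1,1}+\delta$ and a cutoff $\theta\in C^\infty_c(\L)$, $0\le\theta\le1$, equal to $1$ off a thin neighbourhood $N$ of $\{x_1=\pm1/2\}$ with $|N|$ small (depending on $\delta$ and $f_\delta$), and take the competitor $u_\e:=\psi-\e\,\theta\,f_\delta(\t_{\cdot/\e}\o)\in K^\e_\o$: using (A5) to control the $\nabla\theta$-terms and the long jumps, $\tfrac12\la\nabla_\e u_\e,\nabla_\e u_\e\ra_{L^2(\nu^\e_{\o,\L})}\to m\int_\L\tfrac12\int\int c_{0,z}(z_1-\theta(x)\nabla f_\delta)^2\,dx\le m(D_{1,1}+2\delta)$, so minimality of $V_\e$ and $\delta\downarrow0$ give $\limsup_\e\tfrac12\la\nabla_\e V_\e,\nabla_\e V_\e\ra_{L^2(\nu^\e_{\o,\L})}\le mD_{1,1}$, which is \eqref{mortisia}. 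For the strong $L^2$ statement, set $w_\e:=V_\e-u_\e\in H^{1,\e}_{0,\o}$: expanding the square and using the energy convergence just obtained, the strong 2-scale convergence of $\nabla_\e u_\e$ (to $z_1-\theta(x)\nabla f_\delta$), the weak 2-scale convergence $\nabla_\e V_\e\rightharpoonup w$ and the $c_{0,z}$-orthogonality of $w$ to the potential form $-\nabla f_\delta$, one finds $\limsup_\e\tfrac12\|\nabla_\e w_\e\|^2_{L^2(\nu^\e_{\o,\L})}\le 2m\delta$; the Poincar\'e inequality on the rescaled stripe (Section~\ref{sec_hilbert}) then gives $\limsup_\e\|w_\e\|^2_{L^2(\mu^\e_{\o,\L})}\le Cm\delta$, and since $\|u_\e-\psi\|_{L^2(\mu^\e_{\o,\L})}\le\e\|f_\delta\|_\infty\,\mu^\e_{\o,\L}(\L)^{1/2}\to0$, the triangle inequality followed by $\delta\downarrow0$ yields $\lim_\e\|V_\e-\psi\|_{L^2(\mu^\e_{\o,\L})}=0$.

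\emph{Main obstacle.} The real difficulty is that everything which in a uniformly elliptic, finite-range model is routine — the 2-scale compactness/structure theorem, the passage to the limit in the equation, the Poincar\'e inequality — must be set up \emph{without} ellipticity (the weights $c_{x,y}$ may vanish or be arbitrarily small) and \emph{with} arbitrarily long jumps; the moment conditions (A5)--(A6) are precisely what let one tame the long-range terms in each of these steps. A further delicate point is to verify that the Dirichlet data of $V_\e$ are inherited in the 2-scale limit, so that $v_0=\psi$ on $\partial\L$.
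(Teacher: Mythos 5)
Your overall two--scale scheme (a priori bounds, compactness, the structure theorem for the limits $v_0$ and $w$, identification of $v_0=\psi$ through the two--scale weak formulation and uniqueness for the effective mixed Dirichlet--Neumann problem) matches the paper's Sections \ref{sec_bike}--\ref{limitone}. Your energy upper bound via the corrector competitor $u_\e=\psi-\e\,\theta f_\delta(\t_{\cdot/\e}\o)$ is a legitimate alternative to the paper's route: the paper instead uses harmonicity to write $\la \nabla_\e V_\e,\nabla_\e V_\e\ra_{L^2(\nu^\e_{\o,\L})}=\la \nabla_\e\psi,\nabla_\e V_\e\ra_{L^2(\nu^\e_{\o,\L})}$, replaces $\nabla_\e\psi$ by $z_1$ (Claim \ref{marlena93}) and reads off the limit from the 2-scale limit $w$ and \eqref{mattacchione}; your competitor argument is essentially the one the paper deploys only in the $D_{1,1}=0$ case (Section \ref{renato_zero}).

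However, your final step contains a genuine gap: you pass from $\limsup_{\e\da 0}\|\nabla_\e(V_\e-u_\e)\|^2_{L^2(\nu^\e_{\o,\L})}=O(\delta)$ to $\limsup_{\e\da 0}\|V_\e-u_\e\|^2_{L^2(\mu^\e_{\o,\L})}=O(\delta)$ by invoking a ``Poincar\'e inequality on the rescaled stripe (Section \ref{sec_hilbert})''. No such inequality is available: Section \ref{sec_hilbert} contains none, and Lemma \ref{poincare} is a Poincar\'e inequality for the continuum space $H^1_0(\L,F,d_*)$ with the effective gradient, not for $H^{1,\e}_{0,\o}$ with the amorphous Dirichlet form. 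A bound $\|f\|_{L^2(\mu^\e_{\o,\L})}\leq C\|\nabla_\e f\|_{L^2(\nu^\e_{\o,\L})}$ uniform in $\e$ is a quantitative spectral-gap/ellipticity statement that fails in general under (A1)--(A7): testing it on $f=\mathds{1}_{x_0}$ shows it would force $\l_0(\t_x\o)\gtrsim \e^2$ simultaneously for all $x\in\hat\o\cap\e^{-1}\L$, which the hypotheses do not guarantee since the conductances have no positive lower bound. The paper closes this step by a different device: having shown $V_\e\rightharpoonup\psi$ weakly, it obtains the missing norm convergence $\|V_\e\|_{L^2(\mu^\e_{\o,\L})}\to\|\psi\|_{L^2(\L,m\,dx)}$ from the exact identity $\la V_\e,\,V_\e-\psi\ra_{L^2(\mu^\e_{\o,\L})}=0$, proved by writing $V_\e-\psi=-\bbL^\e_\o\bar f$ on $\e\hat\o\cap\L$ (the discrete operator being an isomorphism on the finite-dimensional space) and using the harmonicity of $V_\e$ from Lemma \ref{benposto}--(ii); strong convergence then follows from Remark \ref{utilino}. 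You would need to replace your Poincar\'e step by an argument of this kind.
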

%
%
%
%
The definition of the above types of convergence is  recalled in 
 Section \ref{anatre12}.

\begin{Warning}\label{stellina5} Recall that $D$ is diagonal (see Warning \ref{stellina59}).
When $D_{1,1}>0$,   at cost to permute the coordinates and without loss of generality, we assume  that  $D_{i,i}>0$ for $1\leq i \leq d_*  $ and $D_{i,i}=0$ for $d_*<i \leq d$. 
 \end{Warning}

In  Section \ref{eff_equation} we will characterize $\psi$ as the unique weak solution on $\L$ of the so--called effective equation given by  $\nabla_* \cdot (D \nabla_* v)=0$  with suitable mixed Dirichlet-Neumann conditions, where $\nabla_*$ denotes the projection of $\nabla $ on the first $d_*$ coordinate  (cf. Definition \ref{fete}). Due to Theorem \ref{teo2}, the  equation $\nabla_* \cdot (D \nabla_* v)=0$ represents the effective macroscopic law of the electrical potential $V_\e$  in the limit $\e\da 0$, when $D_{1,1}>0$.

\subsection{Comments on Assumptions (A1),...,(A7)}

If the marked simple point process  is  the  $\nu$--randomization of an ergodic stationary simple point process $\xi $ on $\bbR^d$ (i.e.~under  $\cP( \cdot\,| \hat \o)$ the marks are i.i.d. with common law $\nu$) and $\nu$ is not degenerate  (i.e. $\nu \not = \d_a$), then condition (A1) is automatically satisfied (see \cite[Section~2.1]{FSS}).   The point process $\xi$ can be   genuinely amorphous  as 
the  Poisson point process or can keep some lattice structure 
 as the random set $\xi:=U+\tilde \xi\subset \bbR^d$, where $U$ and $\tilde \xi$ are independent, $U$  is a random vector with uniform distribution on $[0,1]^d$ and $\tilde \xi$ is given by the vertex set of a site/bond Bernoulli percolation in $\bbZ^d$.

 Always in the case of   $\nu$--randomization,
if  $\nu$ is not degenerate, then (A3) is also fulfilled.
In the general case, since the event in (A3) is translation invariant, (A3) is equivalent to the identity $\cP_0 ( \o \in \O_0: \t_x \o \not = \t_y \o \; \forall x\not=y \text{ in }\hat \o)=1$ (cf. e.g. \cite{DV}, \cite[Lemma 1]{FSS}).

 To verify (A5) and  
\eqref{sorpresina}, \eqref{zarina} in (A6) the following property is very useful: given $n\in \bbN$, $x\in \bbR^d$  and  a box $B\subset \bbR^d$,  it holds $\bbE_0[ \hat \o (x+B)^n] \leq  C \bbE[ \hat \o ([0,1]^d ) ^{n+1}] $
for some positive constant $C$ independent from $x$, cf.
 \cite[Lemma~1-(iv)]{FSS}. If, as  in Mott v.r.h., there exist $C'>0$ such that  $c_{0,x} (\o ) \leq C' f(|k|) $ for any $k\in \bbZ^d$ and  $x\in k+[0,1]^d$, then one can bound 
\[  \int \hat \o (z) c_{0,z}(\o) ^\g |z| ^{\chi} \leq 
C(\g,\chi)
\sum _{k \in\bbZ^d}f(|k|)^\g (1+|k|^\chi)  \hat \o (  k+[0,1]^d)\,.
\]
As a consequence,  if $\bbE[ \hat \o([0,1]^d) ^2]<+\infty$,  we have  $ \bbE_0\bigl[ \int d\hat \o (z) c_{0,z} (\o)^{\g}|z|^\chi \bigr ]<+\infty$   if $\sum _{k \in\bbZ^d}f(|k|)^\g (1+|k|^\chi) <+\infty$.
By Campbell's formula (take $f(x,\o) := \mathds{1}( \|x\| _\infty \leq 1/2) \hat \o([-1,1]^d)$ in  \eqref{campanello} below),  $\bbE_0[ \hat\o ( [-1,1]^d)]<+\infty$ implies that  $\bbE[ \hat\o ( [0,1]^d)^2 ]<+\infty$. In particular, for Mott v.r.h.  Assumption 
(A5), \eqref{sorpresina} and \eqref{zarina} are satisfied if and only if $\bbE[ \hat\o ( [0,1]^d)^2 ]<+\infty$.

Condition \eqref{downtown} can be relaxed. 
 For the sake of simplicity, and since \eqref{downtown} is true for Mott v.r.h., we have  preferred the present form.
Condition (A7) is not strictly necessary. It guarantees the  uniqueness of the electrical potential and it is always satisfied by  Mott v.r.h.\,.
Due to the above discussion, for Mott v.r.h., our assumptions  reduce to Assumptions (A1), (A2), (A3) and the requirement that   $\bbE[ \hat\o ( [0,1]^d)^2 ]<+\infty$.

Finally, we point out that the marks $E_i$  could indeed belong to any Polish space instead of $\bbR$, results and proofs would not change.
%
%


\section{Effective equation with mixed  boundary conditions}\label{eff_equation}
In this section we assume  that $D_{1,1}>0$.
  Recall the definition of $d_*$ given in Warning \ref{stellina5}. 
We are interested in elliptic operators with mixed (Dirichlet and  Neumann) boundary conditions.
We set 
\[  F_-:=\{ x\in \bar \L\,:\, x_1 =-1/2\}\,,\;  \,  F_+:=\{ x\in \bar \L\,:\, x_1 =1/2\}\,, \,
\; F:= F_- \cup F_+
\,.
\]
Given a domain  $A\subset \bbR^d$, $L^2(A)$ and $H^1(A)$ refer to the Lebesgue measure $dx$.

\begin{Definition}\label{vettorino} We introduce the following three functional spaces:
\begin{itemize}
\item We define $H^1(\L, d_*)$ as the  Hilbert space given by  functions $f\in L^2(\L)$ with weak derivative 
$\partial _i f$ in $L^2(\L)$ for any $i=1,\dots, d_*$,  endowed    with the 
 norm $\|f\| _{1,*}:= \|f\| _{L^2( \L)} +\sum _{i=1}^{d_*}   \|\partial _i f\| _{L^2( \L)}$. Moreover, given $f\in H^1(\L,d_*)$, we define
 \be\label{aria7}
 \nabla_* f:= (\partial _1 f, \partial _2 f, \dots, \partial _{d_*} f,0,\dots, 0) \,.
 \en
\item We define $H^1_0( \L,F,d_*)$ as the closure in  $H^1(\L,d_*)$ of
\[ 
\Big \{ \varphi _{|\L} \,:\, \varphi \in C^\infty_c (\bbR^d \setminus F) \Big\}\,.\]
 \item  We define the functional set $K$ as (cf. 
  \eqref{def_psi})
  \be\label{kafka}
K:= \{\psi_{|\L}+ f\,:\, f \in H^1_0( \L,F,d_*)\}\,.
\en
\end{itemize}
  \end{Definition}
  \begin{Remark}\label{memorandum} Let $f\in H^1(\L,d_*)$. Given $1\leq i \leq d_*$,  by integrating $\partial_i f $ times $\varphi (x_1, \dots, x_{d_*}) \phi (x_{d*+1}, \dots, x_d)$ with 
  $\varphi \in C_c^\infty (\bbR^{d_*})$ and $\phi \in C_c^\infty( \bbR^{d-d_*})$, one obtains  that  the function $f( \cdot, y_1, \dots, y_{n-d_*})$ belongs to $H^1\left( (-1/2,1/2)^{d_*}\right)$ for a.e. $(y_1, \dots, y_{n-d_*}) \in (-1/2,1/2)^{n-d_*}$.
  \end{Remark}
   Being a closed subspace of the Hilbert space
 $H^1(\L,d_*)$,  $H^1_0( \L ,F,d_*)$ is a Hilbert space.
We  also point out that in the definition of $K$ one could replace $\psi_{|\L}$ by any other function $\phi\in H^1(\L,d_*) \cap C(  \bar\L)$  such that $\phi\equiv 0$ on $F_-$ and $\phi\equiv 1$ on $F_+$, as follows from the next lemma: 
\begin{Lemma}\label{12anni} Let $u \in H^1(\L,d_*) \cap C(  \bar \L )$   satisfy $u\equiv 0$ on $F$. Then $u\in H^1_0( \L ,F,d_*)$.
\end{Lemma}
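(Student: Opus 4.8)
The plan is to approximate $u$ by smooth functions that vanish near $F$, controlling only the first $d_*$ partial derivatives. Let $u \in H^1(\L,d_*)\cap C(\bar\L)$ with $u\equiv 0$ on $F$. First I would handle the truncation near $F$: since $F=F_-\cup F_+$ consists of the two faces $\{x_1=\pm 1/2\}$ and $u$ is continuous on $\bar\L$ vanishing there, for $\delta>0$ define a Lipschitz cutoff $\chi_\delta:\bbR\to[0,1]$ depending only on $x_1$, equal to $0$ for $|x_1|\geq 1/2-\delta$ and equal to $1$ for $|x_1|\leq 1/2-2\delta$, with $|\chi_\delta'|\leq C/\delta$. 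Set $u_\delta:=\chi_\delta u$. Then $u_\delta\in H^1(\L,d_*)$ (the product rule holds since $\chi_\delta$ is Lipschitz and depends only on $x_1$, so $\partial_1 u_\delta = \chi_\delta'u + \chi_\delta\partial_1 u$ and $\partial_i u_\delta = \chi_\delta\partial_i u$ for $2\leq i\leq d_*$), and $u_\delta$ is supported in $\{|x_1|\leq 1/2-\delta\}$, hence at positive distance from $F$. The key estimate is $u_\delta\to u$ in $H^1(\L,d_*)$ as $\delta\da 0$: the $L^2(\L)$ and $\chi_\delta\partial_i u$ terms converge by dominated convergence, and the only delicate term is $\chi_\delta' u$, for which $\|\chi_\delta' u\|_{L^2(\L)}^2 \leq (C/\delta)^2\int_{1/2-2\delta\leq |x_1|\leq 1/2-\delta}|u|^2\,dx$; since $u$ is continuous and vanishes on $F$, $\sup\{|u(x)|: |x_1|\geq 1/2-2\delta\}=o(1)$ as $\delta\da 0$, so this term is $O(\delta^{-2}\cdot\delta\cdot o(1))$ — here I would need the slightly sharper bound $|u(x)|\leq \omega(2\delta)$ with $\omega$ a modulus of continuity, giving $O(\delta^{-1}\omega(2\delta)^2)$, which still tends to $0$. (If necessary one first replaces $u$ by $\max(\min(u,1),0)$ or subtracts the continuous function $\psi$ and works with the part vanishing on all of $F$, to make the estimate cleaner.)

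Once $u$ is reduced to a function $w\in H^1(\L,d_*)\cap C(\bar\L)$ supported away from $F$, it remains to approximate $w$ by functions of the form $\varphi_{|\L}$ with $\varphi\in C^\infty_c(\bbR^d\setminus F)$ in the $\|\cdot\|_{1,*}$ norm. Extend $w$ by $0$ to a slightly larger open set and mollify: $w_\eta := w * \rho_\eta$ with $\rho_\eta$ a standard mollifier of scale $\eta\ll \mathrm{dist}(\mathrm{supp}\,w, F)$. Then $w_\eta\in C^\infty_c(\bbR^d\setminus F)$ for $\eta$ small, $w_\eta\to w$ in $L^2$, and $\partial_i w_\eta = (\partial_i w)*\rho_\eta\to \partial_i w$ in $L^2$ for each $i=1,\dots,d_*$ — note we never need convergence of $\partial_i w$ for $i>d_*$, which is exactly why the argument works in $H^1(\L,d_*)$ without full Sobolev regularity. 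A minor point: mollification is on $\bbR^d$, so for the directions $i=1,\dots,d_*$ the weak derivative of the zero-extension of $w$ agrees with the zero-extension of $\partial_i w$ precisely because $w$ vanishes near $F=\{x_1=\pm 1/2\}$; in the transverse directions $i>d_*$ we do not claim anything. Hence $w_\eta\to w$ in $H^1(\L,d_*)$, and combining with the first step, $u\in \overline{\{\varphi_{|\L}:\varphi\in C^\infty_c(\bbR^d\setminus F)\}}^{\,H^1(\L,d_*)} = H^1_0(\L,F,d_*)$.

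The main obstacle is the truncation estimate near $F$: bounding $\|\chi_\delta' u\|_{L^2(\L)}$ and showing it vanishes requires genuinely using the continuity of $u$ up to $\bar\L$ (mere $H^1(\L,d_*)$-membership with an $L^2$ trace on $F$ would not suffice to kill the $\delta^{-1}$ blow-up), so the proof must exploit the hypothesis $u\in C(\bar\L)$, $u\equiv 0$ on $F$ via a uniform modulus of continuity on a neighborhood of $F$. The mollification step is routine. I would also remark that it is convenient to first subtract from $u$ nothing (since $u$ already vanishes on all of $F$) — the statement is cleaner than the general case in $K$ precisely because we only cut near the two flat faces $x_1=\pm 1/2$, and the cutoff can be taken a function of $x_1$ alone.
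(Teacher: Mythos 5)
Your route is genuinely different from the paper's: you cut off in the \emph{domain} (a cutoff $\chi_\delta(x_1)$ supported away from $F$), whereas the paper truncates in the \emph{range}, setting $u_n:=G(nu)/n$ with $G\in C^1(\bbR)$, $G\equiv 0$ on $[-1,1]$, $G(t)=t$ for $|t|\geq 2$; there $|u_n|\leq |u|$ and $|\partial_i u_n|=|G'(nu)|\,|\partial_i u|\leq \|G'\|_\infty|\partial_i u|$ are dominated uniformly in $n$, the a.e. convergence of $\partial_i u_n$ to $\mathds{1}_{\{u\neq 0\}}\partial_i u=\partial_i u$ is settled by Stampacchia's theorem, and $u_n$ vanishes in a neighborhood of $F$ simply by continuity of $u$. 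Your version, however, has a genuine gap at exactly the point you flag as the main obstacle. The bound $\|\chi_\delta' u\|^2_{L^2(\L)}\leq C^2\delta^{-2}\,\omega(2\delta)^2\cdot O(\delta)=O\bigl(\delta^{-1}\omega(2\delta)^2\bigr)$ does \emph{not} tend to $0$ for a general modulus of continuity: it would require $\omega(t)=o(t^{1/2})$, i.e. better-than-H\"older-$1/2$ behaviour of $u$ near $F$, which is not implied by $u\in H^1(\L,d_*)\cap C(\bar \L)$ (take $\omega(t)=t^{1/4}$, or $\omega(t)=1/\log(1/t)$). Continuity up to $\bar\L$ alone cannot beat the factor $\delta^{-1}$ coming from $|\chi_\delta'|$. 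To repair the domain-cutoff route you must use the Sobolev structure rather than the modulus of continuity: by Remark \ref{memorandum} and a Fubini slicing, for a.e. $x'$ the map $t\mapsto u(t,x')$ is absolutely continuous with derivative $\partial_1 u(\cdot,x')$ and, agreeing everywhere with the continuous representative, vanishes at $t=\pm 1/2$; the fundamental theorem of calculus and Cauchy--Schwarz then give $\int_{\{|x_1|\geq 1/2-2\delta\}}|u|^2\,dx\leq C\delta^2\int_{\{|x_1|\geq 1/2-2\delta\}}|\partial_1 u|^2\,dx=o(\delta^2)$, which is what actually kills the $\delta^{-2}$.

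A second, more minor point: in the mollification step, your claim that the weak derivative of the zero extension of $w$ equals the zero extension of $\partial_i w$ ``because $w$ vanishes near $F$'' is correct only for $i=1$. For $2\leq i\leq d_*$ the zero extension across the faces $\{x_i=\pm 1/2\}$ produces surface contributions in the distributional derivative $\partial_i\bar w$, and $\partial_i(\bar w*\rho_\eta)$ need not converge to $\partial_i w$ in $L^2(\L)$ near those faces. The standard remedy is the translate-and-mollify approximation of $H^1(\L,d_*)$ functions by restrictions of $C^\infty_c(\bbR^d)$ functions (the adaptation of \cite[Cor.~9.8]{Br} invoked in the paper), followed by multiplication by a smooth cutoff supported off $F$, which is harmless once $w$ vanishes in a neighborhood of $F$.
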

\begin{proof} We use some idea from the proof of \cite[Theorem~9.17]{Br}.
We set  $u_n(x):= G( n u(x) )/n$, where $G\in C^1(\bbR)$ satisfies: $|G(t) | \leq |t| $ for all $t\geq 0$, $G(t)=0$ for $|t|\leq 1$ and $G(t)=t $ for $|t|\geq 2$.  Note that   $\partial _i u_n(x) = G' (n u(x) ) \partial _i u(x) $ for $1\leq i \leq d_*$ (cf. \cite[Prop.~9.5]{Br}). Hence, $u_n\to u$ and $\partial _i u_n \to  \mathds{1}_{\{u=0\}} \partial_iu=\partial _i u$ a.e. In the last identity, we have used that $\partial _i u=0$ a.e. on $\{u=0\}$ which follows as a byproduct of  Remark \ref{memorandum} and  Stampacchia's theorem (see Thereom~3 and Remark~(ii) to Theorem 4 in \cite[Section~6.1.3]{EG}).  By dominated convergence one obtains that $u_n \to u$ in $H^1( \L, d_*)$. Since $H^1_0( \L, F,d_*)$ is a closed subspace of $H^1(\L,d_*)$, it is enough to prove that $u_n\in H^1_0( \L, F,d_*)$. Due to our hypothesis on $u$ and the definition of $G$, $u_n\equiv 0$ in a neighborhood of $F$ inside $ \bar \L$. Hence the thesis follows by applying the implication (iii)$\Rightarrow$ (i) in Proposition \ref{monti}. Equivalently, it is enough to observe that, by adapting  \cite[Cor.~9.8]{Br} or \cite[Theorem~1, Sec.~4.4]{EG}, there exists a sequence of functions   $\varphi_k \in C_c^\infty (\bbR^d)$ such that ${\varphi_k }_{|\L} \to u_n$ in $H^1(\L, d_*)$. Since $u_n\equiv 0$ in a neighborhood of $F$, it is easy to  find   $\phi \in C_c^\infty (\bbR^d\setminus F)$   such that  ${(\phi \varphi_k)} _{|\L} \to u_n$ in $H^1(\L, d_*)$. Hence  $u_n\in H^1_0( \L, F,d_*)$. 
\end{proof}

One can  adapt the proof of  \cite[Prop.~9.18]{Br} to get the following criterion assuring that a function belongs to $H^1_0(\L,F,d_*)$:
\begin{Proposition}\label{monti} 
Given  a function $u \in L^2(\L)$, the following properties are equivalent:
\begin{itemize}
\item[(i)] $ u \in H^1_0(\L, F,d_*)$;
\item[(ii)]
 there exists $C>0$  such that 
\be\label{tennis73}
\Big|
 \int _{\L} u\, \partial_{i}  \varphi dx
 \Big| 
 \leq C\|\varphi \|_{L^2(\L)} \quad \forall \varphi \in C_c^\infty (S) \,,\;\forall i:1 \leq i \leq  d_*\,;
 \en 
 \item[(iii)] the function 
 \begin{equation}
 \bar u (x) := 
 \begin{cases}
 u(x) & \text{ if } x \in \L\,,\\
 0 & \text{ if }  x \in S \setminus \L\,,
 \end{cases}
 \end{equation}
 belongs to $H^1(S,d_*)$ (which is defined similarly  to $H^1(\L, d_*)$. Moreover, in this case it holds $\partial_{i}\bar u = \overline{ \partial_{i} u}$ for $1\leq i \leq d_*$, where $ \overline{ \partial_{i} u}$ is defined similarly to $\bar{u}$.
 \end{itemize}
 \end{Proposition}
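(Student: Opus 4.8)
The plan is to follow the scheme of \cite[Prop.~9.18]{Br}, adapting the one-variable derivative/translation argument to the partial setting where only the first $d_*$ coordinates are differentiated. The implications (i)$\Rightarrow$(ii) and (iii)$\Rightarrow$(i) are the easy directions; the bulk of the work is (ii)$\Rightarrow$(iii).

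First I would prove (i)$\Rightarrow$(ii). If $u\in H^1_0(\L,F,d_*)$, pick $\varphi_k\in C_c^\infty(\bbR^d\setminus F)$ with ${\varphi_k}_{|\L}\to u$ in $H^1(\L,d_*)$. For $\varphi\in C_c^\infty(S)$ and $1\le i\le d_*$, integrate by parts on $\L$: since $\varphi_k$ vanishes near $F$, no boundary terms on $F_\pm$ appear, and since $\varphi$ is compactly supported in $S$ there are no boundary terms on $\partial\L\setminus F$ either; thus $\int_\L \varphi_k\,\partial_i\varphi\,dx=-\int_\L(\partial_i\varphi_k)\varphi\,dx$. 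Letting $k\to\infty$ gives $\int_\L u\,\partial_i\varphi\,dx=-\int_\L(\partial_i u)\varphi\,dx$, which is bounded by $\|\partial_i u\|_{L^2(\L)}\|\varphi\|_{L^2(\L)}$, so (ii) holds with $C=\max_i\|\partial_i u\|_{L^2(\L)}$.

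Next, (ii)$\Rightarrow$(iii). The estimate \eqref{tennis73}, applied to all $\varphi\in C_c^\infty(S)$, says precisely that the distributional derivatives $\partial_i\bar u$ on $S$ (for $1\le i\le d_*$) extend to bounded linear functionals on $L^2(S)$, hence are represented by $L^2(S)$ functions; this gives $\bar u\in H^1(S,d_*)$. To identify $\partial_i\bar u$ with $\overline{\partial_i u}$, test against $\varphi\in C_c^\infty(S\setminus F_\pm)$ supported away from $\L$ (where $\bar u=0$) and against $\varphi\in C_c^\infty(\L)$ (where integration by parts inside $\L$ gives $-\int\overline{\partial_i u}\,\varphi$); combined with the fact that $\bar u$ has no singular part of its derivative across the faces $x_1=\pm1/2$ transverse to $F$ — which is exactly what \eqref{tennis73} rules out, testing with $\varphi$ supported across those faces — one concludes $\partial_i\bar u=\overline{\partial_i u}$ a.e. on $S$.

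Finally, (iii)$\Rightarrow$(i). Given $\bar u\in H^1(S,d_*)$ supported in $\bar\L$, I would mollify and cut off. Translate $\bar u$ slightly in the $x_1$-direction, say $u_h(x):=\bar u(x-h e_1)$ with $h>0$ small, so that $u_h$ is supported in $\{x_1\ge -1/2+h\}$ and, after a symmetric argument near $F_+$ (or since $u=\psi$-part is handled separately, one reduces to the case $\bar u$ vanishing near both faces), $u_h$ vanishes in a neighborhood of $F$ inside $\bar\L$; translation is continuous in $H^1(S,d_*)$, so $u_h\to\bar u$. Then mollify $u_h$ with a standard mollifier at scale $<h$: the result is in $C^\infty$, still vanishes near $F$, and converges to $u_h$ in $H^1(S,d_*)$; restricting to $\L$ and multiplying by a fixed cutoff in $C_c^\infty(\bbR^d\setminus F)$ equal to $1$ on $\L$ produces the required approximating sequence, so $u\in H^1_0(\L,F,d_*)$. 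The main obstacle is the bookkeeping in (ii)$\Rightarrow$(iii): one must be careful that \eqref{tennis73} is stated with test functions on the whole stripe $S$ (not just $\L$), and that this is exactly the right condition to prevent a distributional derivative concentrated on the Neumann part $\partial\L\setminus F$ while still permitting the Dirichlet faces $F_\pm$ to carry such a contribution — the asymmetry between the Dirichlet faces $F$ and the lateral Neumann faces must be tracked through the choice of where the test functions are allowed to be supported.
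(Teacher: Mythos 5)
The paper gives no written proof of Proposition \ref{monti} beyond the remark that one adapts \cite[Prop.~9.18]{Br}, which is exactly the route you take, so you are on the intended track. Your (i)$\Rightarrow$(ii) and (ii)$\Rightarrow$(iii) are correct: in the first, the boundary terms on $F$ vanish because $\varphi_k$ vanishes near $F$ and those on $\partial \L\setminus F$ vanish because $\varphi\in C_c^\infty(S)$ dies near the lateral faces; in the second, once \eqref{tennis73} forces $\partial_i\bar u\in L^2(S)$, the identification $\partial_i\bar u=\overline{\partial_i u}$ is immediate because the two $L^2$ functions agree on the open sets $\L$ and $S\setminus\bar\L$, whose complement in $S$ is Lebesgue--null — no separate ``no singular part across $F$'' discussion is needed.

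The gap is in (iii)$\Rightarrow$(i), at the mollification step. After translating in $x_1$, the function $u_h$ is still supported in a set touching the lateral boundary of $S$ (the faces $x_j=\pm1/2$, $2\le j\le d$), so ``a standard mollifier'' needs values of $u_h$ outside $S$. If you tacitly extend by zero across those faces, the extension is not in $H^1(\cdot,d_*)$ when $d_*\ge 2$: for $2\le j\le d_*$ the distributional $\partial_j$ of the zero extension carries a surface term proportional to the trace of $u$ on $\{x_j=\pm1/2\}$, and $\partial_j(\rho_\delta * u^{\rm ext})$ then has $L^2(\L)$ norm of order $\delta^{-1/2}$ times that trace, so the mollified functions do not converge in $H^1(\L,d_*)$. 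You need an extra step before mollifying — even reflection of $\bar u$ across the lateral faces, or a transverse dilation $u(x_1,(1+\eta)x')$ shrinking the support into the interior of $S$. (For $d_*=1$ the zero extension does work, since only $\partial_1$ is controlled and the jump surfaces are parallel to $e_1$.) Two smaller points: the treatment of the two Dirichlet faces should be an honest partition of unity in $x_1$, or a compression $x_1\mapsto \lambda x_1$ with $\lambda>1$ which handles both faces at once — the parenthetical about ``the $\psi$-part'' is a non sequitur, as $\psi$ plays no role in this proposition; and the final ``fixed cutoff in $C_c^\infty(\bbR^d\setminus F)$ equal to $1$ on $\L$'' does not exist (by continuity it would equal $1$ on $\bar\L\supset F$), but it is also unnecessary, since the mollified translates are already compactly supported and vanish in a neighborhood of $F$.
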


\begin{Lemma}[Poincar\'e inequality] \label{poincare}
It holds $\| f \|_{L^2(\L )}\leq \| \partial_1 f \|_{L^2(\L )}$ for any $f\in H^1_0 (\L, F,d_*)$.
\end{Lemma}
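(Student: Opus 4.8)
The plan is to prove the Poincar\'e inequality on the unit-width stripe $\L=(-1/2,1/2)^d$ for functions in $H^1_0(\L,F,d_*)$, where Dirichlet conditions are imposed on the faces $F_\pm=\{x_1=\pm 1/2\}$. Since $d_*\geq 1$ (because $D_{1,1}>0$), the derivative $\partial_1 f$ is available, and the point is the classical one-dimensional Poincar\'e inequality in the $x_1$-variable, upgraded to the full domain by Fubini.

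\begin{proof}
By density it suffices to prove the inequality for $f=\varphi_{|\L}$ with $\varphi\in C_c^\infty(\bbR^d\setminus F)$, since both sides are continuous in the $H^1(\L,d_*)$-norm and such functions are dense in $H^1_0(\L,F,d_*)$ by definition. Fix such an $f$. For $x=(x_1,x')\in\L$ with $x'\in(-1/2,1/2)^{d-1}$, the function $t\mapsto f(t,x')$ is smooth on $[-1/2,1/2]$ and vanishes in a neighbourhood of $t=-1/2$ (because $\varphi$ vanishes near $F_-$), so by the fundamental theorem of calculus
\be
f(x_1,x')=\int_{-1/2}^{x_1}\partial_1 f(t,x')\,dt\,.
\en
By the Cauchy--Schwarz inequality, using that the interval of integration has length $x_1+1/2\leq 1$,
\be
f(x_1,x')^2\leq \Big(x_1+\tfrac12\Big)\int_{-1/2}^{x_1}\partial_1 f(t,x')^2\,dt\leq \int_{-1/2}^{1/2}\partial_1 f(t,x')^2\,dt\,.
\en
Now integrate in $x_1$ over $(-1/2,1/2)$ and then in $x'$ over $(-1/2,1/2)^{d-1}$; the right-hand side does not depend on $x_1$ and the $x_1$-interval has length $1$, so by Fubini's theorem
\be
\int_\L f(x)^2\,dx\leq \int_{(-1/2,1/2)^{d-1}}\int_{-1/2}^{1/2}\partial_1 f(t,x')^2\,dt\,dx'=\int_\L \partial_1 f(x)^2\,dx\,.
\en
This is precisely $\|f\|_{L^2(\L)}^2\leq\|\partial_1 f\|_{L^2(\L)}^2$. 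For general $f\in H^1_0(\L,F,d_*)$, pick $\varphi_k\in C_c^\infty(\bbR^d\setminus F)$ with $\varphi_{k|\L}\to f$ in $H^1(\L,d_*)$; then $\|\varphi_{k|\L}\|_{L^2(\L)}\to\|f\|_{L^2(\L)}$ and $\|\partial_1\varphi_{k|\L}\|_{L^2(\L)}\to\|\partial_1 f\|_{L^2(\L)}$, and passing to the limit in the inequality for $\varphi_{k|\L}$ gives the claim.
\end{proof}

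The only subtlety worth flagging is purely bookkeeping: one must use the defining density of $C_c^\infty(\bbR^d\setminus F)_{|\L}$ in $H^1_0(\L,F,d_*)$ rather than attempting to make sense of a boundary trace, since no trace theorem has been set up here; everything else is the textbook argument and the unit width of $\L$ makes the constant equal to $1$. There is no real obstacle.
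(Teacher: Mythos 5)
Your proof is correct and coincides with the paper's own argument: both write $f(x_1,x')=\int_{-1/2}^{x_1}\partial_1 f(s,x')\,ds$ for $f=\varphi_{|\L}$ with $\varphi\in C_c^\infty(\bbR^d\setminus F)$, apply Cauchy--Schwarz on the unit-length interval, integrate over $\L$, and conclude by the defining density of such restrictions in $H^1_0(\L,F,d_*)$. Nothing to add.
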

\begin{proof}
Given  $f \in \cC^\infty _c (\bbR^d\setminus F) $, by Schwarz inequality, for any $(x_1,x') \in \L$  we have 
$ f (x_1,x')^2=\bigl(
 \int_{-1/2} ^{x_1} \partial_1 f (s, x')ds\bigr)^2 \leq \int_{-1/2} ^{1/2} \partial_{1} f (s, x')^2 ds$. By integrating over $\L$ we get the desired estimate for   $f \in \cC^\infty _c (\bbR^d\setminus F) $.   Since $\cC^\infty _c (\bbR^d\setminus F )$ is dense in $H^1_0(\L, F,d_*)$, we get the thesis.
\end{proof}

\begin{Definition}\label{fete}
We say that  $v $ is a weak solution of the equation
\be\label{salvateci} 
\nabla_* \cdot ( D \nabla_* v ) =0
\en
on $\L$ with boundary conditions 
\be\label{mbc}
\begin{cases}
v(x) =0 & \text{ if } x\in F_-\,,\\
v(x)= 1 & \text{ if } x \in F_+\,,\\
 D \nabla_* v (x) \cdot  \mathbf{n}(x) =0 & \text{ if } x \in \partial \L \setminus F\,,
\end{cases}
\en
if $v \in K$ (cf. \eqref{kafka})  and if $\int_{\L} \nabla_* u \cdot D \nabla_* v\, dx =0$  for all  $u\in  H^1_0(\L, F,d_*)$.
\end{Definition}
Above $\mathbf{n}$ denotes the outward unit normal vector to  the boundary in $\partial \L$ (which is well defined on $\partial \L \setminus F$).
%
\begin{Remark}\label{piccoletto}
In the above definition it would be enough to 
require that $\int _\L \nabla_* u \cdot D \nabla_*  v dx =0$  for all   $u\in C^\infty_c(\bbR^d\setminus F) $ since the functional $H^1_0(\L , F,d_*)\ni u   \mapsto \int _\L\nabla_* u \cdot D \nabla_* v dx \in \bbR $ is continuous.
\end{Remark}
We shortly motivate the above definition. To simplify the notation we take $d_*=d$. We recall 
  Green's formula for a Lipschitz domain $B$: 
\be\label{lunare}
\int _B  ( \partial_i f) g\, dx=-\int_B f (\partial_i g)\, dx  + \int _{\partial B} fg  ( \mathbf{n}\cdot e_i ) d S\,, \qquad \forall f,g \in C^1(\bar{B})\,,
\en
where $\mathbf{n}$ denotes the outward unit normal vector to  the boundary $\partial B$ and $dS$ is the surface measure on $\partial B$.
 By taking $f= \partial _{j} v$ and $g=u$ in \eqref{lunare} we get 
\be \label{solare}
\int_B u \nabla \cdot ( D \nabla v ) dx  =-
\int_B \nabla u  \cdot ( D \nabla v) dx + \int _{\partial B} u (\nabla v \cdot (D\mathbf{n}) ) d S \,,
\en
for all $v \in C^2(\bar{B})$ and $ u \in C^1(\bar{B})$. By taking \eqref{solare} with $B=\L$ we see that $v\in C^2(\bar \L)$ satisfies 
$ \nabla \cdot ( D \nabla v )=0$ on $\L$ and  $\nabla v \cdot (D\mathbf{n})\equiv 0$ on $\partial \L \setminus F$ if and only if $\int_{\L} \nabla u  \cdot ( D \nabla v) dx=0 $ for any $u \in C^1(\bar \L)$ with $u\equiv 0 $ on $F$.
Such a set $\cC$ of  functions $u$ is  dense in $H^1_0(\L,F,d)$. Indeed $\cC\subset H^1_0(\L,F,d)$ by Lemma \ref{12anni}, while  $C^\infty_c(\bbR^d\setminus F)\subset \cC$. Hence, we conclude that $v\in C^2(\bar \L )$ satisfies 
$ \nabla \cdot ( D \nabla v )=0$  on $\L$ and  $\nabla v \cdot (D\mathbf{n})\equiv 0$ on $\partial \L \setminus F$ if and only if $\int_{\L} \nabla u  \cdot ( D \nabla v) dx=0 $ for any  $u\in  H^1_0(\L,F,d)$.  We have therefore proved that    $v\in C^2(\bar \L)$  is a classical solution of \eqref{salvateci} and \eqref{mbc} if and only if it is a weak solution in the sense of Definition \ref{fete}.
 
%
%
%
%
%
%
%
%
%
%
%
%
%


\begin{Lemma}\label{unico}
There exists a unique weak solution $u\in K$ of the equation $\nabla_* \cdot ( D \nabla_* u ) =0$ with boundary conditions  \eqref{mbc}. Furthermore, $u$ is the unique minimizer of 
\be
\inf _{v\in K}   \int \nabla_* v \cdot D \nabla_* v \,dx\,.
\en
\end{Lemma}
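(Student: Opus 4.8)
The plan is to apply the Lax--Milgram theorem (or rather, the direct method / Riesz representation) on the Hilbert space $H^1_0(\L, F,d_*)$. First I would reformulate the problem: writing $v = \psi_{|\L} + f$ with $f \in H^1_0(\L,F,d_*)$, the equation $\int_\L \nabla_* u \cdot D \nabla_* v\, dx = 0$ for all $u \in H^1_0(\L,F,d_*)$ becomes
\be
\int_\L \nabla_* u \cdot D \nabla_* f\, dx = -\int_\L \nabla_* u \cdot D \nabla_* \psi\, dx \qquad \forall u \in H^1_0(\L,F,d_*)\,.
\en
The right-hand side is a bounded linear functional of $u$ on $H^1_0(\L,F,d_*)$, since $|\nabla_*\psi| = |e_1| = 1$ on $\L$ (here I use that $D_{1,1} > 0$ so $\nabla_*\psi = (1,0,\dots,0)$ is a genuine contribution, and $D$ bounded gives continuity), so it equals $\langle \ell_0, u\rangle$ for some $\ell_0$ in the dual. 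The left-hand side is the bilinear form $B(u,f) := \int_\L \nabla_* u \cdot D \nabla_* f\, dx$, which is clearly symmetric, bounded (again by boundedness of $D$), and I must check it is coercive on $H^1_0(\L,F,d_*)$.

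The coercivity is the main point and where the Poincar\'e inequality of Lemma \ref{poincare} enters: for $f \in H^1_0(\L,F,d_*)$,
\be
B(f,f) = \int_\L \nabla_* f \cdot D \nabla_* f\, dx = \sum_{i=1}^{d_*} D_{i,i}\,\|\partial_i f\|_{L^2(\L)}^2 \geq D_{1,1}\,\|\partial_1 f\|_{L^2(\L)}^2 \geq \tfrac{D_{1,1}}{2}\bigl(\|\partial_1 f\|_{L^2(\L)}^2 + \|f\|_{L^2(\L)}^2\bigr)\,,
\en
using $D$ diagonal with nonnegative entries (Warning \ref{stellina59}) and $\|f\|_{L^2(\L)} \leq \|\partial_1 f\|_{L^2(\L)}$. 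Since $\|f\|_{1,*}^2 = \bigl(\|f\|_{L^2} + \sum_{i=1}^{d_*}\|\partial_i f\|_{L^2}\bigr)^2$ is controlled by $(d_*+1)\bigl(\|f\|_{L^2}^2 + \sum_{i=1}^{d_*}\|\partial_i f\|_{L^2}^2\bigr)$, I need also to bound $\sum_{i=2}^{d_*}\|\partial_i f\|_{L^2}^2$; but $B(f,f) \geq \min_{1\leq i\leq d_*} D_{i,i}\,\sum_{i=1}^{d_*}\|\partial_i f\|_{L^2}^2$ and $\min_i D_{i,i} > 0$ by the convention in Warning \ref{stellina5}, so combining with the Poincar\'e bound on $\|f\|_{L^2}$ gives $B(f,f) \geq c\,\|f\|_{1,*}^2$ for some $c>0$. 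Hence by Lax--Milgram (symmetric case = Riesz) there is a unique $f \in H^1_0(\L,F,d_*)$ solving the variational identity, giving existence and uniqueness of the weak solution $u = \psi_{|\L} + f \in K$. The equivalent characterization as the unique minimizer of $\inf_{v\in K}\int_\L \nabla_* v \cdot D \nabla_* v\, dx$ then follows from the standard fact that for a symmetric coercive bounded bilinear form the solution of the variational equation is exactly the minimizer of the associated quadratic functional over the affine space $K$: writing $v = \psi + g$, one expands $\int_\L \nabla_*(\psi+g)\cdot D\nabla_*(\psi+g)\,dx = B(g,g) + 2B(\psi,g) + B(\psi,\psi)$ and observes this is minimized over $g \in H^1_0(\L,F,d_*)$ precisely when $B(g,\cdot) = -B(\psi,\cdot)$, i.e.\ at the weak solution, with uniqueness from strict convexity (coercivity).

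The main obstacle, such as it is, is purely bookkeeping: making sure the seminorm $\|\cdot\|_{1,*}$ restricted to $H^1_0(\L,F,d_*)$ is genuinely equivalent to the full norm via Poincar\'e, which requires knowing $\min_{1\le i\le d_*} D_{i,i}>0$ (so that all the $\partial_i f$, $i\le d_*$, are controlled by $B(f,f)$) — this is exactly guaranteed by Warning \ref{stellina5}. Everything else is the textbook Lax--Milgram / direct-method argument, so I would keep the write-up short, citing Lemma \ref{poincare} for coercivity and invoking Lax--Milgram (e.g.\ \cite[Cor.~5.8]{Br}) for the conclusion.
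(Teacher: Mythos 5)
Your proposal is correct and follows essentially the same route as the paper: decompose $v=\psi_{|\L}+f$, observe the bilinear form $\int_\L \nabla_* f\cdot D\nabla_* g\,dx$ is symmetric, bounded and coercive on $H^1_0(\L,F,d_*)$ via the Poincar\'e inequality of Lemma \ref{poincare} together with $D_{i,i}>0$ for $i\le d_*$ (Warning \ref{stellina5}), apply Lax--Milgram, and identify the solution with the minimizer by expanding the quadratic functional. Your write-up is in fact slightly more explicit than the paper's on why coercivity needs all of $D_{1,1},\dots,D_{d_*,d_*}$ to be positive rather than just $D_{1,1}$, but the argument is the same.
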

\begin{proof} To simplify the notation, in what follows we write $\psi$ instead of $\psi_{|\L}$.
We define the bilinear form $a(f,g):= \int _{\L} \nabla_* f \cdot D \nabla_* g dx $ on  the Hilbert space $H^1_0(\L,F,d_*)$. The bilinear form $a(\cdot,\cdot)$ is symmetric and  continuous (since $D$ is symmetric).  Due to the  Poincar\'e inequality (cf.~Lemma~\ref{poincare}) and since $D_{1,1}>0$, 
$a(\cdot,\cdot)$ is also coercive. 


By definition 
we have that $u\in K$ is a weak solution of  equation $\nabla_* \cdot ( D \nabla_* u ) =0$ with b.c. \eqref{mbc} if and only if, setting $f:=u-\psi$, $f \in H^1_0(\L, F,d_*)$ and $f$ satisfies 
\be\label{baia}
\int \nabla_* f\cdot D \nabla_* v dx = - \int \nabla_* \psi  \cdot D \nabla_* v\, dx \qquad \forall v \in H^1_0(\L, F,d_*)\,. 
\en
Note that the r.h.s. is a continuous functional in $v\in H^1_0(\L, F,d_*)$.
Due to the above observations and by  Lax--Milgram theorem  we conclude that there exists a unique such function  $f$, hence there is a unique weak solution $u$ of  equation $\nabla_* \cdot ( D \nabla_* u ) =0$ with b.c. \eqref{mbc}. Moreover $f$ satisfies  \be
\frac{1}{2} a(f,f)+ \int \nabla_* \psi \cdot D \nabla_* f \, dx= \inf _{g\in H^1_0(\L, F,d_*)} \Big\{\frac{1}{2} a(g,g)+ \int \nabla_* \psi  \cdot D \nabla_* g\,dx
\Big\}\,.
\en
By adding to both sides $\frac{1}{2} \int \nabla_* \psi \cdot D \nabla_* \psi  \, dx$,  we get that  $\frac{1}{2}\int \nabla_* u \cdot D\nabla_* u= \inf _{v\in K} \frac{1}{2}\int \nabla_* v \cdot D\nabla_* v \,dx$.
\end{proof}
From the above lemma we immediately get:
\begin{Corollary}\label{h2o} The function $\psi_{|\L}$ (cf.  \eqref{def_psi} ) is the unique weak solution 
  $u\in K$ of the equation $\nabla_* \cdot ( D \nabla_* u ) =0$ with boundary conditions  \eqref{mbc}.
\end{Corollary}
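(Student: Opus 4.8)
\textbf{Proof plan for Corollary \ref{h2o}.}

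The plan is to verify directly that $\psi_{|\L}$ satisfies the weak formulation of Definition \ref{fete} and then invoke the uniqueness part of Lemma \ref{unico}. First, $\psi_{|\L}\in K$ is immediate from the definition \eqref{kafka}, taking $f=0\in H^1_0(\L,F,d_*)$. It remains to show that $\int_\L \nabla_* u\cdot D\nabla_*\psi\,dx=0$ for every $u\in H^1_0(\L,F,d_*)$, and by Remark \ref{piccoletto} it suffices to check this for $u\in C^\infty_c(\bbR^d\setminus F)$.

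For such $u$, the key computation uses two facts: that $\nabla\psi\equiv e_1$ on $\L$ (so $\nabla_*\psi=e_1$, recalling $d_*\geq 1$ since $D_{1,1}>0$), and that $D$ is diagonal (Warning \ref{stellina59}), hence $D\nabla_*\psi=D_{1,1}e_1$. Therefore
\[
\int_\L \nabla_* u\cdot D\nabla_*\psi\,dx = D_{1,1}\int_\L \partial_1 u\,dx = D_{1,1}\int_{(-1/2,1/2)^{d-1}}\!\!\bigl(u(1/2,x')-u(-1/2,x')\bigr)\,dx'=0,
\]
where in the last step I apply the fundamental theorem of calculus in the $x_1$-variable together with the fact that $u$, having support in $\bbR^d\setminus F$, vanishes on the faces $F_\pm=\{x\in\bar\L:x_1=\pm 1/2\}$. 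By continuity of the functional $u\mapsto\int_\L\nabla_* u\cdot D\nabla_*\psi\,dx$ on $H^1_0(\L,F,d_*)$ and density of $C^\infty_c(\bbR^d\setminus F)$ therein, the identity extends to all $u\in H^1_0(\L,F,d_*)$. Thus $\psi_{|\L}$ is a weak solution of \eqref{salvateci} with boundary conditions \eqref{mbc}, and by the uniqueness asserted in Lemma \ref{unico} it is \emph{the} weak solution.

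(Equivalently, one could argue via the variational characterization in Lemma \ref{unico}: writing a competitor $v=\psi_{|\L}+f$ with $f\in H^1_0(\L,F,d_*)$, the cross term $\int_\L\nabla_*\psi\cdot D\nabla_* f\,dx$ vanishes by the computation above, so $\int_\L\nabla_* v\cdot D\nabla_* v\,dx=\int_\L\nabla_*\psi\cdot D\nabla_*\psi\,dx+\int_\L\nabla_* f\cdot D\nabla_* f\,dx\geq\int_\L\nabla_*\psi\cdot D\nabla_*\psi\,dx$, with equality forcing $\|\partial_1 f\|_{L^2(\L)}=0$ since $D$ is diagonal with $D_{1,1}>0$, hence $f=0$ by the Poincaré inequality of Lemma \ref{poincare}; so $\psi_{|\L}$ is the unique minimizer.) There is no real obstacle here: the only thing requiring care is the vanishing of the cross term, which reduces to the elementary boundary computation displayed above once one recalls that $D$ is diagonal and $\nabla\psi=e_1$.
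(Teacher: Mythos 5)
Your proof is correct and is essentially the paper's argument: the paper states the corollary as an immediate consequence of Lemma \ref{unico}, and the content you supply — that $\psi_{|\L}\in K$ and that the cross term $\int_\L \nabla_* u\cdot D\nabla_*\psi\,dx = D_{1,1}\int_\L\partial_1 u\,dx$ vanishes for $u\in C^\infty_c(\bbR^d\setminus F)$ and hence, by density, for all $u\in H^1_0(\L,F,d_*)$ — is exactly the verification the paper leaves implicit. Both your direct check of the weak formulation and your variational alternative are sound.
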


\section{Preliminary facts on $\O$, $\cP$ and $\cP_0$}\label{GS}

In this section we  recall some basic  facts on the space $\O$ and on the Palm distribution $\cP_0$ associated to $\cP$.

The space $\O$  of realizations of marked point processes is endowed 
with a Prohorov-like metric $d$ 
 such that the following facts are equivalent: 
 (i)   a sequence  $(\o_n)$ converges to  $\o$ in $(\O, d)$, (ii) 
$\lim _{n\to \infty} \int _{\bbR^d\times \bbR} f(x,s) d\o_n (x,s) = \int_{\bbR^d\times \bbR}  f (x,s) d \o (x,s) \,,
$ for any bounded continuous function $f: \bbR^d  \times \bbR\to \bbR$ vanishing outside a bounded set and (iii) 
$\lim_{n \to \infty} \o _n (A)=\o(A)$ for any bounded  Borel set  $A\subset \bbR^d \times \bbR $ with $\o(\partial A)=0$
 (see \cite[App.~A2.6 and Sect.~7.1]{DV}). 
  In addition, $(\O,d)$ is a separable metric space. Indeed, the above  distance $d$ is defined on the larger space $\cN$ of counting measures  $\mu=\sum _{i} k_i \d_{(x_i,E_i)}$,  where $k_i\in \bbN$  and $\{(x_i,E_i)\} $ is a locally finite subset of $\bbR^d\times \bbR$, and one can prove that $(\cN,d)$ is a Polish space having $\O$ as Borel subset \cite[Cor.~7.1.IV, App.~A2.6.I]{DV}.

 We recall some properties of the Palm distribution $\cP_0$ associated to the measure $\cP$ on $\O$. 
     $\cP_0$ is a probability measure with support inside  
    $\O_0$ and it can be  characterized by the identity 
\begin{equation}\label{marlena}
 \cP_0(A)= \frac{1}{m} \int _{\O}\cP(d\o)\int_{[0,1]^d}  d\hat{\o} (x) \mathds{1}_A(\t_x \o)\,, \qquad \forall A\subset \O_0 \text{ Borel}\,.
 \end{equation}
  The above identity  \eqref{marlena}  is a special case of the so--called 
 Campbell's formula (cf.~\cite[Eq.~(12.2.4)]{DV}): for any  nonnegative Borel function $f: \bbR^d\times \O \to[0,\infty) $ it holds (recall \eqref{mom_palma0})
 \begin{equation}\label{campanello}
 \int_{\bbR^d}dx  \int _{\O_0} \cP_0 ( d\o) f(x, \o) =\frac{1}{m} \int _{\O}\cP(d\o)\int_{\bbR^d}  d\hat{\o} (x) f(x, \t_x \o) \,.
 \end{equation}
  An alternative characterization of $\cP_0$ is  described in \cite[Section 1.2]{ZP}.

A fact frequently used in the rest  is the following (see \cite[Lemma 1]{FSS}): given a  translation invariant  Borel  subset $A\subset \O$,
it holds  $\cP(A)=1 $ if and only if $ \cP_0(A)=1$.

We recall some basic technical  facts discussed in \cite{Fhom}:

\begin{Lemma}\label{matteo}  
  \cite[Lemma 4.1]{Fhom}
Given a Borel subset $A\subset \O_0$, the following facts are equivalent:
\begin{itemize}
\item[(i)] $\cP_0(A)=1$;
\item[(ii)] $\cP\left(\o \in \O\,:\, \t_x \o \in A \; \forall x \in \hat\o\right)=1$;
\item[(iii)]  $\cP_0\left(\o \in \O_0\,:\, \t_x \o \in A \; \forall x \in \hat\o\right)=1$.
\end{itemize}
\end{Lemma}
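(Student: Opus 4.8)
Set $B:=\{\o\in\O:\t_x\o\in A\text{ for every }x\in\hat\o\}$, so that (ii) reads $\cP(B)=1$ and, since $\cP_0$ is concentrated on $\O_0$, (iii) reads $\cP_0(B\cap\O_0)=1$. Throughout I will use the cocycle relations $\t_y\t_a=\t_{y+a}$ and $\t_0=\mathrm{id}$, together with the fact that $x\in\hat\o$ forces $0\in\widehat{\t_x\o}$, so that $\t_x\o$ really lies in the domain $\O_0$ of $A$. One preliminary point: $B$ is Borel, because the map $\O\times\bbR^d\ni(\o,x)\mapsto\t_x\o\in\O$ is jointly measurable, hence $\o\mapsto\int d\hat\o(x)\,\mathds{1}_{A^c}(\t_x\o)$ is a Borel function and $B$ is the set where it vanishes. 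I will prove the cycle $(\mathrm{iii})\Rightarrow(\mathrm{i})\Rightarrow(\mathrm{ii})\Rightarrow(\mathrm{iii})$.

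The step $(\mathrm{iii})\Rightarrow(\mathrm{i})$ is immediate: for $\o\in B\cap\O_0$, taking $x=0\in\hat\o$ gives $\o=\t_0\o\in A$, so $B\cap\O_0\subset A$ and $\cP_0(A)\ge\cP_0(B\cap\O_0)=1$. For $(\mathrm{i})\Rightarrow(\mathrm{ii})$ I will apply Campbell's formula \eqref{campanello} to the nonnegative Borel function $f(x,\o):=\mathds{1}_{A^c}(\o)$: its left-hand side equals $\bigl(\int_{\bbR^d}dx\bigr)\cP_0(A^c)=0$ since $\cP_0(A^c)=0$, whence
\[
\int_\O\cP(d\o)\int_{\bbR^d}d\hat\o(x)\,\mathds{1}_{A^c}(\t_x\o)=0 .
\]
The inner integral is a sum of the nonnegative terms $\mathds{1}_{A^c}(\t_x\o)$ over the countable set $\hat\o$, so it vanishes for $\cP$-a.e.\ $\o$, i.e.\ $\t_x\o\in A$ for all $x\in\hat\o$ for $\cP$-a.e.\ $\o$, which is exactly $\cP(B)=1$. (If one prefers to avoid the convention $\infty\cdot0=0$, one can instead take $f(x,\o)=\mathds{1}_{[-n,n]^d}(x)\mathds{1}_{A^c}(\o)$ and let $n\to\infty$; alternatively both implications $(\mathrm{i})\Leftrightarrow(\mathrm{ii})$ follow from the Palm identity \eqref{marlena} together with $\bbE[\hat\o([0,1]^d)]=m$.)

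For $(\mathrm{ii})\Rightarrow(\mathrm{iii})$ I will check that $B$ is translation invariant and then invoke the fact recalled just after \eqref{campanello} (from \cite[Lemma 1]{FSS}): a translation invariant Borel set has full $\cP$-measure iff it has full $\cP_0$-measure. Translation invariance is a short computation: if $\o\in B$ and $a\in\bbR^d$, the configuration underlying $\t_a\o$ is $\hat\o-a$, and for $y=x-a$ with $x\in\hat\o$ one has $\t_y(\t_a\o)=\t_{x-a}\t_a\o=\t_x\o\in A$, so $\t_a\o\in B$; applying this with $-a$ gives $\t_aB=B$. Hence $\cP(B)=1$ yields $\cP_0(B)=1$, and since $\cP_0(\O_0)=1$ this is $\cP_0(B\cap\O_0)=1$, i.e.\ (iii). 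This closes the cycle.

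I expect no analytic difficulty; the argument is bookkeeping around two soft inputs, Campbell's formula and the $\cP\leftrightarrow\cP_0$ transfer of full-measure translation invariant events. The only places deserving care are the measurability assertions — joint measurability of $(\o,x)\mapsto\t_x\o$, hence Borel measurability of $B$, which are standard for marked simple point processes — and the constant attention to $A\subset\O_0$, which makes every ``$\t_x\o\in A$'' with $x\in\hat\o$ well posed. Conceptually the lemma just says that a $\cP$-a.s.\ statement about the full orbit $\{\t_x\o:x\in\hat\o\}$ of the environment and the corresponding $\cP_0$-a.s.\ statement about the environment viewed from the origin are equivalent.
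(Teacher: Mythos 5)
Your proof is correct, and it follows essentially the argument of the cited reference \cite[Lemma~4.1]{Fhom} (the paper itself only quotes the lemma): the implication (i)$\Rightarrow$(ii) via Campbell's formula \eqref{campanello} applied to $\mathds{1}_{A^c}$ (with the truncation $\mathds{1}_{[-n,n]^d}$ to avoid the $\infty\cdot 0$ issue), the implication (ii)$\Rightarrow$(iii) via the translation invariance of $B$ and the $\cP\leftrightarrow\cP_0$ transfer from \cite[Lemma~1]{FSS}, and the trivial step (iii)$\Rightarrow$(i) by taking $x=0$. The measurability remarks and the cocycle bookkeeping are all in order, so there is nothing to add.
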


\begin{Lemma}\label{lucertola1}\cite[Lemma 1--(i)]{FSS}\cite[Lemma~4.3]{Fhom} Let $k:\O_0\times \O_0 \to \bbR$ be a Borel function such that (i)  at least one of the functions   $\int d \hat{\o} (x) |k(\o, \t_x \o) | $ and $\int d\hat{\o}(x) |k(\t_x\o,\o)|$ is  in $L^1(\cP_0)$, or (ii) $k(\o,\o')\geq 0$. Then
\begin{equation}\label{prugna1}
\int d \cP_0(\o) \int d\hat{\o}(x) k(\o, \t_x \o) = \int d \cP_0(\o) \int d \hat{\o}(x) k (\t_x\o, \o)\,.
\end{equation}
\end{Lemma}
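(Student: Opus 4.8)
\textbf{Proof proposal for Lemma \ref{lucertola1}.}

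The plan is to reduce the identity \eqref{prugna1} to Campbell's formula \eqref{campanello}, using the translation covariance that is built into the setup. First I would handle case (ii), the nonnegative case, since monotone/Tonelli arguments apply without integrability worries; case (i) then follows by splitting into positive and negative parts, or by applying the nonnegative case to $|k|$ first to justify all the manipulations and then to $k$ itself by linearity. So assume $k\geq 0$.

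The key observation is that if we define $F:\bbR^d\times\O\to[0,\infty)$ by $F(x,\o):=k(\o,\t_{-x}\o)\,\mathds{1}_{\O_0}(\o)$, then $F$ is a nonnegative Borel function, and for $\o\in\O_0$ with $x\in\hat\o$ we have $\t_x\o\in\O_0$ (since $0\in\widehat{\t_x\o}$), so $F(x,\t_x\o)=k(\t_x\o,\t_{-x}\t_x\o)=k(\t_x\o,\o)$. Hence the right-hand side of \eqref{prugna1} can be written as $\int d\cP_0(\o)\int d\hat\o(x)\,F(x,\t_x\o)$. Now I would like to apply Campbell's formula \eqref{campanello} in reverse: since $\cP_0$ is supported on $\O_0$ and $\int d\cP_0(\o)\int d\hat\o(x)\,f(x,\t_x\o)$ is exactly the quantity appearing on the right of \eqref{campanello} (up to the factor $1/m$ and the outer $\cP$-integral), we get
\be
\int d\cP_0(\o)\int d\hat\o(x)\,F(x,\t_x\o)\;=\;\frac{1}{m}\int\cP(d\o)\int d\hat\o(x)\,F(x,\t_x\o)\Big/\!\!\!\;\cdots
\en
wait — let me restate this more carefully. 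The cleanest route is: both sides of \eqref{prugna1} are of the form $\int d\cP_0(\o)\int d\hat\o(x)\,g(x,\o)$ for an appropriate Borel $g\geq 0$ — for the left side $g(x,\o)=k(\o,\t_x\o)$, for the right side $g(x,\o)=k(\t_x\o,\o)$ — and \eqref{campanello} rewrites $\int d\cP_0(\o)\int d\hat\o(x)\,h(x,\t_x\o)$ in terms of $\cP$. So I first undo the shift: write $k(\o,\t_x\o)=G_1(x,\t_{-x}(\t_x\o))$ hmm, this is getting circular.

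Let me give the correct mechanism directly. Apply \eqref{campanello} with the test function $f_1(x,\o):=k(\t_{-x}\o,\o)\mathds{1}_{\O_0}(\t_{-x}\o)$, so that $f_1(x,\t_x\o)=k(\o,\t_x\o)$ for $\o\in\O_0$, $x\in\hat\o$; this gives
\be
\int d\cP_0(\o)\int d\hat\o(x)\,k(\o,\t_x\o)\;=\;\frac{1}{m}\int\cP(d\o)\int d\hat\o(x)\,k(\t_{-x}\o,\o)\mathds{1}_{\O_0}(\t_{-x}\o)\,.
\en
Separately, apply \eqref{campanello} with $f_2(x,\o):=k(\o,\t_{-x}\o)\mathds{1}_{\O_0}(\o)$, so that $f_2(x,\t_x\o)=k(\t_x\o,\o)$ for $\o\in\O_0$, $x\in\hat\o$; this gives
\be
\int d\cP_0(\o)\int d\hat\o(x)\,k(\t_x\o,\o)\;=\;\frac{1}{m}\int\cP(d\o)\int d\hat\o(x)\,k(\t_x\o,\t_{-x}(\t_x\o))\mathds{1}_{\O_0}(\t_x\o)\,,
\en
and the integrand on the right simplifies to $k(\t_x\o,\o)\mathds{1}_{\O_0}(\t_x\o)$. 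Now in the first $\cP$-integral substitute $x\mapsto -x$ inside the inner sum over $\hat\o$, using that $\widehat{\t_0\o}$ — actually the change $y=-x$ merely reindexes, but it changes $\t_{-x}\o$ to $\t_x\o$ only after also replacing $\o$ by $\t_{2x}\o$, so this naive substitution does not work. The honest fix is to apply \eqref{campanello} once more, now using the stationarity of $\cP$: the map $\o\mapsto\t_a\o$ preserves $\cP$, and summing over $x\in\hat\o$ of $k(\t_{-x}\o,\o)\mathds{1}_{\O_0}(\t_{-x}\o)$ — set $\o'=\t_{-x}\o$, i.e. $\o=\t_x\o'$ and note $x\in\widehat{\t_x\o'}$ iff $0\in\hat{\o'}$ — one sees by stationarity that $\int\cP(d\o)\int d\hat\o(x)\,k(\t_{-x}\o,\o)\mathds{1}_{\O_0}(\t_{-x}\o)=\int\cP(d\o)\int d\hat\o(x)\,k(\o,\t_x\o)\mathds{1}_{\O_0}(\o)$; but $\cP$-a.s. $\mathds{1}_{\O_0}(\o)=0$, so this route is degenerate and I should instead keep everything inside $\cP_0$.

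Given the delicacy above, the approach I actually expect to use is the slick one: define the Borel function $\Phi:\O\to[0,\infty)$, $\Phi(\o):=\int d\hat\o(x)\,k(\t_{-x}\o,\o)\mathds{1}_{\O_0}(\t_{-x}\o)$, observe it is translation covariant in the sense needed, and note that \eqref{campanello} applied to $f(x,\o):=$ "the $x$-th summand viewed from the origin" expresses $\int d\cP_0\,[\cdots]$ as a $\cP$-average of $\Phi$; doing this for both orderings of the arguments of $k$ yields the same $\cP$-average because summing $k(\o,\t_x\o)$ over $x\in\hat\o$ and summing $k(\t_x\o,\o)$ over $x\in\hat\o$ differ only by the relabeling $x\leftrightarrow$ "the point sitting at $-x$ in the shifted configuration", which by \eqref{campanello} integrates to the identical quantity. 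Concretely: \eqref{campanello} is symmetric enough that $\int d\cP_0(\o)\int d\hat\o(x)\,k(\o,\t_x\o)$ and $\int d\cP_0(\o)\int d\hat\o(x)\,k(\t_x\o,\o)$ both equal $\frac1m\int\cP(d\o)\sum_{x,y\in\hat\o}k(\t_x\o,\t_y\o)\mathds{1}([0,1]^d)(x)$-type double sums that coincide after exchanging the roles of the two summation variables and using stationarity.

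\textbf{Main obstacle.} The genuine difficulty — and the reason the lemma is not entirely formal — is the bookkeeping of which configuration one "sits at": translating to the point $x$ and translating to the point $y$ are different operations, and reconciling $k(\o,\t_x\o)$ with $k(\t_y\o,\o)$ requires combining Campbell's formula \eqref{campanello} with the stationarity of $\cP$ in just the right way so that the pair $(\o,\t_x\o)$ in the $\cP_0$-average corresponds, after passing to the $\cP$-average, to an ordered pair of points of $\hat\o$ whose two orderings are interchangeable. I would organize the proof as: (1) use \eqref{campanello} to lift both sides of \eqref{prugna1} to double integrals against $\cP$ over ordered pairs of points; (2) in one of them swap the two point-labels; (3) invoke stationarity of $\cP$ to re-center, recovering the other double integral; (4) descend back via \eqref{campanello}. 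For case (i), run this first with $|k|$ to certify finiteness of every term, then with $k^+$ and $k^-$ separately and subtract. Since the excerpt cites this as \cite[Lemma 1--(i)]{FSS} and \cite[Lemma~4.3]{Fhom}, I would in practice just reproduce the short argument from there, the heart of which is precisely the Campbell-plus-stationarity manipulation sketched above.
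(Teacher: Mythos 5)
The paper gives no proof of this lemma (it is imported from \cite[Lemma 1--(i)]{FSS} and \cite[Lemma 4.3]{Fhom}), and the argument there is exactly the ``Campbell plus stationarity'' mechanism of your final outline (1)--(4), so your strategy is the right one. As a proof, however, the proposal has a genuine gap: the only computations you actually display are not valid instances of \eqref{campanello}, and the two steps that carry the content of the lemma are never executed. In \eqref{campanello} the Lebesgue integral $\int_{\bbR^d}dx$ sits on the $\cP_0$ side and the point-process integral $\int d\hat\o(x)$ on the $\cP$ side; your displayed identities for $f_1,f_2$ put $\int d\hat\o(x)$ on the $\cP_0$ side, so they are not what Campbell yields (for your $f_1$ it yields $\int dx\int \cP_0(d\o)\,k(\t_{-x}\o,\o)\mathds{1}_{\O_0}(\t_{-x}\o)=\frac1m\int\cP(d\o)\int d\hat\o(x)\,k(\o,\t_x\o)\mathds{1}_{\O_0}(\o)$, whose right side vanishes since $\cP(\O_0)=0$). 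You notice the degeneracy and abandon the route, but nothing correct replaces it.

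The missing executions are as follows (for $k\ge 0$; case (i) then follows via $|k|,k^+,k^-$ as you say). First, the lift: apply \eqref{campanello} to $f(x,\o):=\mathds{1}_{[0,1]^d}(x)\int d\hat\o(z)\,k(\o,\t_z\o)$. Its left side is the left side of \eqref{prugna1}, and its right side, using $\widehat{\t_x\o}=\hat\o-x$ and $\t_z\t_x\o=\t_{x+z}\o$ with $y:=x+z\in\hat\o$, equals
\be
\frac1m\int\cP(d\o)\int d\hat\o(x)\int d\hat\o(y)\,\mathds{1}_{[0,1]^d}(x)\,k(\t_x\o,\t_y\o)\,;
\en
doing the same with the arguments of $k$ reversed and relabeling $x\leftrightarrow y$ expresses the right side of \eqref{prugna1} as the same double integral with $\mathds{1}_{[0,1]^d}(y)$ in place of $\mathds{1}_{[0,1]^d}(x)$. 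Second, the re-centering, which your sketch only names: insert $1=\sum_{j\in\bbZ^d}\mathds{1}_{j+[0,1]^d}(y)$ (resp.\ in $x$) and observe that stationarity of $\cP$ (replace $\o$ by $\t_{-j}\o$, which shifts $\hat\o$ by $j$ and leaves $k(\t_x\o,\t_y\o)$ covariant) identifies the term with indicator $\mathds{1}_{[0,1]^d}(x)\mathds{1}_{j+[0,1]^d}(y)$ with the term with indicator $\mathds{1}_{-j+[0,1]^d}(x)\mathds{1}_{[0,1]^d}(y)$; summing over $j$ by Tonelli gives the claimed equality of the two double integrals. Without this cube decomposition the assertion that the two double sums ``coincide after exchanging the roles of the two summation variables and using stationarity'' is unsubstantiated, and that is precisely where the lemma lives.
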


We conclude by  focusing on ergodicity.
  Since by Assumption (A1)  $\cP$ is ergodic,  we 
 have the following  result (cf. \cite[Prop.~12.2.VI]{DV}): given a nonnegative Borel function $g: \O_0\to [0,\infty)$  it holds
 \begin{equation}\label{lattone1}
 \lim_{n \to \infty} \frac{1}{(2n)^d } \int_{[-n,n]^d  }  d\hat{\o}(x) \, g(\t_x \o) = m \,\bbE_0[ g ]\qquad \cP\text{--a.s.}\,.
 \end{equation}
 One can indeed refine the above result. To this aim 
we define $\mu^\e_\o$ as the atomic measure on $\bbR^d$ given by  $\mu^\e_\o := \e^d \sum _{x\in \hat \o} \d_{\e x}$. Then it holds:
\begin{Proposition}\label{prop_ergodico}\cite[Prop.~3.1]{Fhom} Let  $g: \O_0\to \bbR$ be a Borel function with $\|g\|_{L^1(\cP_0)}<+\infty$. Then there exists a translation invariant   Borel subset $\cA[g]\subset \O$  such that $\cP(\cA[g])=1$ and such that,  for any $\o\in \cA[g]$ and any  $\varphi \in C_c (\bbR^d)$, it holds
\begin{equation}\label{eq-limitone}
\lim_{\e\da 0} \int  d  \mu_\o^\e  (x)  \varphi (x ) g(\t_{x/\e} \o )=
\int  dx\,m\varphi (x) \cdot \bbE_0[g]\,.
\end{equation}
\end{Proposition}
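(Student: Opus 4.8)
The plan is to rewrite the left-hand side of \eqref{eq-limitone} as $\int\varphi\,d\Theta^\e_\o$, where $\Theta^\e_\o:=\e^d\sum_{x\in\hat\o}g(\t_x\o)\,\d_{\e x}$ is the atomic measure on $\bbR^d$ obtained from $\mu^\e_\o$ by re-weighting each atom $\d_{\e x}$ by the factor $g(\t_x\o)$, and to show that for $\cP$-a.e.\ $\o$ one has $\Theta^\e_\o\to m\,\bbE_0[g]\,dx$ \emph{vaguely} as $\e\da0$; testing against $\varphi\in C_c(\bbR^d)$ then yields \eqref{eq-limitone} for all such $\varphi$ simultaneously. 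I would first reduce to $g\ge0$: writing $g=g^+-g^-$ with $g^\pm\in L^1(\cP_0)$ and using $\bbE_0[g]=\bbE_0[g^+]-\bbE_0[g^-]$, it suffices to treat $g\ge0$ and intersect the two resulting full-measure sets, everything being linear in $g$ apart from the exceptional set.

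The ergodic core is the convergence of $\Theta^\e_\o$ on boxes. For a fixed bounded open box $Q\subset\bbR^d$ one has $\Theta^\e_\o(Q)=|Q|\,\bigl(|\e^{-1}Q|^{-1}\int_{\e^{-1}Q}g(\t_x\o)\,d\hat\o(x)\bigr)$, and $\{\e^{-1}Q\}_{\e\da0}$ is a convex, regularly expanding (van Hove) averaging family; hence, by the ergodic theorem for the stationary ergodic marked point process $\cP$ over such sequences --- of which \eqref{lattone1} is the special case $Q=(-1,1)^d$, $\e=1/(2n)$, and whose general form over arbitrary convex averaging sets is classical, cf.\ \cite[Chp.~12]{DV} --- the bracketed average converges to $m\,\bbE_0[g]$ for $\cP$-a.e.\ $\o$, so that $\Theta^\e_\o(Q)\to m|Q|\,\bbE_0[g]$. (For origin-centered cubes the general theorem is unnecessary: \eqref{lattone1}, together with the squeeze $(-\lfloor L\rfloor,\lfloor L\rfloor)^d\subset(-L,L)^d\subset(-\lfloor L\rfloor-1,\lfloor L\rfloor-1)^d$ and the fact that the corresponding normalizing volumes have ratio tending to $1$, already gives convergence continuously in $\e=1/(2L)$; only the off-center boxes used below really need the full $\bbR^d$-action statement.) I would then let $\cA[g]$ be the set of $\o$ such that $\Theta^\e_\o(Q)\to m|Q|\,\bbE_0[g]$ for every bounded open box $Q$ with rational vertices (for signed $g$, the intersection of the analogous sets built from $g^\pm$). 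This is a countable intersection of full-measure sets, so $\cP(\cA[g])=1$; it is Borel since, for fixed $\o$ and $Q$, the map $\e\mapsto\Theta^\e_\o(Q)$ is piecewise polynomial with finitely many pieces on each subinterval $[\e_0,1]$ with $\e_0>0$ (only finitely many $x\in\hat\o$ ever contribute), so its $\limsup$ and $\liminf$ as $\e\da0$ coincide with those over rational $\e$ and thus depend measurably on $\o$; and it is translation invariant, because $\Theta^\e_{\t_a\o}(Q)=\Theta^\e_\o(Q+\e a)$ and for small $\e$ the box $Q+\e a$ lies between two fixed rational boxes slightly inside and slightly outside $Q$, forcing $\t_a\o\in\cA[g]$ whenever $\o\in\cA[g]$.

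The final step promotes box convergence to vague convergence. Fix $\o\in\cA[g]$; by linearity we may assume $g\ge0$. Any compact $K$ lies in a bounded open rational box $R$, and $\limsup_{\e\da0}\Theta^\e_\o(K)\le\lim_{\e\da0}\Theta^\e_\o(R)=m|R|\,\bbE_0[g]<\infty$, so the family $\{\Theta^\e_\o\}_\e$ is uniformly bounded on compacts; hence any sequence $\e_k\da0$ has a subsequence along which $\Theta^{\e_{k_j}}_\o$ converges vaguely to a Radon measure $\Theta$. Combining $\Theta(U)\le\liminf_j\Theta^{\e_{k_j}}_\o(U)$ for open $U$, $\Theta(C)\ge\limsup_j\Theta^{\e_{k_j}}_\o(C)$ for compact $C$, the convergence $\Theta^\e_\o(U)\to m|U|\,\bbE_0[g]$ for open rational boxes, and monotonicity (a compact rational box contains open rational boxes of almost the same volume, and conversely), one finds $\Theta(U)=m|U|\,\bbE_0[g]$ for every open rational box $U$, hence $\Theta=m\,\bbE_0[g]\,dx$ since these boxes form a generating $\pi$-system and $\Theta$ is locally finite. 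As the limit is independent of the subsequence, $\Theta^\e_\o\to m\,\bbE_0[g]\,dx$ vaguely, i.e.\ \eqref{eq-limitone} holds for every $\varphi\in C_c(\bbR^d)$.

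I expect the main obstacle to be precisely the ergodic input of the second step: \eqref{lattone1} only provides convergence along one sequence of origin-centered cubes, and upgrading it to convergence of $\Theta^\e_\o(Q)$ over \emph{all} boxes $Q$ and \emph{all} $\e\da0$ requires the multiparameter ergodic theorem over convex averaging sets for the $\bbR^d$-action, which is elementary only in the centered-cube case. The remaining ingredients --- the reduction to $g\ge0$, the box-to-$C_c$ passage via vague convergence, and the measurability and translation invariance of $\cA[g]$ --- are routine.
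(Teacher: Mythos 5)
This paper does not actually prove Proposition \ref{prop_ergodico}: it is imported wholesale from \cite[Prop.~3.1]{Fhom}, so there is no in-text argument to compare yours against. Your architecture — rewrite the left-hand side as the integral of $\varphi$ against the re-weighted empirical measure $\Theta^\e_\o$, reduce to $g\ge 0$, prove convergence of $\Theta^\e_\o(Q)$ for a countable family of rational boxes by an ergodic theorem, and then upgrade to vague convergence — is the standard route for statements of this type (cf. \cite[Thm.~1.1]{ZP}), and the reduction to $g\ge0$, the construction and translation invariance of $\cA[g]$, and the box-to-$C_c(\bbR^d)$ passage are all sound.

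The one step that is not covered by what you cite is precisely the ergodic input, which you yourself flag as the main obstacle. Since $\varphi$ may be supported away from the origin, you need $|\e^{-1}Q|^{-1}\int_{\e^{-1}Q}g(\t_x\o)\,d\hat\o(x)\to m\,\bbE_0[g]$ for boxes $Q$ with $0\notin\bar Q$; the family $\{\e^{-1}Q\}_{\e\da0}$ is then \emph{not nested} (it escapes to infinity), whereas the ``convex averaging sequences'' of \cite[Chp.~12]{DV} — including the result behind \eqref{lattone1} — are by definition increasing sequences of convex sets. So the theorem you invoke does not apply as stated to the off-center boxes, and asserting that ``the general form over arbitrary convex averaging sets is classical'' leaves the load-bearing step unproved. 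The gap is fillable by a known reduction: for $g\ge0$ and a half-open box $Q=\prod_i(a_i,b_i]$ contained in a single orthant, inclusion--exclusion gives $\mathds{1}_Q=\sum_S(-1)^{|S|}\mathds{1}_{R_S}$ with each $R_S=\prod_i(0,c_i(S)]$ a box having a corner at the origin; the dilations $\e^{-1}R_S$ \emph{are} nested convex averaging sequences, so \cite{DV} applies to each of them, and the continuum of $\e$ together with the open/closed boundary discrepancies is then absorbed by your sandwich argument with slightly smaller and larger rational boxes (this is where $g\ge0$ is genuinely used). Alternatively, one can invoke Tempel'man's or Lindenstrauss's pointwise ergodic theorem for non-nested regular (tempered) families $\{tQ\}_{t\to\infty}$, but that must be cited explicitly rather than attributed to \cite{DV}. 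With this step repaired, your proof is complete.
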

The above proposition (which is the analogous e.g. of \cite[Theorem~1.1]{ZP}) is at the core of 2-scale convergence. It corresponds to  a refined version of ergodicity. The variable $x$ appears in the l.h.s. of \eqref{eq-limitone} at  the macroscopic scale in $\varphi (x)$ and at the microscopic scale in $g(\t_{x/\e} \o )$.

\begin{Definition}\label{sibilla}  Given a function  $g: \O_0 \to [0,+\infty]$ such that  $\bbE_0[ g] <+\infty$,  we define $\cA[g]$ as $\cA[g_*]$ (cf. Proposition \ref{prop_ergodico}), where  $g_*: \O_0 \to \bbR$ is defined as $g$ on $\{ g <+\infty\}$ and as $0$ on $\{ g =+\infty\}$. 
\end{Definition}


\section{The Hilbert space $H^{1,\e}_{0,\o}$ and  the amorphous gradient $\nabla_\e f$} \label{sec_hilbert}
In this section we come back to the Hilbert space $H^{1,\e}_{0,\o}$ introduced in Section \ref{MM}, proving  some properties used there and extending the discussion.
In addition, in Subsection \ref{gnocchi} we collect  some basic   properties of  the amorphous gradient $\nabla_\e$, which will be frequently used in the proof of Theorem \ref{teo2}.

Let $\o \in \O_1$ (cf. \eqref{bianchetto}). Recall Definition \ref{def_hilbert} of $H^{1,\e} _{0,\o} $  and $K^\e_\o$.
As discussed in Section \ref{MM}, if $f:\e \hat \o \cap S\to \bbR$ is bounded, then $f\in L^2(\mu^\e_{\o,\L} )$, $\nabla_\e f \in L^2(\nu^\e_{\o,\L})$ and $\bbL^\e _\o f\in   L^2(\mu^\e_{\o,\L})$. By definition of $\nu^\e_{\o,\L}$, given bounded functions $f,g:\e \hat \o\cap S \to \bbR$, we have 
\be\label{violino5}
\la \nabla_\e f, \nabla_\e g \ra _{L^2(\nu^\e_{\o,\L} )}= \e^{d-2} \sum _{ (x,y) \in \cE_\e  }c_{x/\e,y/\e}(\o) \bigl( f(y)-f(x) \bigr)  \bigl( g(y)-g(x) \bigr)\,.
\en

\begin{Lemma}\label{spiaggia} Let $\o \in \O_1$. Given $f,g: \e \hat \o\cap S \to\bbR$ with 
$f  \in H^{1,\e} _{0,\o} $ and  $g$ bounded, it holds
\be\label{gallina}
\la f, -\bbL^\e_\o g \ra_{L^2(\mu^\e_{\o,\L} )}=
 \frac{1}{2} \la   \nabla _\e f, \nabla_\e g  \ra _{L^2(\nu ^\e_{\o,\L})} \,.
 \en
\end{Lemma}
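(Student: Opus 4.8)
The plan is to prove the identity \eqref{gallina} by a discrete summation-by-parts (Green's identity), being careful about absolute convergence since $f$ need not be bounded (it only lies in $H^{1,\e}_{0,\o}$, hence is supported on $\e\hat\o\cap\L$ up to the boundary data, but that data is constant). First I would record what the two sides are: writing out the definitions of $\bbL^\e_\o$ in \eqref{degregori} and of the inner product on $L^2(\mu^\e_{\o,\L})$, the left-hand side is
\be
\la f, -\bbL^\e_\o g\ra_{L^2(\mu^\e_{\o,\L})} = -\e^{d-2}\sum_{x\in\e\hat\o\cap\L} f(x)\sum_{y\in\e\hat\o\cap S} c_{x/\e,y/\e}(\o)\bigl(g(y)-g(x)\bigr)\,,
\en
while by \eqref{violino5} the right-hand side is $\tfrac12\e^{d-2}\sum_{(x,y)\in\cE_\e} c_{x/\e,y/\e}(\o)\bigl(f(y)-f(x)\bigr)\bigl(g(y)-g(x)\bigr)$.

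Next I would check the needed absolute convergence so that the sums may be freely rearranged. Since $\o\in\O_1$, for each $x\in\e\hat\o\cap\L$ the quantity $\sum_{y\in\e\hat\o\cap S} c_{x/\e,y/\e}(\o) = \l_0(\t_{x/\e}\o)<+\infty$, and $\e\hat\o\cap\L$ is a finite set because $\hat\o$ is locally finite; $g$ is bounded, so the inner double sum on the left is absolutely convergent. For the right-hand side, the pairs in $\cE_\e$ with $x\in\e\hat\o\cap\L$ contribute a finite sum of absolutely convergent series as just argued (using $f$ is bounded on $\e\hat\o\cap\L$, being a finite set), and the remaining pairs have $y\in\e\hat\o\cap\L$ and $x\in\e\hat\o\cap(S\setminus\L)$; by symmetry $c_{x/\e,y/\e}=c_{y/\e,x/\e}$ and again $\sum_x c_{y/\e,x/\e}(\o)=\l_0(\t_{y/\e}\o)<\infty$ for the finitely many such $y$, so this part is absolutely convergent too. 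Hence $\nabla_\e f\in L^2(\nu^\e_{\o,\L})$ is guaranteed and the rearrangements below are legitimate.

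Then comes the summation by parts. Split the edge set: $\cE_\e = \cE_\e^{\rm in}\sqcup\cE_\e^{\rm bd}$ where $\cE_\e^{\rm in}$ is the pairs $(x,y)$ with both endpoints in $\e\hat\o\cap\L$ and $\cE_\e^{\rm bd}$ the pairs with exactly one endpoint in $\L$. For the symmetric bilinear expression on the right I would write $\bigl(f(y)-f(x)\bigr)\bigl(g(y)-g(x)\bigr) = f(y)(g(y)-g(x)) - f(x)(g(y)-g(x))$ and, using symmetry of $c$ to relabel, collect all terms as $-\sum_{x\in\e\hat\o\cap S} f(x)\sum_{y:(x,y)\in\cE_\e} c_{x/\e,y/\e}(\o)(g(y)-g(x))$ — the factor $\tfrac12$ disappears because each unordered edge is seen twice. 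Now here is the crucial point where the boundary condition $f\in H^{1,\e}_{0,\o}$ enters: $f(x)=0$ for every $x\in\e\hat\o\cap(S_-\cup S_+)$, and for $x\in\e\hat\o\cap\L$ one has $\{y:(x,y)\in\cE_\e\} = \{y\in\e\hat\o\cap S: y\ne x\}$, which is exactly the index set of $\bbL^\e_\o g(x)$ (since $c_{x,x}=0$ we may include $y=x$). Therefore the only surviving terms in $-\sum_x f(x)(\cdots)$ are those with $x\in\e\hat\o\cap\L$, and the inner sum over $y$ becomes the full sum defining $\bbL^\e_\o g(x)$; this yields precisely $\la f,-\bbL^\e_\o g\ra_{L^2(\mu^\e_{\o,\L})}$ after multiplying by $\tfrac12\cdot 2\cdot\e^{d-2}$.

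The main obstacle is purely bookkeeping rather than conceptual: making sure that when one expands the symmetric product and relabels $x\leftrightarrow y$ one does not illegitimately split a conditionally convergent sum, and that the vanishing of $f$ on $S_-\cup S_+$ kills exactly the terms indexed by $x\notin\L$ (so that no "boundary flux" term remains, in contrast to the continuum Green formula \eqref{lunare}). I would handle this by doing all manipulations inside an absolutely convergent double sum, as established in the second paragraph, so that Fubini/Tonelli for series applies and the relabeling is valid term by term; the rest is a one-line identification with \eqref{degregori}. Note also that \eqref{dir_form} is the special case $g=f$ of \eqref{gallina} together with $f$ being zero outside $\L$, so this lemma subsumes it.
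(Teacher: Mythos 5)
Your proof is correct and follows essentially the same route as the paper's: a discrete summation by parts exploiting the symmetry $c_{x,y}=c_{y,x}$, the absolute convergence guaranteed by $\o\in\O_1$ (i.e. $\l_0(\t_{x/\e}\o)<\infty$), and the vanishing of $f$ on $S\setminus\L$ to identify the inner sum with $\bbL^\e_\o g$. The paper merely runs the computation in the opposite direction (symmetrizing the left-hand side to produce the factor $\tfrac12$ rather than expanding the right-hand side), which is the same argument.
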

\begin{proof}
Since $f\equiv 0$ outside $\L$ 
we have 
\be\label{consegna}
\begin{split}
\la f, -\bbL^\e_\o g \ra_{L^2(\mu^\e_{\o,\L} )}=
- \sum _{x\in \e \hat \o \cap S} \e^{d-2} f(x) \sum _{y \in \e \hat \o \cap S} c_{x/\e,y/\e}(\o) \bigl( g(y)- g(x) \bigr)\,.
\end{split} 
\en
The r.h.s. is an absolutely convergent series  as $\o \in \O_1$, hence we can freely permute the addenda.
Due to the symmetry of the jump rates, the r.h.s. of \eqref{consegna} equals
\[-\sum _{y\in \e \hat \o \cap S} \e^{d-2} f(y) \sum _{x \in \e \hat \o \cap S} c_{x/\e,y/\e}(\o) \bigl( g(x)- g(y) \bigr)\,.
\] By summing  the above expression with  the r.h.s. of \eqref{consegna}, we get
\be\label{siriano}
\la f, -\bbL^\e_\o g \ra_{L^2(\mu^\e_{\o,\L} )}=\frac{1}{2}
  \e^{d-2} \sum _{x\in \e \hat \o \cap S }
  \sum_{y\in \e \hat \o \cap S} c_{x/\e,y/\e}(\o) \bigl( f(y)-f(x) \bigr)  \bigl( g(y)-g(x) \bigr)\,.
\en
As the generic addendum in the r.h.s. is zero if $(x,y)\not \in \cE_\e $ since $f\equiv 0$ on $S\setminus \L$,  by \eqref{violino5} we get \eqref{gallina}.
\end{proof}
%
%
%
%
%
 \begin{Warning}\label{gubana_piena}
 In the following lemma, and in the rest,  when 
 considering $\o \in \O_1$ we will restrict (without further mention) to $\e$ small enough to 
 satisfy \eqref{pienezza} with $\ell=\e^{-1}$.
  \end{Warning}
\begin{Lemma}\label{benposto} Given   $\o \in \O_1$, 
 the following holds:
\begin{itemize}
\item[(i)]
There is a unique function $V_\e \in K^\e _\o$ such that $\bbL^\e_\o V_\e (x)=0$ for all $x \in \e\hat \o \cap \L$.
\item[(ii)]  $V_\e$ is the unique  function $v\in K^\e_\o  $ such that 
$
\la  \nabla_\e u,\nabla_\e v\ra _{L^2(\nu ^\e_{\o,\L} ) }=0$ for all $u \in H^{1,\e}_{0,\o}$.
\item[(iii)] $V_\e$ is the unique minimizer of  the following variational problem:
\be\label{avengers}
\inf \Big\{  \la \nabla_\e v, \nabla _\e v\ra _{L^2(\nu^\e_{\o,\L})}\,\Big{|}\, v \in K^\e_\o\Big\}\,. 
\en
\end{itemize}
\end{Lemma}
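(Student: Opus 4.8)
The plan is to recognize the claim as the standard variational/projection characterization of the solution to a linear finite (or locally finite) system, applied to the Dirichlet-type problem defining $V_\e$, and to reduce (i)--(iii) to one another by elementary Hilbert-space arguments in $H^{1,\e}_{0,\o}$. Throughout, I fix $\o\in\O_1$ and $\e$ small enough that \eqref{pienezza} holds (Warning~\ref{gubana_piena}), so that all the bilinear forms below are finite: for bounded $v$ one has $v\in L^2(\mu^\e_{\o,\L})$, $\nabla_\e v\in L^2(\nu^\e_{\o,\L})$ and $\bbL^\e_\o v\in L^2(\mu^\e_{\o,\L})$, as recalled in Section~\ref{sec_hilbert}.

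First I would handle (iii)$\Leftrightarrow$(ii): the functional $J(v):=\la\nabla_\e v,\nabla_\e v\ra_{L^2(\nu^\e_{\o,\L})}$ is the square of a seminorm on the affine space $K^\e_\o=\psi+H^{1,\e}_{0,\o}$. Writing $v=\psi+f$ with $f\in H^{1,\e}_{0,\o}$ and expanding the bilinear form, $J(\psi+f)=J(\psi)+2\la\nabla_\e\psi,\nabla_\e f\ra_{L^2(\nu^\e_{\o,\L})}+\la\nabla_\e f,\nabla_\e f\ra_{L^2(\nu^\e_{\o,\L})}$. The obstacle here is that $\la\nabla_\e\cdot,\nabla_\e\cdot\ra_{L^2(\nu^\e_{\o,\L})}$ need not be positive definite on all of $H^{1,\e}_{0,\o}$ a priori; but in fact it is: if $f\in H^{1,\e}_{0,\o}$ and $\la\nabla_\e f,\nabla_\e f\ra_{L^2(\nu^\e_{\o,\L})}=0$, then $f(y)=f(x)$ for every edge $\{x,y\}\in\bbB^\o_\ell$, and since the subgraph on $\hat\o\cap S_\ell$ is connected from $S_-$ to every vertex of $\L_\ell$ (this uses \eqref{pienezza} together with (A7), so that every $x\in\hat\o\cap S_\ell$ lies in an edge of $\bbB^\o_\ell$), $f$ is constant along each connected component meeting $\L$ and, being $0$ on $\e\hat\o\cap S_-$, must vanish identically; I would cite the corresponding uniqueness discussion already promised in Section~\ref{sec_hilbert}. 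Hence $J$ is a strictly convex (in the quotient) quadratic, its minimizer over the affine space $K^\e_\o$ is unique, and by the standard first-order optimality condition it is characterized by $\la\nabla_\e f^\star,\nabla_\e u\ra_{L^2(\nu^\e_{\o,\L})}=-\la\nabla_\e\psi,\nabla_\e u\ra_{L^2(\nu^\e_{\o,\L})}$ for all $u\in H^{1,\e}_{0,\o}$, i.e.\ $\la\nabla_\e v^\star,\nabla_\e u\ra_{L^2(\nu^\e_{\o,\L})}=0$ for all $u\in H^{1,\e}_{0,\o}$ where $v^\star=\psi+f^\star$. Existence of $f^\star$ is just the projection theorem (Lax--Milgram) applied to the form $\la\nabla_\e\cdot,\nabla_\e\cdot\ra_{L^2(\nu^\e_{\o,\L})}$ on the Hilbert space $H^{1,\e}_{0,\o}$, which is coercive by the positive-definiteness just established (on the finite-dimensional-modulo-constants space, or directly since $\|\cdot\|_{H^{1,\e}_{0,\o}}$ controls it and conversely by Poincaré); this proves (iii)$\Leftrightarrow$(ii) and the existence and uniqueness in both.

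Next I would show (ii)$\Leftrightarrow$(i). Take $v\in K^\e_\o$ and any $u\in H^{1,\e}_{0,\o}$. Since $u\equiv 0$ on $\e\hat\o\cap(S_-\cup S_+)$ and $v$ is bounded, Lemma~\ref{spiaggia} (with $f=u$, $g=v$) gives
\be
\la u,-\bbL^\e_\o v\ra_{L^2(\mu^\e_{\o,\L})}=\tfrac12\la\nabla_\e u,\nabla_\e v\ra_{L^2(\nu^\e_{\o,\L})}\,.
\en
Now $\la u,-\bbL^\e_\o v\ra_{L^2(\mu^\e_{\o,\L})}=\e^d\sum_{x\in\e\hat\o\cap\L}u(x)\big(-\bbL^\e_\o v(x)\big)$ because $u$ vanishes outside $\L$. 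If $\bbL^\e_\o v(x)=0$ for all $x\in\e\hat\o\cap\L$ then this is $0$ for every $u$, giving (ii). Conversely, if the left-hand side vanishes for all $u\in H^{1,\e}_{0,\o}$, choosing $u=\mathds{1}_{\{x_0\}}$ for each fixed $x_0\in\e\hat\o\cap\L$ (which lies in $H^{1,\e}_{0,\o}$ since $\L\cap(S_-\cup S_+)=\emptyset$) forces $\bbL^\e_\o v(x_0)=0$; this gives (i). Combined with the existence/uniqueness from the previous paragraph, there is a unique $V_\e\in K^\e_\o$ with $\bbL^\e_\o V_\e\equiv 0$ on $\e\hat\o\cap\L$, and it coincides with the minimizer in (iii) and the solution in (ii). Finally I would note that this $V_\e$ is exactly the rescaling of $V^\o_\ell$: indeed $V^\o_\ell$ solves \eqref{thenero}--\eqref{thebianco}, and unwinding the definitions \eqref{degregori} of $\bbL^\e_\o$ and \eqref{def_psi} of $\psi$ (so that $K^\e_\o$ is the set of $f$ with $f\equiv 0$ on $\e\hat\o\cap S_-$ and $f\equiv 1$ on $\e\hat\o\cap S_+$, as remarked after Definition~\ref{def_hilbert}) shows $V_\e(\e x)=V^\o_\ell(x)$ satisfies (i), whence by uniqueness it is the function under discussion. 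The main obstacle is the one flagged above — ensuring the energy form is genuinely coercive (equivalently, that the only $f\in H^{1,\e}_{0,\o}$ with zero energy is $f\equiv0$) so that Lax--Milgram/the projection theorem applies — and this is precisely where (A7) and the finiteness in \eqref{pienezza} enter, via connectivity of the relevant subgraph.
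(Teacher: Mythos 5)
Your proposal is correct and follows essentially the same route as the paper: both reduce the problem to the bilinear form $a(f,g)=\tfrac12\la \nabla_\e f,\nabla_\e g\ra_{L^2(\nu^\e_{\o,\L})}$ on the finite-dimensional space $H^{1,\e}_{0,\o}$, establish positive definiteness (hence coercivity) from (A7) and \eqref{pienezza}, apply Lax--Milgram/the projection theorem, and pass between (i) and (ii) via Lemma \ref{spiaggia}. Your connectivity argument for $a(f,f)=0\Rightarrow f\equiv 0$ and your test with $u=\mathds{1}_{\{x_0\}}$ just make explicit what the paper leaves implicit.
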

\begin{proof} On the finite dimensional Hilbert space $ H^{1,\e}_{0,\o}$
we consider the bilinear form $a(f,g):= \frac{1}{2}\la \nabla _\e f, \nabla _\e g \ra_{L^2(\nu^\e_{\o,\L})}$. Trivially, $a(\cdot, \cdot)$ is a continuous symmetric form. Moreover, by Assumption (A7) and \eqref{pienezza}, it holds $a(f,f)=0$ 
if and only if $f\equiv 0$  (see Warning \ref{gubana_piena}).  As a consequence, 
the bilinear form $a(\cdot, \cdot)$ is also coercive. 
By writing
$V_\e= f_\e +\psi$, the function $V_\e$ in Item (i) is the only one such that   $f_\e \in H^{1,\e}_{0,\o}$ and 
\be\label{sibillino}
\bbL^\e_\o f_\e( x) = - \bbL^\e _\o \psi 
\qquad \forall   x \in \e \hat \o \cap \L\,.\en
Due to   Lemma \ref{spiaggia}   $f_\e\in H^{1,\e}_{0,\o}$ satisfying \eqref{sibillino}  can be characterized also as the solution  in  $H^{1,\e}_{0,\o}$  of the problem 
\be \label{marino}
a( f_\e, u)= -
 \frac{1}{2}\la \nabla _\e \psi , \nabla _\e u \ra_{L^2(\nu^\e_{\o,\L})}
 \qquad \forall u \in  H^{1,\e}_{0,\o}\,.
\en
By the Lax--Milgram theorem we conclude that there exists a unique function $f_\e$ satisfying \eqref{marino}, thus implying Item (i).  
Since $a( f_\e, u)= \frac{1}{2}\la \nabla_\e f_\e , \nabla_\e u \ra_{L^2(\nu^\e_{\o,\L})} 
$, the uniqueness of the solution $f_\e$ of \eqref{marino} corresponds to Item (ii).
Moreover, always by  the  Lax--Milgram theorem,  $f_\e$ is the unique minimizer of  the functional 
$H^{1,\e}_{0,\o}    \ni v \mapsto \frac{1}{2} a( v,v) + \frac{1}{2}\la \nabla _\e \psi , \nabla _\e v \ra_{L^2(\nu^\e_{\o,\L})}$, and therefore of the functional 
$H^{1,\e}_{0,\o}    \ni v \mapsto\frac{1}{4} \la  
  \nabla_\e (v+\psi) , \nabla_\e (v+\psi) \ra _{L^2(\nu^\e_{\o,\L})}$.
 This proves Item (iii).
 \end{proof}

\begin{Remark}\label{fiorellino3} As  $V_\e$ is  ``harmonic'' on $\e \hat \o \cap \L$ (cf. Lemma \ref{benposto}--(i)) and $\o \in \O_1$,  $V_\e$ has values in $[0,1]$.
\end{Remark}


\begin{Lemma}\label{paletta}  There exists a translation invariant Borel subset $\O_2\subset \O_1$ such that $\cP(\O_2)=1$ and,   for all $\o\in \O_2$,
\begin{align} 
& \limsup_{\e \da 0}   \| \psi\|_{L^2(  \mu^\e_{\o, \L})} <+\infty \,, \qquad\limsup_{\e \da 0}   \|\nabla_\e \psi\|_{L^2(\nu^\e_{\o,\L})} < +\infty\,,\label{parasole0}
\\
&  \limsup_{\e \da 0}   \| V_\e \|_{L^2(\mu^\e_{\o,\L})} <+\infty \,, \qquad\limsup_{\e \da 0}   \|\nabla_\e V_\e\|_{L^2(\nu^\e_{\o,\L})} < +\infty\,.\label{parasole}
\end{align}
\end{Lemma}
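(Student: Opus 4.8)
The plan is to prove the four bounds in \eqref{parasole0}--\eqref{parasole} by first establishing the two statements about $\psi$ directly from ergodicity, and then deducing the two statements about $V_\e$ by a variational comparison. For the $\psi$ bounds, observe that $\psi$ is bounded (it takes values in $[0,1]$) and that $\nabla_\e\psi(x,z)$ is also bounded uniformly in $\e$: indeed $\psi$ is Lipschitz on $S$ with $|\psi(u)-\psi(v)|\le|u-v|$, so $|\nabla_\e\psi(x,z)|=\e^{-1}|\psi(x+\e z)-\psi(x)|\le|z|$. Hence
\[
\|\psi\|_{L^2(\mu^\e_{\o,\L})}^2\le \mu^\e_{\o,\L}(\L)=\e^d\,\hat\o(\e^{-1}\L\cap\L/\e)\quad\text{(the number of points, times }\e^d\text{)},
\]
and by \eqref{lattone1} (or Proposition \ref{prop_ergodico} with $g\equiv 1$ and a suitable $\varphi$) this converges to $m|\L|=m$ for $\o$ in a translation invariant full-measure set. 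Similarly, using $|\nabla_\e\psi(x,z)|\le|z|$ on $\cE_\e$ and the fact that $\psi$ is supported on $\L\cup S_-\cup S_+$ with the relevant pairs intersecting $\L$,
\[
\|\nabla_\e\psi\|_{L^2(\nu^\e_{\o,\L})}^2=\e^{d-2}\!\!\sum_{(x,y)\in\cE_\e}\!\! c_{x/\e,y/\e}(\o)\bigl(\psi(y)-\psi(x)\bigr)^2\le \e^{d}\!\!\sum_{(x,y)\in\cE_\e}\!\! c_{x/\e,y/\e}(\o)\,\Bigl|\tfrac{y-x}{\e}\Bigr|^2,
\]
and the right-hand side is, up to boundary effects, $\int d\mu^\e_\o(x)\,\mathds{1}_{\e^{-1}x\in\text{(neighborhood of }\L)}\,\lambda_2(\t_{x/\e}\o)$; since $\lambda_2\in L^1(\cP_0)$ by (A5), Proposition \ref{prop_ergodico} applied to $g=\lambda_2$ (in the form of Definition \ref{sibilla}) gives a finite limit on a translation invariant full-measure set. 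One must be slightly careful that a bounded pair $(x,y)$ with $\{x,y\}\cap\L\neq\emptyset$ but $x\notin\L$ still contributes; this is handled by enlarging the indicator to a box $\L'\supsetneq\L$ containing all such $x$ (using condition \eqref{downtown}/(A6) to control long edges reaching into $\L$ from far away, exactly as in the analogous estimates this paper will need elsewhere), or more simply by summing the covariant quantity over a slightly larger region.

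For the $V_\e$ bounds, the key point is Lemma \ref{benposto}(iii): $V_\e$ minimizes $v\mapsto\la\nabla_\e v,\nabla_\e v\ra_{L^2(\nu^\e_{\o,\L})}$ over $v\in K^\e_\o$, and $\psi\in K^\e_\o$ (its restriction to $\e\hat\o\cap S$ lies in $K^\e_\o$ since $\psi\equiv 0$ on $S_-$ and $\psi\equiv 1$ on $S_+$). Therefore
\[
\|\nabla_\e V_\e\|_{L^2(\nu^\e_{\o,\L})}^2\le\|\nabla_\e\psi\|_{L^2(\nu^\e_{\o,\L})}^2,
\]
so the second bound in \eqref{parasole} follows immediately from the second bound in \eqref{parasole0}. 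For the $L^2(\mu^\e_{\o,\L})$ bound on $V_\e$, use Remark \ref{fiorellino3}: $V_\e$ takes values in $[0,1]$, hence $\|V_\e\|_{L^2(\mu^\e_{\o,\L})}^2\le\mu^\e_{\o,\L}(\L)$, which is bounded as already shown. (Alternatively one could invoke the discrete Poincaré-type bound, but the pointwise bound $0\le V_\e\le 1$ is cleaner and already available.)

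Finally, set $\O_2$ to be the intersection of $\O_1$, the set $\O'$ from \eqref{pienezza}, the ergodicity set for $g\equiv 1$ (for $\mu^\e_{\o,\L}(\L)\to m$), and $\cA[\lambda_2]$ from Definition \ref{sibilla}; each of these is translation invariant with $\cP$-measure one, so $\O_2$ is translation invariant with $\cP(\O_2)=1$, and on $\O_2$ all four $\limsup$'s are finite by the above. The main obstacle I anticipate is the bookkeeping in the $\|\nabla_\e\psi\|$ estimate: one needs to ensure that the sum over $\cE_\e$ — whose pairs are indexed by edges $\{x,y\}$ merely \emph{intersecting} $\L$, so one endpoint may be far outside $\L$ — can still be dominated by an ergodic average of the covariant, integrable quantity $\lambda_2$. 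This requires either restricting attention to a fixed enlarged box and using (A6) (specifically \eqref{downtown}) to discard the contribution of very long edges, or a symmetrization argument exchanging the roles of $x$ and $y$ as in Lemma \ref{lucertola1}; both are routine given the assumptions but need to be spelled out carefully.
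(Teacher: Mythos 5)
Your proposal is correct and follows essentially the same route as the paper: bound $\|\psi\|$ and $\|V_\e\|$ by the total mass $\mu^\e_{\o,\L}(\L)\to m$ using that both take values in $[0,1]$, control $\|\nabla_\e\psi\|$ by an ergodic average of $\l_2$, and get $\|\nabla_\e V_\e\|\le\|\nabla_\e\psi\|$ from the minimization property in Lemma \ref{benposto}(iii). The bookkeeping issue you flag is resolved by the second option you mention, and it is simpler than you fear: since every pair in $\cE_\e$ has at least one endpoint in $\L$ and the summand $c_{x/\e,y/\e}(\o)(y_1-x_1)^2$ is symmetric in $(x,y)$ by (A4), the sum over $\cE_\e$ is at most twice the sum over $x\in\e\hat\o\cap\L$, $y\in\e\hat\o\cap S$, which is dominated by $2\int_\L \mu^\e_\o(dx)\,\l_2(\t_{x/\e}\o)$ — no appeal to \eqref{downtown} or to Lemma \ref{lucertola1} is needed.
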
 
\begin{proof}
By Proposition \ref{prop_ergodico} applied with suitable test functions $\varphi$, there exists a
 translation invariant Borel set $\O_2\subset \O_1$ such that 
 $\lim_{\e  \da 0}\mu^\e _\o (\L) = m $ and $\lim_{\e \da 0} \int_\L \mu^\e_\o (dx) \l_2 (\t _{x/\e}\o) = \bbE_0[\l_2] $ for any $\o \in \O_2$.

Let us take $\o \in \O_2$. Since  $\psi, V_\e$ have value in $[0,1]$ and $ \mu^\e_{\o,\L}$ has mass $\mu^\e_\o (\L)\to m$, we get the first bounds in \eqref{parasole0} and  \eqref{parasole}.

 Let us prove that $ \limsup_{\e \da 0}   \|\nabla_\e \psi\|_{L^2(\nu^\e_{\o,\L})} < +\infty$.
 We have  (recall \eqref{violino5})
\be\label{carretto}
\begin{split}
 \|  \nabla_\e \psi \|^2  _{L^2(\nu^\e_{\o,\L})}& =\e^{d-2} \sum _{ (x,y) \in \cE_\e } c_{x/\e,y/\e}(\o) \left( \psi( y) -\psi (x) \right)^2\\
 & \leq 
 \e^{d-2} \sum _{ (x,y)  \in \cE_\e } c_{x/\e,y/\e}(\o) \left( y_1-x_1\right)^2 \\
 &\leq 2 \e^{d-2} \sum _{x\in \e\hat \o  \cap \L} \sum _{ y \in \e \hat \o \cap S} c_{x/\e,y/\e}(\o)  \left(y_1 -x_1 \right)^2\,.
  \end{split}
\en
We can rewrite the last expression as \[2 \e^d  \sum _{x\in \hat \o  \cap (\e^{-1} \L) } \sum _{ y \in  \hat \o \cap(\e^{-1} S)} c_{x ,y}(\o)  (y_1-x_1)^2
 \,,\]
which is upper  bounded by $ 2\e^d  \sum _{x\in \hat \o  \cap (\e^{-1} \L) } \l_2(\t_x \o)=2\int_\L \mu^\e_\o (dx) \l_2(\t _{x/\e} \o) $. The last integral  converges to $2\bbE_0 [\l_2]<+\infty $ as $\o \in \O_2$. This concludes the proof that 
$ \limsup_{\e \da 0}   \|\nabla_\e \psi\|_{L^2(\nu^\e_{\o,\L})} < +\infty$.

 Since $V_\e $ minimizes \eqref{avengers}, 
 we have  $  \|\nabla_\e V_\e\|_{L^2(\nu^\e_{\o,\L})}  \leq   \|\nabla_\e \psi \|_{L^2(\nu^\e_{\o,\L})} $. Hence 
   $\limsup_{\e \da 0} \|\nabla_\e V_\e\|_{L^2(\nu^\e_{\o,\L}) }  <+\infty $ 
   by the second bound in  \eqref{parasole0}.
\end{proof}

\subsection{Some properties of the amorphous gradient $\nabla_\e$}\label{gnocchi} In Section \ref{MM} we have defined $\nabla _\e f $ for functions $f: \e \hat \o\cap S\to \bbR$. The definition  can by extended by replacing $S$ with any set $A\subset \bbR^d$.
Given $f,g : \e \hat \o \to \bbR$, it is simple to check the following Leibniz rule:
 \begin{equation}\label{leibniz} \nabla _\e  (fg)(x,z)
   =\nabla _\e  f (x, z ) g (x )+ f (x+\e z ) \nabla _\e g  ( x, z)\,.
\end{equation}
  Let $\varphi \in C^1_c(\bbR^d)$. 
 Let $\ell$ be such that $\varphi (x)=0$ if $|x| \geq \ell$. Fix   $\phi \in C_c(\bbR^d)$ with values in $[0,1]$, such that $ \phi(x)=1$ for $|x| \leq \ell$ and $\phi(x)=0$ for $|x| \geq \ell+1$. Since $\nabla _\e \varphi(x,z)=0$ if $|x| \geq \ell$ and $|x+\e z|\geq \ell$, by the    mean value theorem we conclude that 
\be\label{paradiso}
\bigl | \nabla _\e \varphi(x,z) \bigr | \leq \| \nabla \varphi \|_\infty |z| \bigl( \phi(x)+ \phi(x+\e z) \bigr) \,.
\en
If in addition   $\varphi \in C_c^2(\bbR^d)$,    by Taylor expansion  $|\nabla_\e \varphi (x,z) -\nabla \varphi(x) \cdot z| \leq  \e C(\varphi)  |z|^2$ for some constant $C(\varphi)$ depending only on $\varphi$.
Note  that 
$\nabla_\e \varphi (x,z) -\nabla \varphi(x) \cdot z=0$ if $|x| \geq \ell$ and $|x+\e z| \geq \ell$. Hence we get   that 
\be \label{mirra}
\bigl| \nabla _\e \varphi (x,z) - \nabla \varphi (x) \cdot z\bigr | \leq \e C(\varphi) |z|^2 \bigl( \phi(x) + \phi(x+\e z) \bigr)\,.
\en

%
%
%
%

\section{Proof of Theorem \ref{teo1} when  $D_{1,1}=0$}\label{renato_zero}
We need to prove \eqref{mortisia}, i.e. that   $\cP$--a.s. 
$\lim_{\e \da 0}  \la \nabla_\e V_\e, \nabla_\e V_\e\ra_{L^2(\nu^\e_{\o,\L})}=0$. As $D_{1,1}=0$ and by \eqref{def_D},
given $\d>0$ we can  fix $f\in L^\infty(\cP_0)$ such that 
\be\label{moonstar}
\bbE_0\Big[\int d\hat \o (x) c_{0,x}(\o) \left (x_1 - \nabla f (\o, x) 
\right)^2\Big]\leq \d\,.
\en
Given $\e>0$ we define the function $v_\e: \e \hat \o \cap S\to \bbR$ as
\be
v_\e(x):=
\begin{cases}
\psi(x) + \e f (\t_{x/\e}\o ) & \text{ if } x\in \L\,,\\
0 & \text{ if } x \in S_-\,,\\
1 & \text{ if } x\in S_+\,.
\end{cases} 
\en
By Lemma \ref{benposto}-(iii) it is enough to prove that  $\lim_{\e \da 0}  \la \nabla_\e v_\e, \nabla_\e v_\e\ra_{L^2(\nu^\e_{\o,\L})}=0$ $\cP$--a.s.\,.
We write 
\begin{equation}
  \frac{1}{2}\la \nabla_\e v_\e, \nabla_\e v_\e\ra_{L^2(\nu^\e_{\o,\L})}
 \leq \e^{d-2}
\sum _{x\in  \hat \o \cap  \e^{-1}\L}  \sum _{y \in  \hat \o \cap  \e^{-1} S} c_{x, y}(\o)
\bigl( v_\e (\e y) - v_\e(\e x) \bigr)^2\,.
\end{equation}
We split the sum in the r.h.s.  into three contributions $C(\e)$, $ C_-(\e)$ and $C_+(\e)$, corresponding respectively to the cases $ y \in \hat \o \cap \e^{-1} \L$,  $ y \in \hat \o \cap \e^{-1} S_-$ and
$ y \in \hat \o \cap \e^{-1} S_+$, while in all the above contributions $x$ varies among $\hat \o \cap  \e^{-1}\L$.

If $x,y\in \hat \o \cap \e^{-1} \L$, then $v_\e(y)-v_\e(x)= \e( y_1-x_1- \nabla f( \t_x \o, y-x)) $. Hence, we can bound
\be
C(\e) \leq \e^d \sum _{x\in  \hat \o \cap  \e^{-1}\L} \; \sum _{y \in  \hat \o } c_{x, y}(\o)( y_1-x_1- \nabla f( \t_x \o, y-x)) ^2 \,.
\en
By ergodicity  (cf. \eqref{lattone1}, Proposition \ref{prop_ergodico}) 
the r.h.s.  converges $\cP$--a.s.  to  the l.h.s of \eqref{moonstar}, and therefore it is bounded by $\d$ $\cP$--a.s. Hence, $\varlimsup _{\e \da 0} C(\e) \leq \d$.

We now consider $C_-(\e)$ and prove that $\lim_{\e \da 0} C_-(\e)=0$.
If $x\in  \hat \o \cap \e^{-1} \L$ and $y \in \hat \o \cap \e^{-1} S_-$, then $(v_\e(x)-v_\e(y))^2 = \e^2(  x_1 - f (\t_x \o))^2 \leq 2 \e^2 x_1^2 + 2 \e^2 \|f\|_\infty^2 \leq 2 \e^2 (x_1-y_1)^2 + 2 \e^2 \|f\|_\infty^2$. Hence it remains to show  that 
\be\label{nike}
\e^d \sum _{x\in  \hat \o \cap  \e^{-1}\L} \; \sum _{y \in  \hat \o \cap  \e^{-1} S_-} c_{x, y}(\o)[( x_1-y_1)^2+ 1]
\en
goes to zero as $\e\da 0$.   
Given $\rho\in (0,1/2)$ we set $\L_\rho:=(-\rho, \rho)^d$.  We denote by 
$A_1(\rho, \e)$    the sum in \eqref{nike} restricted to $x\in \hat \o \cap \e^{-1}\L_\rho $ and  $y \in  \hat \o \cap  \e^{-1} S_-$. We denote by  $A_2(\rho, \e)$ the sum coming from the remaining addenda  so that \eqref{nike} equals $A_1(\rho, \e)+A_2(\rho, \e)$.
Given  $x,y$ as in  $A_1(\rho, \e)$, it holds $x_1-y_1\geq (1/2-\rho)/\e\geq 1$ for $\e $ small enough. In this case,       we can bound
$ c_{x, y}(\o)[( x_1-y_1)^2+ 1]\leq  C c_{x,y}(\o)^\a $, for some   universal positive constant $C$. Indeed, due to \eqref{downtown}, $\lim _{\ell \to +\infty} \ell^2 \rho(\ell)<+\infty$ where 
$\rho (\ell) := \sup_{\o \in \O_0} \sup_{z\in \hat \o: |z|=\ell}  c_{0,z} (\o )^{1-\a}$.
Due to the above observations,
\be
A_1(\rho,\e)\leq   C(\o) \e^d \sum _{x\in  \hat \o \cap  \e^{-1}\L} \; \sum _{y \in  \hat \o } c_{x, y}(\o) ^\a \mathds{1} ( |x-y| \geq \rho /\e)\,.
\en
By the ergodic theorem and \eqref{sorpresina}, we  get that $\lim_{\e \da 0} A_1 (\rho, \e)=0$ $\cP$--a.s.
We move to $A_2(\rho, \e)$.  We can bound $A_2(\rho, \e)$ by 
\be\label{nike7}
\e^d \sum _{x\in  \hat \o \cap  \e^{-1}(\L\setminus \L_\rho)} \sum _{y \in  \hat \o } c_{x, y}(\o)[(x_1-y_1)^2+ 1]\,.
\en
By Proposition \ref{prop_ergodico} with suitable test functions, we get that \eqref{nike7} converges as $\e \da 0$ to $\bbE_0 [ \l_2+ \l_0]  \ell (\L\setminus \L_\rho))$, where here $\ell(\cdot)$ denotes the Lebesgue measure. To conclude the proof that $\lim_{\e\da 0} C_-(\e)=0$,  it is therefore enough to take the limit $\rho \uparrow 1/2$.

By the same arguments used for $C_-(\e)$, one proves that $\lim_{\e\da 0} C_+(\e)=0$.


\section{Square integrable forms and effective diffusion matrix}\label{sec_review}
\begin{Warning} From this section, until  Section \ref{limitone} included, we assume that $D_{1,1}>0$. In particular, $d_*\geq 1$ is defined according to Warning \ref{stellina5}.
\end{Warning}

As typical  in homogenization theory \cite{JKO}, the variational formula \eqref{def_D} defining the effective diffusion matrix $D$  admits 
a geometrical interpretation  in the Hilbert space of square integrable forms.
We  recall here this interpretation. We also   collect  some  facts taken from \cite{Fhom}. They are  mainly  an adaptation to the present contest of very general facts (see e.g. \cite{JKO,ZP}) and can be easily checked (all proofs have  been provided  in \cite{Fhom}).

\subsection{Square integrable forms}
We define $\nu$ as the Radon measure on $\O \times \bbR^d$ such that 
\begin{equation}\label{labirinto}
 \int d \nu (\o, z) g (\o, z) = \int  d \cP_0 (\o)\int d  \hat \o (z) c_{0,z}(\o) g( \o, z) 
 \end{equation} 
 for any nonnegative Borel function $g(\o,z)$. We point out that $\nu$ has finite total mass since 
$ \nu(\O\times \bbR^d )=\bbE_0[\l_0]<+\infty$.
  Elements of  $L^2 ( \nu)$
 are called \emph{square integrable forms}.

 \smallskip
 
 Given a  function $u:\O_0\to \bbR$,  its gradient  $\nabla u: \O \times \bbR^d \to \bbR$ is defined  as 
 \begin{equation}\label{cantone}
 \nabla u (\o, z):= u (\t_z \o)-u (\o)\,.
 \end{equation}
 If $u$ is defined  $\cP_0$--a.s., then $\nabla u$ is well defined $\nu$--a.s. by Lemma \ref{matteo}.
 If $u $ is bounded and measurable, then $\nabla u \in L^2(\nu)$.
The subspace of \emph{potential forms} $L^2_{\rm pot} (\nu)$ is defined as   the following closure in $L^2(\nu)$:
 \[ L^2_{\rm pot} (\nu) :=\overline{ \{ \nabla u\,:\, u \text{ is  bounded and measurable} \}}\,.
 \]
 The subspace of \emph{solenoidal forms} $L^2_{\rm sol} (\nu)$ is defined as the orthogonal complement of $L^2_{\rm pot} (\nu)$ in $L^2(\nu)$.

\begin{Definition}\label{def_div}
Given a square integrable form $v\in L^2(\nu)$  we define its divergence  ${\rm div} \, v \in L^1(\cP_0)$ as 
\begin{equation}\label{emma}
{\rm div}\, v(\o)= \int d \hat{\o} (z) c_{0,z}(\o) ( v(\o,z)-  v(\t_z \o, -z) )\,.
\end{equation}
\end{Definition}
The r.h.s. of \eqref{emma} is well defined since it corresponds  to an  absolutely convergent series by Lemma \ref{lucertola1}.

 For any  $v \in L^2(\nu)$ and any  bounded and measurable function $u:\O \to \bbR$, it holds (cf. \cite[Lemma 5.4]{Fhom})
\begin{equation}\label{italia}
\int  d \cP_0(\o)   {\rm div} \,v(\o)  u (\o)= - \int d \nu(\o, z) v( \o, z) \nabla u (\o, z) \,.
\end{equation}
As a consequence we have that, given    $v\in L^2(\nu)$, $v\in L^2_{\rm sol}(\nu)$ if and only if ${\rm div}\, v=0$ $\cP_0$--a.s. (cf. \cite[Cor.~5.5]{Fhom}).
We also have (cf. \cite[Lemma 5.8]{Fhom}):
\begin{Lemma}\label{santanna} 
 The  functions $g\in L^2(\cP_0)$ of the form $g ={\rm div}\, v $ with $v\in L^2(\nu)$ are dense in $\{w \in L^2(\cP_0)\,:\, \bbE_0[w]=0\}$.
\end{Lemma}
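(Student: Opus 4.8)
The statement to establish is Lemma \ref{santanna}: divergences $\{\mathrm{div}\,v : v\in L^2(\nu)\}$ are dense in $W:=\{w\in L^2(\cP_0):\bbE_0[w]=0\}$. I would argue by duality in the Hilbert space $L^2(\cP_0)$. Let $\cD$ denote the linear subspace of $L^2(\cP_0)$ consisting of all $\mathrm{div}\,v$ with $v\in L^2(\nu)$; note $\cD\subset W$ since, taking $u\equiv 1$ in the integration-by-parts identity \eqref{italia}, one gets $\bbE_0[\mathrm{div}\,v]=-\int d\nu\,v\cdot\nabla 1=0$. To prove $\overline{\cD}=W$ it suffices to show that any $u\in L^2(\cP_0)$ orthogonal to $\cD$ lies in $W^\perp$, i.e.\ is $\cP_0$-a.s.\ a constant.

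\textbf{Key steps.} First I would take $u\in L^2(\cP_0)$ with $\int d\cP_0(\o)\,(\mathrm{div}\,v)(\o)\,u(\o)=0$ for every $v\in L^2(\nu)$. The obstacle here is that the integration-by-parts formula \eqref{italia} is stated only for bounded measurable $u$, so I cannot immediately write $\int(\mathrm{div}\,v)u = -\int v\cdot\nabla u$. The fix is a truncation argument: replace $u$ by $u_n:=(u\wedge n)\vee(-n)$, which is bounded, apply \eqref{italia} to get $\int d\nu\,v\cdot\nabla u_n = -\int d\cP_0\,(\mathrm{div}\,v)u_n$, and control the difference between $u_n$ and $u$ on the right-hand side using $u\in L^2(\cP_0)$ together with $\mathrm{div}\,v\in L^1(\cP_0)\cap\big(\text{something better}\big)$; more carefully, since the pairing $\langle \mathrm{div}\,v, u\rangle$ with $u\in L^2$ needs $\mathrm{div}\,v\in L^2(\cP_0)$, I should at the outset restrict the test forms $v$ to a convenient dense subclass — e.g.\ bounded compactly supported forms $v(\o,z)$ — for which $\mathrm{div}\,v$ is manifestly in $L^2(\cP_0)$ (indeed bounded), and show density of the divergences of \emph{this} subclass. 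For such $v$, picking $u_n$ bounded, \eqref{italia} gives $\int d\nu(\o,z)\,v(\o,z)\,\nabla u_n(\o,z) = -\int d\cP_0\,(\mathrm{div}\,v)\,u_n$, and letting $n\to\infty$ the right side tends to $-\int d\cP_0\,(\mathrm{div}\,v)\,u=0$ by dominated convergence ($\mathrm{div}\,v$ bounded, $u\in L^1$). The left side requires showing $\nabla u_n\to\nabla u$ appropriately in $L^2(\nu)$ (or at least that the pairing converges), which follows since $|\nabla u_n(\o,z)|\le|\nabla u(\o,z)|$ pointwise and $\nabla u\in L^2_{\mathrm{loc}}$-type bounds hold; one uses that $v$ is compactly supported in $z$ so only finitely-ranged increments enter, and $\int d\hat\o(z)c_{0,z}|z|^0<\infty$ ($\bbE_0[\l_0]<\infty$) together with the symmetry lemma \ref{lucertola1}. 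The conclusion of this step is $\int d\nu\,v\cdot\nabla u = 0$ for all $v$ in the chosen dense subclass, hence for all $v\in L^2(\nu)$ by density, i.e.\ $\nabla u\perp L^2(\nu)$, which forces $\nabla u=0$ $\nu$-a.s.

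\textbf{From $\nabla u=0$ to $u$ constant.} Now $\nabla u(\o,z)=u(\t_z\o)-u(\o)=0$ for $\nu$-a.e.\ $(\o,z)$; by definition of $\nu$ this means that for $\cP_0$-a.e.\ $\o$, $u(\t_z\o)=u(\o)$ for $\hat\o$-a.e.\ $z$, and since $c_{0,z}(\o)>0$ for all $z\in\hat\o$ by Assumption (A7) this holds for \emph{all} $z\in\hat\o$. Thus $u$ is invariant under the hop from $\o$ to $\t_z\o$ for every $z\in\hat\o$. The standard finishing move: by Lemma \ref{matteo} (applied to the Borel set where $u$ equals a fixed level) combined with the stationarity/ergodicity of $\cP$, the function $\o\mapsto u(\o)$ extends to a $\cP$-a.s.\ defined translation-invariant function on $\O$; more precisely, using Campbell's formula \eqref{campanello} one transfers the $\cP_0$-a.s.\ statement ``$u$ is constant along the point-process orbit'' into a $\cP$-a.s.\ statement, and ergodicity of $\cP$ under $\bbR^d$-translations then gives that this function is $\cP$-a.s.\ (hence $\cP_0$-a.s.) constant. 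Therefore $u\in W^\perp$, completing the duality argument. I expect the main obstacle to be the technical passage through \eqref{italia} for unbounded $u$ via truncation and the verification that the limit of the left-hand pairing is $\int d\nu\,v\cdot\nabla u$; the ergodicity-to-constancy step is routine given Lemma \ref{matteo} and \eqref{lattone1}.
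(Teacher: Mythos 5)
The paper does not prove this lemma in-text (it is quoted from \cite[Lemma 5.8]{Fhom}), so your proposal must stand on its own. Your overall architecture — duality in $L^2(\cP_0)$, integration by parts via \eqref{italia}, truncation of $u$, conclusion $\nabla u=0$ $\nu$-a.e., then ergodicity — is the right one and matches the standard argument. However, there is a genuine gap at the one point where the whole argument pivots: the choice of the test class of forms.

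You claim that for bounded, compactly supported (in $z$) forms $v$ the divergence ${\rm div}\,v$ is ``manifestly in $L^2(\cP_0)$ (indeed bounded)''. This is false under the standing assumptions. From \eqref{emma} one only gets $|{\rm div}\,v(\o)|\leq 2\|v\|_\infty\int_{|z|\leq R} d\hat\o(z)\,c_{0,z}(\o)$, and this quantity is unbounded over $\o\in\O_0$ (the number of points of $\hat\o$ in a ball is unbounded, and no uniform bound on $c_{0,z}$ for $|z|\leq R$ is assumed); membership in $L^2(\cP_0)$ would require a second moment of $\l_0$-type quantities, whereas (A5) only gives $\l_0\in L^1(\cP_0)$. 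Consequently your test divergences need not belong to the set $\cD=\{{\rm div}\,v: v\in L^2(\nu)\}\cap L^2(\cP_0)$ at all, and the orthogonality hypothesis on $u$ cannot be applied to them — the duality argument then produces nothing. The same defect resurfaces on the left-hand side of the pairing: for $u\in L^2(\cP_0)$ and $v$ merely bounded, $\int d\nu\,|v|\,|\nabla u|$ is controlled by $\bbE_0[|u|\,\l_0]$, which is not finite in general (an $L^2$ function times an $L^1$ function), so your dominated-convergence passage $\int v\nabla u_n\,d\nu\to\int v\nabla u\,d\nu$ is not justified either. The standard cure for both problems is to insert the symmetric, strictly positive weight $\rho(\o,z):=\bigl(1+\l_0(\o)+\l_0(\t_z\o)\bigr)^{-1}$ and test with $v=h\rho$, $h$ bounded: then $|{\rm div}\,v|\leq 2\|h\|_\infty\,\l_0/(1+\l_0)\leq 2\|h\|_\infty$ is genuinely bounded (so these $v$ do produce elements of $\cD$), and $\rho|\nabla u|\in L^1(\nu)$ by Lemma \ref{lucertola1}, so the truncation limit goes through and one concludes $\rho\,\nabla u=0$, hence $\nabla u=0$, $\nu$-a.e. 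With that repair the remainder of your argument (positivity of the weights via (A7), extension to a translation-invariant function and ergodicity) is fine.
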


\subsection{Diffusion matrix} As $\l_2\in L^1(\cP_0)$, 
given $a\in \bbR^d$   the form 
\begin{equation}\label{def_ua}
u_a(\o,z):= a\cdot z
\end{equation}
 is square integrable, i.e. it belongs to $L^2(\nu)$. 
 We note  that the symmetric diffusion matrix $D$  defined in \eqref{def_D} satisfies, for any $a\in \bbR^d$,
 \begin{equation}\label{giallo}
 \begin{split}
q(a):= a \cdot Da  
&= \inf_{ v\in L^2 _{\rm pot}(\nu) } \frac{1}{2} \int d\nu(\o, x) \left(u_a(x)+v(\o,x) \right)^2\\
& =  \inf_{ v\in L^2 _{\rm pot}(\nu) } \frac{1}{2}\| u_a+v \|^2_{L^2(\nu)}=\frac{1}{2} \| u_a+v ^a \|^2_{L^2(\nu)}\,,
\end{split}
 \end{equation}
 where  $v^a=-\Pi u_a$ and 
 $\Pi: L^2(\nu) \to L^2_{\rm pot}(\nu)$ denotes the orthogonal projection of $L^2(\nu)$ on $L^2_{\rm pot}(\nu)$.        It follows 
 easily that  $v^a$ is characterized by the properties
\begin{equation}\label{jung}
v^a \in  L^2_{\rm pot}(\nu)\,, \qquad v^a+u_a\in L^2_{\rm sol}(\nu)\,.
\end{equation}
Moreover it holds (cf. \cite[Section 6]{Fhom}):
 \begin{equation}\label{solare789}
 Da =\frac{1}{2}  \int d\nu (\o, z)  z \bigl(a\cdot z + v^a( \o, z) \bigr)\qquad \forall a \in \bbR^d\,.
\end{equation}

 By \eqref{giallo}
 the   kernel  ${\rm Ker}(q)$ of the quadratic form $q$  is given by 
\begin{equation}\label{rosone}
{\rm Ker}(q):=\{a\in \bbR^d\,:\, q(a)=0\}=\{ a\in \bbR^d\,:\, u_a \in L^2_{\rm pot}(\nu)\}\,.
\end{equation} 
The following result is the analogous of \cite[Lemma 5.1]{ZP}:
\begin{Lemma}\label{rock}\cite[Lemma 6.1]{Fhom}
It holds
\begin{equation}\label{jazz}
{\rm Ker}(q)^\perp= \Big\{  \int  d \nu (\o,z) b(\o, z) z  \,:\, b\in L^2_{\rm sol} (\nu) \Big\}\,. 
\end{equation}
\end{Lemma}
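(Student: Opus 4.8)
The plan is to prove the two inclusions in \eqref{jazz} separately, working with the finite-mass Hilbert space $L^2(\nu)$ and its orthogonal decomposition $L^2(\nu)=L^2_{\rm pot}(\nu)\oplus L^2_{\rm sol}(\nu)$. Throughout, the key auxiliary object is the linear map $T:L^2(\nu)\to\bbR^d$ defined by $Tb:=\int d\nu(\o,z)\, b(\o,z)\, z$, which is well defined and bounded since $\l_2\in L^1(\cP_0)$ (indeed, by Cauchy--Schwarz, $|Tb|\le \|b\|_{L^2(\nu)}\,\|u_{e_i}\|$-type bounds using that each coordinate form $z\mapsto z_i$ lies in $L^2(\nu)$).

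First I would prove the inclusion ``$\supseteq$'': if $b\in L^2_{\rm sol}(\nu)$, then for every $a\in{\rm Ker}(q)$ we have $u_a\in L^2_{\rm pot}(\nu)$ by \eqref{rosone}, hence $a\cdot Tb = \int d\nu(\o,z)\, b(\o,z)\,(a\cdot z) = \langle b, u_a\rangle_{L^2(\nu)} = 0$ because solenoidal and potential forms are orthogonal. Thus $Tb\perp{\rm Ker}(q)$, i.e. $Tb\in{\rm Ker}(q)^\perp$. This shows the right-hand set of \eqref{jazz} is contained in ${\rm Ker}(q)^\perp$.

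For the reverse inclusion ``$\subseteq$'' the point is a dimension/surjectivity count. Let $W:=\{Tb : b\in L^2_{\rm sol}(\nu)\}$, a linear subspace of ${\rm Ker}(q)^\perp$ by the previous step; I want $W={\rm Ker}(q)^\perp$. Suppose not; then there is $a\in{\rm Ker}(q)^\perp$, $a\neq 0$, with $a\perp W$, i.e. $a\cdot Tb=0$ for all $b\in L^2_{\rm sol}(\nu)$. Rewriting, $\langle b,u_a\rangle_{L^2(\nu)}=\int d\nu(\o,z)\,b(\o,z)(a\cdot z)=0$ for every solenoidal $b$, which says $u_a\perp L^2_{\rm sol}(\nu)$, hence $u_a\in (L^2_{\rm sol}(\nu))^\perp=L^2_{\rm pot}(\nu)$. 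By \eqref{rosone} this means $a\in{\rm Ker}(q)$. But $a\in{\rm Ker}(q)\cap{\rm Ker}(q)^\perp=\{0\}$, contradicting $a\neq 0$. Therefore $W$ is dense in (equivalently, equal to, being the image of a linear map into the finite-dimensional space ${\rm Ker}(q)^\perp$) ${\rm Ker}(q)^\perp$, which gives \eqref{jazz}.

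The only genuinely delicate point is making sure the set $W=\{Tb:b\in L^2_{\rm sol}(\nu)\}$ is actually \emph{closed} (a subspace, not merely dense) so that the equality in \eqref{jazz} is literal; but since ${\rm Ker}(q)^\perp$ is finite-dimensional and $T$ is linear, its image is automatically a (closed) linear subspace, so density forces equality, and no extra work is needed. A second minor check is that all the pairings used ($\langle b,u_a\rangle$, the definition of $Tb$) are finite and bilinear, which follows from $u_a\in L^2(\nu)$ (guaranteed by $\l_2\in L^1(\cP_0)$) together with Cauchy--Schwarz; I would state this at the outset. The heart of the argument, and essentially the whole content, is the orthogonality identity $\langle b,u_a\rangle_{L^2(\nu)}=a\cdot Tb$ combined with the characterization \eqref{rosone} of ${\rm Ker}(q)$; everything else is linear algebra in the pairing between $L^2(\nu)$ and $\bbR^d$.
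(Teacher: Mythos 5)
Your proof is correct, and it is essentially the standard argument: the paper itself only cites \cite[Lemma 6.1]{Fhom} (modelled on \cite[Lemma 5.1]{ZP}), where the proof is exactly this duality between the pairing $a\cdot Tb=\la b,u_a\ra_{L^2(\nu)}$, the characterization \eqref{rosone} of ${\rm Ker}(q)$, and the orthogonal decomposition $L^2(\nu)=L^2_{\rm pot}(\nu)\oplus L^2_{\rm sol}(\nu)$. The finite-dimensionality remark disposing of the closedness of $T(L^2_{\rm sol}(\nu))$ is the right observation and completes the argument.
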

It is simple to check that  Warning \ref{stellina5} and Lemma \ref{rock} imply the following:
 \begin{Corollary}\label{rock_bis} 
 ${\rm Span}\{e_1, e_2, \dots, e_{d_*}\}= \Big\{  \int  d \nu (\o,z) b(\o, z) z  \,:\, b\in L^2_{\rm sol} (\nu) \Big\}$.
  \end{Corollary}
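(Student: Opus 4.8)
The plan is to deduce the corollary directly from Lemma \ref{rock} by reading off what the right-hand side of \eqref{jazz} is when $D$ is diagonal. First I would recall that by Warning \ref{stellina5} the matrix $D$ is diagonal, and by Warning \ref{stellina5} (the second one, relabelling coordinates) we have $D_{i,i}>0$ for $1\le i\le d_*$ and $D_{i,i}=0$ for $d_*<i\le d$. Hence the quadratic form $q(a)=a\cdot Da=\sum_{i=1}^{d}D_{i,i}a_i^2$ has kernel exactly ${\rm Ker}(q)=\{a\in\bbR^d:a_i=0\text{ for }1\le i\le d_*\}={\rm Span}\{e_{d_*+1},\dots,e_d\}$, and therefore ${\rm Ker}(q)^\perp={\rm Span}\{e_1,\dots,e_{d_*}\}$.

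Next I would invoke Lemma \ref{rock}, which identifies ${\rm Ker}(q)^\perp$ with the set $\bigl\{\int d\nu(\o,z)\,b(\o,z)\,z:b\in L^2_{\rm sol}(\nu)\bigr\}$. Combining the two computations of ${\rm Ker}(q)^\perp$ gives the claimed equality
\[
{\rm Span}\{e_1,e_2,\dots,e_{d_*}\}=\Bigl\{\int d\nu(\o,z)\,b(\o,z)\,z:b\in L^2_{\rm sol}(\nu)\Bigr\}\,.
\]
The only minor point to check is that ${\rm Ker}(q)$ is genuinely ${\rm Span}\{e_{d_*+1},\dots,e_d\}$: from \eqref{rosone}, $a\in{\rm Ker}(q)$ iff $q(a)=0$, and since $q(a)=\sum_i D_{i,i}a_i^2$ is a sum of nonnegative terms, this forces $a_i=0$ for every $i$ with $D_{i,i}>0$, i.e.\ for $1\le i\le d_*$, with no constraint on the remaining coordinates. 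This is elementary and is exactly the ``simple to check'' remark the statement alludes to.

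There is no real obstacle here; the content is entirely in Lemma \ref{rock} (i.e.\ \cite[Lemma 6.1]{Fhom}), and the corollary is just its specialization to the principal-axis coordinates in which $D$ is diagonal. If anything, the only thing worth being careful about is that the reduction to diagonal $D$ and the ordering of eigenvalues are used consistently — but both are fixed once and for all in Warnings \ref{stellina59} and \ref{stellina5}, so the argument above is complete as stated.
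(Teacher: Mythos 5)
Your proof is correct and is exactly the argument the paper intends: the paper itself only remarks that the corollary is ``simple to check'' from Warning \ref{stellina5} and Lemma \ref{rock}, and your computation of ${\rm Ker}(q)={\rm Span}\{e_{d_*+1},\dots,e_d\}$ from the diagonal form of $D$, followed by taking orthogonal complements and invoking \eqref{jazz}, is precisely that check.
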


\subsection{The contraction $b(\o,z)\mapsto \hat b(\o) $ and the set  $\cA_1[b]$}\label{sec_cinghia}

\begin{Definition}\label{artic}  Let $b(\o,z): \O _0 \times \bbR^d\to \bbR$ be a Borel function  with $\|b\|_{L^1(\nu)}<+\infty$. We define the Borel function  $c_b:\O_0 \to [0,+\infty]$ as
\begin{equation}\label{magno}
c_b (\o):=  \int d \hat{\o}(z) c_{0,z}(\o)  |b(\o,z)|\,, \end{equation}
 the Borel function $\hat b: \O_0 \to \bbR$ as
\begin{equation}\label{zuppa}
\hat b(\o):= 
\begin{cases}
\int d \hat{\o}(z) c_{0,z}(\o) b(\o,z) & \text{ if } c_b (\o) <+\infty\,,\\
0 & \text{ if } c_b (\o) = +\infty\,,
\end{cases}
\end{equation}
and the  Borel set $\cA_1[b]:= \{ \o\in \O\,:\, c_{b } ( \t _z \o) <+\infty\; \forall z\in \hat \o\}$.
\end{Definition}

We consider  the atomic measures ($\mu^\e_\o$ was introduced in Section \ref{GS})
\begin{equation}\label{melanzane}
  \mu^\e_\o:= \e^d \sum _{x \in \e\hat\o } \d_x\,, \qquad\qquad  \nu ^\e_{\o } : = \sum _{x\in \e\hat \o    }  \sum _{y\in  \e\hat \o   } \e^d  c_{\frac{x}{\e},\frac{y}{\e} }(\o) \d_{( x ,\frac{y-x}{\e})}\,.
   \end{equation}

\begin{Lemma}\label{cavallo}
\cite[Lemma~7.2]{Fhom}  Let $b(\o,z): \O _0 \times \bbR^d\to \bbR$ be a Borel function  with $\|b\|_{L^1(\nu)}<+\infty$. 
 Then 
 \begin{itemize}
 \item[(i)] 
$\| \hat b\|_{L^1(\cP_0)} \leq  \| b\|_{L^1(\nu)}= \| c_b \|_{L^1(\cP_0)} $ and  $\bbE_0[\hat b]= \nu(b)$;
\item[(ii)]
given $\o \in  \cA_1[b]$ and 
   $\varphi \in C_c(\bbR^d)$, it holds\begin{equation}\label{rino}
\int     d\mu_\o^\e (x) \varphi ( x) \hat b (\t_{x/\e} \o) = \int  d \nu_\o^\e (x, z) \varphi(x) b( \t_{x/\e} \o, z) 
\end{equation}
(the  series in the l.h.s. and in the r.h.s.  are absolutely convergent);
\item[(iii)]    $\cP( \cA_1[b])=\cP_0( \cA_1[b])=1$ and 
$ \cA_1[b]$ is translation invariant. 
\end{itemize}
\end{Lemma}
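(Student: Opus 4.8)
The plan is to dispatch the three items in the order (i), (iii), (ii): item (iii) will follow from (i) together with Lemma~\ref{matteo}, and item (ii) is the computational core. For item (i), I insert the nonnegative integrand $|b(\o,z)|$ into the defining identity \eqref{labirinto} of $\nu$, obtaining $\|b\|_{L^1(\nu)}=\int d\cP_0(\o)\int d\hat\o(z)\,c_{0,z}(\o)|b(\o,z)|=\|c_b\|_{L^1(\cP_0)}$ with $c_b$ as in \eqref{magno}; this is finite by hypothesis, so $c_b<+\infty$ $\cP_0$--a.s., the first branch of \eqref{zuppa} is active $\cP_0$--a.s., and $|\hat b|\le c_b$ there. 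Integrating gives $\|\hat b\|_{L^1(\cP_0)}\le\|c_b\|_{L^1(\cP_0)}=\|b\|_{L^1(\nu)}$, and a final appeal to \eqref{labirinto} (the interchange being legitimate since the same computation with $|b|$ in place of $b$ is finite) gives $\bbE_0[\hat b]=\int d\cP_0(\o)\int d\hat\o(z)\,c_{0,z}(\o)b(\o,z)=\nu(b)$.

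For item (iii), set $A:=\{\o\in\O_0:c_b(\o)<+\infty\}$, a Borel subset of $\O_0$ with $\cP_0(A)=1$ by item (i). By Definition~\ref{artic} one has $\cA_1[b]=\{\o\in\O:\t_z\o\in A\ \text{for all}\ z\in\hat\o\}$, so the implications (i)$\Rightarrow$(ii) and (i)$\Rightarrow$(iii) of Lemma~\ref{matteo} yield $\cP(\cA_1[b])=1$ and $\cP_0(\cA_1[b])=1$. Translation invariance is a direct check: for $\o\in\cA_1[b]$ and $a\in\bbR^d$ one has $\widehat{\t_a\o}=\hat\o-a$, and for every point of $\widehat{\t_a\o}$, say $z=x-a$ with $x\in\hat\o$, the cocycle identity gives $\t_z(\t_a\o)=\t_x\o$, whence $c_b(\t_z(\t_a\o))=c_b(\t_x\o)<+\infty$; thus $\t_a\o\in\cA_1[b]$.

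For item (ii), fix $\o\in\cA_1[b]$ and $\varphi\in C_c(\bbR^d)$. For each $x\in\hat\o$ we have $c_b(\t_x\o)<+\infty$, while covariance (A4) gives $c_{0,y-x}(\t_x\o)=c_{x,y}(\o)$ and $\widehat{\t_x\o}=\hat\o-x$; hence the first line of \eqref{zuppa} reads $\hat b(\t_x\o)=\sum_{y\in\hat\o}c_{x,y}(\o)\,b(\t_x\o,y-x)$, an absolutely convergent series whose series of absolute values equals $c_b(\t_x\o)$. Substituting this into the left-hand side of the claimed identity and using the definitions \eqref{melanzane} (with the change of variable $x\mapsto\e x$) turns it into $\e^d\sum_{x\in\hat\o}\varphi(\e x)\sum_{y\in\hat\o}c_{x,y}(\o)\,b(\t_x\o,y-x)$; carrying out the substitution $x\mapsto\e x$, $y\mapsto\e y$ directly in the right-hand side $\int d\nu_\o^\e(x,z)\,\varphi(x)b(\t_{x/\e}\o,z)$ produces the identical expression. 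Since $\varphi$ has compact support, only finitely many $x\in\hat\o$ contribute to the outer sum and for each of them the inner sum over $y$ converges absolutely, so both series are absolutely convergent and equal. I do not expect any genuine obstacle here: the one point that must be watched throughout is to stay on the full-measure set $\cA_1[b]$, where all the $c_b(\t_z\o)$ are finite, so that the ``$\hat b:=0$'' branch of \eqref{zuppa} never intervenes and the (effectively finite) interchange of summations in (ii) is valid.
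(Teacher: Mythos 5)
Your proof is correct. Note that the paper itself offers no argument for this lemma---it simply imports it as \cite[Lemma~7.2]{Fhom}---so there is nothing to compare line by line; your write-up is the standard direct verification one would expect there: item (i) is Tonelli/Fubini through the defining identity \eqref{labirinto}, item (iii) is exactly the intended application of Lemma \ref{matteo} to the full-measure set $\{c_b<+\infty\}$ plus the cocycle identity $\t_z(\t_a\o)=\t_{z+a}\o$, and item (ii) is the change of variables $x\mapsto \e x$, $y\mapsto\e y$ combined with covariance (A4), with absolute convergence guaranteed on $\cA_1[b]$ because $\varphi$ has compact support and $\hat\o$ is locally finite. No gaps.
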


\subsection{The transformation $b(\o,z)\mapsto \tilde b(\o,z) $}\label{hermione}
\begin{Definition}\label{ometto}
Given  a Borel function  $b :\O_0\times \bbR^d\to \bbR$ we set 
\begin{equation}
\tilde b (\o, z):=
\begin{cases}  b (\t_{z} \o, -z)  & \text{ if } z\in \hat \o\,,\\
0 & \text{ otherwise}\,.
\end{cases}
\end{equation}
\end{Definition}
By applying Lemma \ref{matteo} and using Assumption (A3), one gets:
\begin{Lemma}\label{gattonaZZ}  \cite[Lemma~8.2]{Fhom}  Given  a Borel function  $b :\O_0\times \bbR^d\to \bbR$,
it  holds  $\tilde{\tilde b}(\o,z)=b(\o,z)$ if $z\in \hat \o$.  If 
 $b \in L^1(\nu)$, then 
 $ \| b \| _{L^1(\nu)} =  \| \tilde b \| _{L^1(\nu)}$. If 
 $b \in L^2(\nu)$, then 
 $ \| b \| _{L^2(\nu)} =  \| \tilde b \| _{L^2(\nu)}$ and ${\rm div} \,\tilde b= - {\rm div}\, b$. \end{Lemma}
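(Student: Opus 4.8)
The plan is to split the statement into three pieces: the pointwise involution $\tilde{\tilde b}=b$ on $\hat\o$, a ``reversibility'' identity for the measure $\nu$, and then the two norm equalities and the divergence identity obtained as corollaries.

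\emph{Step 1 (involution).} This is a direct unwinding of Definition \ref{ometto} and needs no assumption beyond the structure of $\O_0$. Let $\o\in\O_0$ and $z\in\hat\o$. Then $\t_z\o\in\O_0$ (since $z-z=0\in\widehat{\t_z\o}$), and moreover $-z\in\widehat{\t_z\o}$ with $\t_{-z}\t_z\o=\o$. Hence
\[
\tilde{\tilde b}(\o,z)=\tilde b(\t_z\o,-z)=b\bigl(\t_{-z}\t_z\o,\,-(-z)\bigr)=b(\o,z)\,,
\]
which is the first assertion.

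\emph{Step 2 (reversibility of $\nu$).} I claim that for every nonnegative Borel $g:\O_0\times\bbR^d\to[0,+\infty]$ one has $\int d\nu(\o,z)\,\tilde g(\o,z)=\int d\nu(\o,z)\,g(\o,z)$, where $\tilde g$ is as in Definition \ref{ometto}. Two preliminary facts: first, (A4) (covariance together with symmetry) gives $c_{0,z}(\o)=c_{-z,0}(\t_z\o)=c_{0,-z}(\t_z\o)$ for $\o\in\O_0$, $z\in\hat\o$; second, by Assumption (A3) and Lemma \ref{matteo} there is a translation invariant Borel set of full $\cP_0$-measure on which the map $w\mapsto\t_w\eta$ is injective on $\hat\eta$ both for $\eta=\o$ and for all $\eta=\t_z\o$, $z\in\hat\o$. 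On this set define the nonnegative Borel kernel $k(\o,\o'):=c_{0,z}(\o)\,g(\o',-z)$ if there is a $z\in\hat\o$ with $\t_z\o=\o'$ (necessarily unique there), and $k(\o,\o'):=0$ otherwise. Then $\int d\hat\o(z)\,k(\o,\t_z\o)=\int d\hat\o(z)\,c_{0,z}(\o)\,g(\t_z\o,-z)=\int d\hat\o(z)\,c_{0,z}(\o)\,\tilde g(\o,z)$, whereas, identifying the unique preimage of $\o$ under translations of $\t_z\o$ as $-z$ and using the covariance identity, $\int d\hat\o(z)\,k(\t_z\o,\o)=\int d\hat\o(z)\,c_{0,-z}(\t_z\o)\,g(\o,z)=\int d\hat\o(z)\,c_{0,z}(\o)\,g(\o,z)$. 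Integrating in $d\cP_0(\o)$ and invoking Lemma \ref{lucertola1}, case (ii), gives the claim.

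\emph{Step 3 (conclusions).} Applying Step 2 to $g=|b|$ and to $g=|b|^2$, and noting $\widetilde{|b|}=|\tilde b|$ and $\widetilde{|b|^2}=|\tilde b|^2$ directly from Definition \ref{ometto}, yields $\|\tilde b\|_{L^1(\nu)}=\|b\|_{L^1(\nu)}$ and $\|\tilde b\|_{L^2(\nu)}=\|b\|_{L^2(\nu)}$. Finally, if $b\in L^2(\nu)$ then, $\nu$ being a finite measure, $b,\tilde b\in L^1(\nu)$, so by Definition \ref{def_div} and the remark following it both ${\rm div}\,b$ and ${\rm div}\,\tilde b$ are well-defined in $L^1(\cP_0)$; using Step 1,
\[
{\rm div}\,\tilde b(\o)=\int d\hat\o(z)\,c_{0,z}(\o)\bigl(\tilde b(\o,z)-\tilde{\tilde b}(\o,z)\bigr)=\int d\hat\o(z)\,c_{0,z}(\o)\bigl(\tilde b(\o,z)-b(\o,z)\bigr)=-{\rm div}\,b(\o)\,.
\]
I expect the main obstacle to be the measurability bookkeeping in Step 2: turning ``the unique $z$ with $\t_z\o=\o'$'' into a genuinely jointly Borel kernel $k$, and patching together via Lemma \ref{matteo} the various $\cP_0$-null sets produced by (A3) so that all shifted configurations $\t_z\o$ still lie in the set where injectivity holds. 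Everything else is a short computation.
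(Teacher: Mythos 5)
Your proof is correct. The paper itself gives no proof of this lemma (it is imported as \cite[Lemma~8.2]{Fhom}, with only the hint that it follows from Lemma \ref{matteo} and Assumption (A3)), and your argument is exactly the natural reconstruction of that: the involution is a direct unwinding of Definition \ref{ometto}, the norm equalities follow from the invariance of $\nu$ under $(\o,z)\mapsto(\t_z\o,-z)$, which you obtain from Lemma \ref{lucertola1} together with the covariance identity $c_{0,-z}(\t_z\o)=c_{0,z}(\o)$ from (A4), and the divergence identity is then immediate from Definition \ref{def_div}. The measurability caveat you flag for the kernel $k$ is real but is handled at exactly the level of rigor the paper itself adopts — compare the proof of Proposition \ref{oro}, where the identical kernel ``$k(\o,\o'):=\dots$ if $\o'=\t_z\o$ for some $z\in\hat\o$, else $0$'' is used with the same appeal to (A3) and Lemma \ref{matteo} — so nothing further is needed here.
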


\begin{Definition}\label{naviglio} Let $b: \O_0\times \bbR^d \to \bbR$ be a  Borel function with $\| b\|_{L^1(\nu)}<+\infty$.
 If $\o \in   \cA_1[b]\cap \cA_1[\tilde b]\cap \O_0$,  we set ${\rm div}_* b  (\o) := \hat b (\o) - \hat{\tilde{b}}(\o) \in \bbR$.
\end{Definition}
\begin{Lemma}\label{arranco}\cite[Lemma~8.5]{Fhom}
Let  $b: \O_0\times \bbR^d \to \bbR$ be a Borel function with $\|b\|_{ L^2(\nu)}<+\infty$. Then $\cP_0 ( \cA_1[b]\cap \cA_1[\tilde b]) =1$ and 
${\rm div}_* b  = {\rm div }\,b $ in $L^1(\cP_0)$.
\end{Lemma}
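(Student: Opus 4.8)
The plan is to read off the statement directly from the definitions, using the bookkeeping lemmas already at our disposal. First observe that $\nu$ has finite total mass, $\nu(\O\times\bbR^d)=\bbE_0[\l_0]<+\infty$ (cf.\ \eqref{labirinto}), so $L^2(\nu)\subset L^1(\nu)$ and in particular $\|b\|_{L^1(\nu)}<+\infty$. By Lemma \ref{gattonaZZ} we also have $\|\tilde b\|_{L^2(\nu)}=\|b\|_{L^2(\nu)}<+\infty$, hence $\|\tilde b\|_{L^1(\nu)}<+\infty$ as well. Applying Lemma \ref{cavallo}-(iii) to $b$ and to $\tilde b$ gives $\cP_0(\cA_1[b])=\cP_0(\cA_1[\tilde b])=1$, and therefore $\cP_0(\cA_1[b]\cap\cA_1[\tilde b])=1$. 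This proves the first assertion and shows in particular that ${\rm div}_* b$ is $\cP_0$--a.s.\ well defined.

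For the identity ${\rm div}_* b={\rm div}\, b$ in $L^1(\cP_0)$ it is enough to check that the two functions coincide at every $\o\in\cA_1[b]\cap\cA_1[\tilde b]\cap\O_0$, since this set has full $\cP_0$--measure and ${\rm div}\, b\in L^1(\cP_0)$ by Definition \ref{def_div}. Fix such an $\o$. Since $0\in\hat\o$, the definitions of $\cA_1[b]$ and $\cA_1[\tilde b]$ (applied with $z=0$) yield $c_b(\o)<+\infty$ and $c_{\tilde b}(\o)<+\infty$. Hence, by \eqref{zuppa},
\[
\hat b(\o)=\int d\hat\o(z)\,c_{0,z}(\o)\,b(\o,z)\,,\qquad
\hat{\tilde b}(\o)=\int d\hat\o(z)\,c_{0,z}(\o)\,\tilde b(\o,z)\,,
\]
and both integrals are absolutely convergent, being bounded in absolute value by $c_b(\o)$ and $c_{\tilde b}(\o)$ respectively. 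Since the integration variable $z$ runs over the points of $\hat\o$, Definition \ref{ometto} gives $\tilde b(\o,z)=b(\t_z\o,-z)$ inside the second integral, so $\hat{\tilde b}(\o)=\int d\hat\o(z)\,c_{0,z}(\o)\,b(\t_z\o,-z)$. Subtracting the two absolutely convergent integrals and recombining them (legitimate precisely because of the absolute convergence just noted) gives
\[
{\rm div}_* b(\o)=\hat b(\o)-\hat{\tilde b}(\o)
=\int d\hat\o(z)\,c_{0,z}(\o)\bigl(b(\o,z)-b(\t_z\o,-z)\bigr)
={\rm div}\, b(\o)\,,
\]
the last equality being Definition \ref{def_div}. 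This concludes the proof.

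The argument is essentially routine; the only point requiring care is the bookkeeping of where the relevant integrals converge: to split ${\rm div}\, b(\o)$ as $\hat b(\o)-\hat{\tilde b}(\o)$ one needs to know that \emph{both} $c_b(\o)$ and $c_{\tilde b}(\o)$ are finite, which is exactly what intersecting with $\cA_1[\tilde b]$ provides and is the reason the statement is formulated on $\cA_1[b]\cap\cA_1[\tilde b]$ rather than on $\cA_1[b]$ alone.
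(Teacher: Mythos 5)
Your proof is correct: the reduction $L^2(\nu)\subset L^1(\nu)$ via the finiteness of $\nu$, the appeal to Lemma \ref{gattonaZZ} and Lemma \ref{cavallo}-(iii) for the full-measure statement, and the pointwise splitting of ${\rm div}\,b(\o)$ into the two absolutely convergent pieces $\hat b(\o)$ and $\hat{\tilde b}(\o)$ on $\cA_1[b]\cap\cA_1[\tilde b]\cap\O_0$ is exactly the intended argument (the paper itself only cites this from \cite[Lemma~8.5]{Fhom} without reproducing the proof). Your closing remark correctly identifies the one point of substance, namely that the decomposition of the single absolutely convergent series in \eqref{emma} into two separate integrals is what forces the intersection with $\cA_1[\tilde b]$.
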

%
%

\begin{Lemma}\label{tav}\cite[Lemma~8.6]{Fhom}
Let  $b: \O_0\times \bbR^d \to \bbR$ be a Borel function with  $\|b\|_{ L^2(\nu)}<+\infty$ and such that  its class of equivalence in $L^2(\nu)$ belongs to $L^2_{\rm sol}(\nu)$. Let 
\be\label{eq_tav}
\cA_d [b]:= \{\o \in \cA_1[b] \cap \cA_1[ \tilde b]\,:\, {\rm div}_* b (\t_z \o) =0 \; \forall z \in \hat \o\}\,.
\en
Then  $\cP( \cA_d [b])=1$ and  $\cA_d [b]$ is translation invariant.
\end{Lemma}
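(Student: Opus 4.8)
The plan is to derive the statement for $\cA_d[b]$ directly from the two lemmas that precede it, Lemma~\ref{arranco} and Lemma~\ref{tav}'s own hypotheses, together with the measure-zero stability machinery already established (Lemma~\ref{matteo} and Lemma~\ref{cavallo}-(iii)).

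\medskip

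\noindent\textbf{Step 1: the base event has full $\cP_0$-measure.}
First I would record that since $b\in L^2(\nu)$ and, a fortiori, $b\in L^1(\nu)$ (recall $\nu$ has finite total mass $\bbE_0[\l_0]<+\infty$, so $L^2(\nu)\subset L^1(\nu)$), Lemma~\ref{arranco} applies and gives $\cP_0(\cA_1[b]\cap\cA_1[\tilde b])=1$ together with the identity ${\rm div}_* b={\rm div}\,b$ in $L^1(\cP_0)$. Because $b$ represents an element of $L^2_{\rm sol}(\nu)$, we have ${\rm div}\,b=0$ in $L^1(\cP_0)$ (this is the characterization of solenoidal forms recalled after \eqref{italia}, namely $v\in L^2_{\rm sol}(\nu)\iff {\rm div}\,v=0$ $\cP_0$-a.s.). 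Hence ${\rm div}_* b(\o)=0$ for $\cP_0$-a.a.\ $\o$. Set
\be
B:=\{\o\in\cA_1[b]\cap\cA_1[\tilde b]\cap\O_0\,:\,{\rm div}_* b(\o)=0\}\,,
\en
so that $\cP_0(B)=1$.

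\medskip

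\noindent\textbf{Step 2: promote to ``for all shifts'' via Lemma~\ref{matteo}.}
Observe that $\cA_d[b]$ as defined in \eqref{eq_tav} is exactly the set $\{\o\in\O\,:\,\t_z\o\in B\ \forall z\in\hat\o\}$ (the condition $\o\in\cA_1[b]\cap\cA_1[\tilde b]$ is itself already of the "for all $z\in\hat\o$" type by the very definitions of these sets, and the additional requirement ${\rm div}_* b(\t_z\o)=0\ \forall z\in\hat\o$ together with membership in $\O_0$ pins down $\t_z\o\in B$). Applying the equivalence (i)$\Leftrightarrow$(ii) in Lemma~\ref{matteo} to $A=B$ then yields $\cP(\cA_d[b])=1$. (One should be a little careful that $\cA_1[b]$, $\cA_1[\tilde b]$ are Borel and translation invariant, which is part of Lemma~\ref{cavallo}-(iii), so that $B$ is a Borel subset of $\O_0$ and Lemma~\ref{matteo} genuinely applies.)

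\medskip

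\noindent\textbf{Step 3: translation invariance.}
Finally I would check that $\cA_d[b]$ is translation invariant. This is purely formal: if $\o\in\cA_d[b]$ and $a\in\bbR^d$, then $\hat{\t_a\o}=\hat\o - a$, so for $z'\in\hat{\t_a\o}$ we have $z':=z-a$ with $z\in\hat\o$, and $\t_{z'}(\t_a\o)=\t_{z-a+a}\o=\t_z\o$ lies in $B$ by assumption; hence $\t_a\o\in\cA_d[b]$. Equivalently, $\cA_d[b]=\{\o:\t_z\o\in B\ \forall z\in\hat\o\}$ is visibly stable under all translations because the defining condition only refers to the orbit $\{\t_z\o:z\in\hat\o\}$, which is translation invariant.

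\medskip

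\noindent\textbf{Main obstacle.}
There is no serious obstacle here; the content of the lemma is entirely bookkeeping on null sets, and every analytic input (the $L^2$-boundedness of ${\rm div}_* b$, the identity ${\rm div}_*b={\rm div}\,b$, and the $\cP\leftrightarrow\cP_0$ transfer) has already been isolated in Lemmas~\ref{matteo}, \ref{arranco}. The only point requiring a moment's attention is the precise matching between the set $\cA_d[b]$ as written in \eqref{eq_tav} and the set $\{\o:\t_z\o\in B\ \forall z\in\hat\o\}$, i.e.\ that no extra condition is lost or gained when one unfolds the definitions of $\cA_1[\cdot]$; this is a definition-chase rather than a genuine difficulty.
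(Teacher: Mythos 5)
Your proof is correct and uses exactly the ingredients the paper sets up for this purpose: Lemma~\ref{arranco} to get $\cP_0(\cA_1[b]\cap\cA_1[\tilde b])=1$ and ${\rm div}_*b={\rm div}\,b$, the characterization of $L^2_{\rm sol}(\nu)$ via \eqref{italia} to get ${\rm div}_*b=0$ $\cP_0$-a.s., and the transfer Lemma~\ref{matteo} together with the translation invariance of $\cA_1[b]$, $\cA_1[\tilde b]$ from Lemma~\ref{cavallo}-(iii) to pass to the ``for all $z\in\hat\o$'' event. The paper itself only cites \cite[Lemma~8.6]{Fhom} for this statement, and your argument is the expected one; the definition-chase identifying $\cA_d[b]$ with $\{\o:\t_z\o\in B\ \forall z\in\hat\o\}$ is handled correctly.
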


\begin{Lemma}\label{lunetta}\cite[Lemma~8.7]{Fhom}
Suppose that $b: \O_0\times \bbR^d \to \bbR$ is a Borel function with $\|b\|_{ L^2(\nu)}<+\infty$. Take $\o \in   \cA_1[b]\cap \cA_1[\tilde b]$.
 Then for any $\e>0$ and any $u:\bbR^d\to \bbR$ with compact support it holds
\begin{equation}\label{sea}
\int d  \mu^\e_\o (x) u(x) {\rm div}_* b (\t_{x/\e} \o) = - \e \int d  \nu ^\e_\o (x,z) \nabla_\e u(x,z) b ( \t_{x/\e} \o, z) \,.
\end{equation}
\end{Lemma}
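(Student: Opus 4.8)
\textbf{Proof plan for Lemma \ref{lunetta}.}

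The plan is to unwind the definitions on both sides and reduce the claimed identity to a statement about absolutely convergent double series that can be rearranged. Since $\|b\|_{L^2(\nu)}<+\infty$, Lemma \ref{arranco} gives $\o\in\cA_1[b]\cap\cA_1[\tilde b]$ for $\cP_0$--a.e.\ $\o$, but here we take any such $\o$; then $c_b(\t_z\o)<+\infty$ and $c_{\tilde b}(\t_z\o)<+\infty$ for every $z\in\hat\o$, so $\hat b(\t_z\o)$ and $\hat{\tilde b}(\t_z\o)$ are genuine convergent series and ${\rm div}_* b(\t_z\o)=\hat b(\t_z\o)-\hat{\tilde b}(\t_z\o)$ is well defined at every point of $\hat\o$.

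First I would write the left-hand side of \eqref{sea} explicitly using $\mu^\e_\o=\e^d\sum_{x\in\e\hat\o}\d_x$ and Definition \ref{naviglio}:
\be\label{plan1}
\int d\mu^\e_\o(x)\,u(x)\,{\rm div}_* b(\t_{x/\e}\o)
=\e^d\sum_{x\in\e\hat\o}u(x)\Bigl(\hat b(\t_{x/\e}\o)-\hat{\tilde b}(\t_{x/\e}\o)\Bigr)\,.
\en
Expanding $\hat b$ and $\hat{\tilde b}$ via \eqref{zuppa}, and using covariance of the weights (A4) together with $\tilde b(\o,z)=b(\t_z\o,-z)$ for $z\in\hat\o$, the inner sums become sums over $y\in\hat\o$ of $c_{x/\e,y/\e}(\o)\,b(\t_{x/\e}\o,(y-x)/\e)$ and of $c_{x/\e,y/\e}(\o)\,b(\t_{y/\e}\o,(x-y)/\e)$ respectively. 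Because $u$ has compact support, only finitely many $x\in\e\hat\o$ contribute; for each such $x$ the inner series over $y$ is absolutely convergent (this is exactly what $\o\in\cA_1[b]\cap\cA_1[\tilde b]$ buys us). Hence the whole expression is an absolutely convergent double sum, and I may relabel $x\leftrightarrow y$ in the $\hat{\tilde b}$ term to rewrite it as $\e^d\sum_{x,y}u(y)\,c_{x/\e,y/\e}(\o)\,b(\t_{x/\e}\o,(y-x)/\e)$. Combining, \eqref{plan1} equals
\be\label{plan2}
-\e^d\sum_{x\in\e\hat\o}\sum_{y\in\e\hat\o}\bigl(u(y)-u(x)\bigr)\,c_{x/\e,y/\e}(\o)\,b\bigl(\t_{x/\e}\o,\tfrac{y-x}{\e}\bigr)\,.
\en

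Next I would recognize \eqref{plan2} as the right-hand side of \eqref{sea}. Indeed, by the definition \eqref{ricola} of the amorphous gradient, $u(y)-u(x)=\e\,\nabla_\e u(x,(y-x)/\e)$, and by the definition \eqref{melanzane} of $\nu^\e_\o$ the sum $\e^d\sum_{x,y}(\cdots)\d_{(x,(y-x)/\e)}$ integrated against $\nabla_\e u(x,z)\,b(\t_{x/\e}\o,z)$ reproduces precisely $\int d\nu^\e_\o(x,z)\,\nabla_\e u(x,z)\,b(\t_{x/\e}\o,z)$. Pulling out the extra factor $\e$ yields $-\e\int d\nu^\e_\o(x,z)\,\nabla_\e u(x,z)\,b(\t_{x/\e}\o,z)$, which is the claim. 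The only point requiring a little care is the justification of the relabeling and of regrouping the two series into the single series \eqref{plan2}: this is legitimate because the support of $u$ is compact (so the outer index set is finite) and the inner series converge absolutely — I would state this explicitly rather than prove it, since it is the same mechanism already used in Lemma \ref{cavallo}(ii). The main obstacle, such as it is, is purely bookkeeping: keeping the scaling powers of $\e$ straight and making sure the covariance identity (A4) is applied to the correct pair of points when passing between $\hat b$, $\hat{\tilde b}$ and the symmetric double sum.
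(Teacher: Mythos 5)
Your argument is correct: the paper itself defers the proof to \cite[Lemma~8.7]{Fhom}, and what you write is precisely the standard discrete summation-by-parts argument that that reference carries out — expand ${\rm div}_*b=\hat b-\hat{\tilde b}$, use covariance (A4) and $\tilde b(\o,z)=b(\t_z\o,-z)$ to turn both terms into double sums over pairs of points, relabel the $\hat{\tilde b}$ term (legitimate since the outer index set is finite by compact support of $u$ and the inner series are absolutely convergent by $\o\in\cA_1[b]\cap\cA_1[\tilde b]$), and recognize $u(y)-u(x)=\e\nabla_\e u(x,(y-x)/\e)$ together with the definition of $\nu^\e_\o$. No gaps; the bookkeeping of the $\e$ powers and the use of the symmetry $c_{x,y}=c_{y,x}$ in the relabeling step are both handled correctly.
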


 \begin{Lemma}\label{gattonaZ}
 \cite[Lemma~8.3]{Fhom}  $ \,\,$\\
(i) 
 Let $b: \O_0 \times \bbR^d \to [0, +\infty]$ and  $\varphi, \psi:\bbR^d \to [0,+\infty] $  be
  Borel functions. Then, for each $\o \in \O$,  it holds
  \begin{equation}\label{micio2}
 \int d  \nu^\e_\o (x, z) \varphi ( x) \psi (  x+\e z ) b(\t_{x/\e}\o, z)=
  \int d \nu^\e_\o (x, z) \psi (x) \varphi (  x+\e z  ) \tilde{b}(\t_{x/\e}\o, z)\,.
 \end{equation}
 (ii)  Let $b: \O_0 \times \bbR^d\to \bbR$ be a Borel function with $\| b \|_{L^1( \nu )}<+\infty$ and take $ \o \in \cA_1[b]\cap \cA_1[\tilde b]$.  
   Given functions  $\varphi, \psi:\bbR^d \to \bbR $   such that  at least one between $\varphi, \psi $ has  compact support  and the other is bounded,  identity \eqref{micio2} is still valid.
   Given now $\varphi $ with compact support and  $\psi$ bounded, it holds
 \begin{align}
&  \int d  \nu^\e_\o (x, z) \nabla_\e  \varphi   ( x,z) \psi (   x+\e z ) b(\t_{x/\e}\o, z) \nonumber\\
 & \qquad \qquad \qquad  =
  - \int d  \nu^\e_\o (x, z) 
  \nabla _\e \varphi  (x,z) 
  \psi (x) \tilde{b}(\t_{x/\e}\o, z)\,.\label{micio3}
   \end{align}
Moreover,  the above integrals  in  \eqref{micio2}, \eqref{micio3} (under the hypothesis of this  Item (ii)) 
 correspond to absolutely convergent  series and are therefore well defined. 
\end{Lemma}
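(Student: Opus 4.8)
The statement is an elementary consequence of the definition \eqref{melanzane} of $\nu^\e_\o$ together with the symmetry and covariance of the weights in Assumption (A4); the only delicate point is the justification of the rearrangements of the (a priori infinite) sums. For Part~(i) I would unfold $\nu^\e_\o$ into a double sum: writing $z=(y-x)/\e$, so that $x+\e z=y$, the left-hand side of \eqref{micio2} becomes $\sum_{x\in\e\hat\o}\sum_{y\in\e\hat\o}\e^d c_{x/\e,y/\e}(\o)\,\varphi(x)\psi(y)\,b(\t_{x/\e}\o,(y-x)/\e)$. Relabelling $x\leftrightarrow y$ and using $c_{y/\e,x/\e}(\o)=c_{x/\e,y/\e}(\o)$, I would then invoke the identity
\be
b(\t_{y/\e}\o,(x-y)/\e)=\tilde b(\t_{x/\e}\o,(y-x)/\e)\,,\qquad x,y\in\e\hat\o\,,
\en
which follows from Definition \ref{ometto} and the cocycle relation $\t_{(y-x)/\e}\t_{x/\e}=\t_{y/\e}$, once one notes that $(y-x)/\e$ is genuinely a point of $\t_{x/\e}\o$ (because $y/\e\in\hat\o$), so that the ``otherwise $0$'' branch of $\tilde b$ never intervenes. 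Substituting and reading the double sum back as an integral against $\nu^\e_\o$ yields the right-hand side of \eqref{micio2}; since $\varphi,\psi,b\ge0$, every step is legitimate by Tonelli (with the value $+\infty$ allowed on both sides).

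For Part~(ii) I would first establish absolute convergence. Assume $\varphi$ is compactly supported and $\psi$ bounded (the other case is symmetric, with $b$ and $\tilde b$ exchanged). In the double sum for \eqref{micio2} the factor $\varphi(x)$ confines $x$ to the finitely many points of $\e\hat\o\cap\mathrm{supp}\,\varphi$, and by covariance the inner sum over $y$ of $c_{x/\e,y/\e}(\o)\,|b(\t_{x/\e}\o,(y-x)/\e)|$ equals $c_b(\t_{x/\e}\o)$ in the notation of \eqref{magno}, which is finite since $\o\in\cA_1[b]$; hence the series converges absolutely, and so does its right-hand side, this time using $\o\in\cA_1[\tilde b]$ and $\|\tilde b\|_{L^1(\nu)}=\|b\|_{L^1(\nu)}<+\infty$ from Lemma \ref{gattonaZZ}. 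With absolute convergence in hand, the rearrangement of Part~(i) applies verbatim to signed $b$, giving \eqref{micio2}.

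Finally, for \eqref{micio3} I would write $\nabla_\e\varphi(x,z)\psi(x+\e z)=\e^{-1}\big[(\varphi\psi)(x+\e z)-\varphi(x)\psi(x+\e z)\big]$ and split the left-hand side of \eqref{micio3} as $\e^{-1}(I_1-I_2)$, with $I_1:=\int d\nu^\e_\o(x,z)\,(\varphi\psi)(x+\e z)\,b(\t_{x/\e}\o,z)$ and $I_2:=\int d\nu^\e_\o(x,z)\,\varphi(x)\psi(x+\e z)\,b(\t_{x/\e}\o,z)$; both are absolutely convergent by the previous paragraph (for $I_1$ the compactly supported factor is $\varphi\psi$). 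Applying \eqref{micio2} to $I_1$ with the pair $(1,\varphi\psi)$ gives $I_1=\int d\nu^\e_\o(x,z)\,\varphi(x)\psi(x)\,\tilde b(\t_{x/\e}\o,z)$, and applying \eqref{micio2} directly to $I_2$ gives $I_2=\int d\nu^\e_\o(x,z)\,\psi(x)\varphi(x+\e z)\,\tilde b(\t_{x/\e}\o,z)$. Subtracting and dividing by $\e$ recombines the two integrands into $-\psi(x)\,\nabla_\e\varphi(x,z)\,\tilde b(\t_{x/\e}\o,z)$, which is the right-hand side of \eqref{micio3}, whose absolute convergence is inherited from $I_1$ and $I_2$.

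The main obstacle is precisely the bookkeeping just highlighted: verifying that the displacement $(y-x)/\e$ lies in $\t_{x/\e}\o$ so that the involution identity for $\tilde b$ is exact, and, for each rearrangement, identifying whether it is $\cA_1[b]$ or $\cA_1[\tilde b]$ that makes the relevant sum absolutely convergent (equivalently, whether one should route a given sum through $b$ or through $\tilde b$). Neither step is hard, but both must be carried out carefully.
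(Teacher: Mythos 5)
Your proof is correct: the paper itself gives no proof of this lemma (it is quoted from \cite[Lemma~8.3]{Fhom}), and your argument --- unfolding $\nu^\e_\o$ as a double sum over $(x,y)$, swapping $x\leftrightarrow y$, using the symmetry of $c_{x,y}$ together with the involution identity $b(\t_{y/\e}\o,(x-y)/\e)=\tilde b(\t_{x/\e}\o,(y-x)/\e)$ (valid because $(y-x)/\e\in\widehat{\t_{x/\e}\o}$), and then deducing \eqref{micio3} by writing $\nabla_\e\varphi(x,z)\psi(x+\e z)=\e^{-1}[(\varphi\psi)(x+\e z)-\varphi(x)\psi(x+\e z)]$ and applying \eqref{micio2} twice --- is exactly the intended one. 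Your handling of absolute convergence via $c_b(\t_{x/\e}\o)<+\infty$ on the finitely many points of $\e\hat\o\cap\mathrm{supp}\,\varphi$, with $\cA_1[b]$ and $\cA_1[\tilde b]$ covering the two routings, is also the right bookkeeping.
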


Recall the set $\cA[g]$ introduced in Prop.~\ref{prop_ergodico} and Definition \ref{sibilla}.
\begin{Lemma}\label{blocco} Suppose that $\o$ belongs to the sets $\cA_1[1]$, $ \cA[\l_0]$, $ \cA_1[ |z|^2  \mathds{1}_{\{ |z| \geq  \ell\}}]$ and $ \cA[  \int d \hat \o (z) c_{0,z}(\o) |z|^2  \mathds{1}_{\{ |z| \geq  \ell\}} ]$ for all $\ell \in \bbN$.
 Then $\forall \varphi \in C_c^2(\bbR^d)$ we have
 \begin{equation}\label{football}
\lim  _{\e \da 0}  \int d \nu^\e_{\o}(x,z)  \bigl[\nabla_\e \varphi (x,z) - \nabla  \varphi (x) \cdot z \bigr]^2   =0\,.
\end{equation}
\end{Lemma}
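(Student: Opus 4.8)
The plan is to bound the integral in \eqref{football} pointwise by the estimate \eqref{mirra}, then split the resulting $\e$--dependent error according to the magnitude of the displacement $z$, and finally control each piece using the convergence statements provided by Proposition \ref{prop_ergodico} (via the sets $\cA[\cdot]$ in Definition \ref{sibilla}) together with the integrability assumptions \eqref{zarina} and \eqref{downtown}.

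First I would apply \eqref{mirra}: since $\varphi \in C_c^2(\bbR^d)$ there is $\ell_0\in\bbN$ with $\varphi$ supported in $\{|x|\le \ell_0\}$ and a constant $C(\varphi)$ and a cutoff $\phi\in C_c(\bbR^d)$ (with $\phi\equiv 1$ on $\{|x|\le \ell_0\}$, $\phi\equiv 0$ on $\{|x|\ge\ell_0+1\}$) such that
\be\label{eq:plan1}
\bigl| \nabla_\e \varphi(x,z) - \nabla\varphi(x)\cdot z \bigr|^2 \le \e^2 C(\varphi)^2 |z|^4 \bigl(\phi(x)+\phi(x+\e z)\bigr)^2 \le 2\e^2 C(\varphi)^2 |z|^4 \bigl(\phi(x)^2+\phi(x+\e z)^2\bigr)\,.
\en
Using the symmetry statement \eqref{micio2} of Lemma \ref{gattonaZ}(i) with $b(\o,z)=|z|^4$ (note $\tilde b(\o,z)=|z|^4$ on $\hat\o$), the $\phi(x+\e z)^2$ term is transformed into a $\phi(x)^2$ term, so it suffices to bound
\be\label{eq:plan2}
\e^2 \int d\nu^\e_\o(x,z)\, \phi(x)^2\, |z|^4\,.
\en
Now I would fix $\ell\in\bbN$ and split $|z|^4 = |z|^4\mathds{1}_{\{|z|<\ell\}} + |z|^4\mathds{1}_{\{|z|\ge\ell\}}$. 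For the near part, $|z|^4\mathds{1}_{\{|z|<\ell\}}\le \ell^2 |z|^2$, so that contribution is at most $\e^2\ell^2\int d\nu^\e_\o(x,z)\phi(x)^2|z|^2$; by Lemma \ref{cavallo}(ii) (or directly Proposition \ref{prop_ergodico}, which is why $\o\in\cA[\l_2]$-type sets are needed — here $\o\in\cA[\,\int d\hat\o(z) c_{0,z}(\o)|z|^2\mathds{1}_{\{|z|\ge 1\}}\,]$ and $\cA[\l_0]$ handle the full $\l_2$) this integral converges as $\e\da 0$ to a finite constant times $\bbE_0[\cdots]$, hence the whole near term is $O(\e^2\ell^2)\to 0$ for each fixed $\ell$.

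The main obstacle, and the point of the hypotheses \eqref{zarina} and \eqref{downtown}, is the far part $\e^2\int d\nu^\e_\o(x,z)\phi(x)^2 |z|^4\mathds{1}_{\{|z|\ge\ell\}}$. Here one cannot afford a crude bound, so I would write $|z|^4 c_{0,z}(\o) = c_{0,z}(\o)^\a |z|^2 \cdot \bigl(c_{0,z}(\o)^{1-\a}|z|^2\bigr)$ and use \eqref{downtown}: for $\ell$ large, $\sup_{\o\in\O_0}\sup_{z\in\hat\o:|z|\ge\ell} c_{0,z}(\o)^{1-\a}|z|^2 \le C$ (a finite constant, by the $\limsup$ in \eqref{downtown}, and in fact this quantity can be made small along the subsequence, or at least stays bounded). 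Thus the far integrand is dominated by $C\, c_{0,z}(\o)^\a |z|^2 \mathds{1}_{\{|z|\ge\ell\}}$, and the far term is bounded by
\be\label{eq:plan3}
C\e^2 \int d\nu^\e_\o(x,z)\,\phi(x)^2\, c_{0,z}(\o)^{\a-1}\cdot c_{0,z}(\o)^{1-\a}|z|^4\mathds{1}_{\{|z|\ge\ell\}} \le C'\e^2 \int d\mu^\e_\o(x)\,\phi(x)^2\, h_\ell(\t_{x/\e}\o)\,,
\en
where $h_\ell(\o):= \int d\hat\o(z) c_{0,z}(\o)^\a |z|^2 \mathds{1}_{\{|z|\ge\ell\}}$ (one must rewrite the $\nu^\e_\o$--integral back as a $\mu^\e_\o$--integral, absorbing one factor of $c$ into the measure; this is exactly the identity behind Lemma \ref{cavallo}(ii) applied to the form $b(\o,z)= c_{0,z}(\o)^{\a-1}|z|^4\mathds{1}_{\{|z|\ge\ell\}}$, whose $\nu$--integrability is \eqref{zarina}). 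Since $\o\in\cA_1[|z|^2\mathds{1}_{\{|z|\ge\ell\}}]$ and $\o\in\cA[h_\ell]$ for every $\ell\in\bbN$, Proposition \ref{prop_ergodico} gives that this $\mu^\e_\o$--integral converges as $\e\da 0$ to a finite multiple of $\bbE_0[h_\ell]$, hence the far term is $O(\e^2)\to 0$ for each fixed $\ell$. Combining the two pieces, for each fixed $\ell$ we get $\limsup_{\e\da 0}$ of \eqref{eq:plan2} equal to $0$; since $\ell$ was arbitrary this already finishes the proof. (If one prefers, one first sends $\e\da0$ for fixed $\ell$, getting $0$, so no further limit in $\ell$ is even needed — the $\e^2$ prefactor does all the work, and \eqref{zarina}, \eqref{downtown} are used only to guarantee finiteness of the limiting constants. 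The delicate point to get right is the bookkeeping converting between $\nu^\e_\o$-- and $\mu^\e_\o$--integrals and checking that the forms one feeds into Lemma \ref{cavallo} indeed lie in $L^1(\nu)$, which is precisely assumptions (A5) and (A6).)
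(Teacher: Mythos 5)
Your near-field part is fine: after \eqref{mirra}, the contribution of $|z|<\ell$ carries an $\e^2\ell^2$ prefactor against $\int d\mu^\e_\o(x)\phi(x)^2\l_2(\t_{x/\e}\o)$, whose boundedness can indeed be pieced together from $\cA[\l_0]$ and $\cA[\int d\hat\o(z)c_{0,z}(\o)|z|^2\mathds{1}_{\{|z|\ge 1\}}]$. The genuine gap is in the far part. Applying \eqref{mirra} globally forces you to control $\e^2\int d\nu^\e_\o(x,z)\phi(x)^2|z|^4\mathds{1}_{\{|z|\ge\ell\}}$; your pointwise conversion $c_{0,z}(\o)|z|^4\le C\,c_{0,z}(\o)^\a|z|^2$ via \eqref{downtown} is correct, but it leaves you with the ergodic average of $\tilde h_\ell(\o):=\int d\hat\o(z)c_{0,z}(\o)^\a|z|^2\mathds{1}_{\{|z|\ge\ell\}}$. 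The lemma's hypotheses only place $\o$ in $\cA$--sets for functions built from the \emph{first} power of $c_{0,z}$, and since $c_{0,z}^\a\ge c_{0,z}$ wherever $c_{0,z}\le 1$, nothing in those hypotheses dominates $\tilde h_\ell$ or its averages; the appeal to Proposition \ref{prop_ergodico} at that step is therefore unjustified for the $\o$'s the lemma is about. (The convergence would hold for $\cP$--a.e.\ $\o$ by \eqref{zarina}, but the lemma is a deterministic statement, and the paper says explicitly that its purpose is to isolate exactly which $\cA$--sets suffice; your argument proves the conclusion only under the extra hypothesis $\o\in\cA[\tilde h_\ell]$.)

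The paper avoids this by not using the second-order bound \eqref{mirra} for large $|z|$ at all: for $|z|\ge\ell$ it uses the first-order bound \eqref{paradiso} (applied to both $\nabla_\e\varphi$ and $\nabla\varphi(x)\cdot z$, together with $\phi\le 1$), yielding $\bigl[\nabla_\e\varphi(x,z)-\nabla\varphi(x)\cdot z\bigr]^2\le C|z|^2\bigl(\phi(x)+\phi(x+\e z)\bigr)$, so the first power of $c_{0,z}$ stays in the measure and the far term equals $2C\int d\mu^\e_\o(x)\phi(x)h_\ell(\t_{x/\e}\o)$ with $h_\ell(\o)=\int d\hat\o(z)c_{0,z}(\o)|z|^2\mathds{1}_{\{|z|\ge\ell\}}$ --- exactly the function in the hypotheses. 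There is no $\e^2$ gain on this piece; smallness comes from sending $\e\da 0$ first and then $\ell\to\infty$, using $\bbE_0[h_\ell]\to 0$ by dominated convergence since $\l_2\in L^1(\cP_0)$. So, contrary to your closing remark, the double limit in $\ell$ is essential, and \eqref{downtown}, \eqref{zarina} are not needed for this lemma (they enter elsewhere, e.g.\ in Lemma \ref{lunghissimo}).
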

The above lemma is related to \cite[Lemma 15.2]{Fhom}. We  give the proof,  since we need to isolate the conditions leading to \eqref{football} (which in \cite{Fhom} are assured by the property that $\o$ belongs to the space $\O_{\rm typ}$ in \cite{Fhom}).
\begin{proof} Let $\ell, \phi$ be defined as done before \eqref{paradiso}. 
The  upper bound given by   \eqref{paradiso}  with $\nabla_\e\varphi (x,z)$ replaced by  $\nabla \varphi (x) \cdot z $ is also true. We will apply the above bounds for $|z| \geq  \ell$.  On the other hand, we apply \eqref{mirra} for $|z|<\ell$.
As a result,  we  can bound
\be \label{piano}
 \int d \nu^\e_{ \o}(x,z)  \bigl[\nabla_\e \varphi (x,z) - \nabla  \varphi (x) \cdot z \bigr]^2   \leq 
C(\varphi) [A(\e, \ell )+B(\e, \ell)]\,,
\en
 where (cf. \eqref{micio2}) 
\begin{align*}
 A(\e, \ell ):& =   \int d \nu^\e_{ \o}(x,z) |z|^2  (\phi(x)  + \phi (x+ \e z)  )  \mathds{1}_{\{ |z| \geq  \ell\}} \\
& = 2 \int d \nu^\e_{ \o}(x,z) |z|^2  \phi(x) \mathds{1}_{\{ |z| \geq  \ell\}}  =2 \int d \mu ^\e_{ \o} (x)  \phi(x) h_\ell  (\t_{x/\e} \o) \,,  \\
h_\ell(\o):&= \int d \hat \o (z) c_{0,z}(\o) |z|^2  \mathds{1}_{\{ |z| \geq  \ell\}} \,, \\
 B(\e, \ell):& = \e ^2 \ell^4 \int d \nu^\e_{ \o}(x,z) (\phi(x)  + \phi (x+ \e z)  )  \\
& =2 \e^2  \ell^4 \int d \nu^\e_{ \o}(x,z)  \phi(x)  = 2  \e^2 \ell^4 \int d \mu ^\e_{ \o} (x)  \phi(x) \l_0 (\t_{x/\e} \o)
  \,.
\end{align*}
We now apply Prop. \ref{prop_ergodico}. As $\o \in \cA_1[ |z|^2  \mathds{1}_{\{ |z| \geq  \ell\}}]\cap \cA[h_\ell]$, we conclude that 
    $\lim_{\e \da 0}  \int d \mu ^\e_{ \o} (x)  \phi(x) h_\ell  (\t_{x/\e} \o) = \int dx\,m \phi(x) \bbE_0[h_\ell]$.  Hence  $\lim_{\ell\uparrow  \infty,\e \da 0} A(\e, \ell)=0$ by dominated convergence
   as $\bbE_0[\l_2]<+\infty$.
As $\o \in \cA_1[1]\cap \cA[\l_0]$ the integral $ \int d \mu ^\e_{ \o} (x)  \phi(x) \l_0 (\t_{x/\e} \o)$ converges to $ \int dx\,m \phi(x) \bbE_0[\l_0]$ as $\e \da 0$. As a consequence,  $\lim_{\e \da 0} B(\e, \ell)=0$. Coming back to \eqref{piano} we finally get \eqref{football}.
\end{proof}


\section{The set $\O_{\rm typ}$ of typical environments}\label{sec_tipetto}

Recall  the definitions of the set $\cA[g]$ (cf.~Proposition \ref{prop_ergodico} and Definition \ref{sibilla}) and of the set $\cA_1[g]$ (cf.~ Definition~\ref{artic}).

In the construction of the sets below, we will use the separability of $L^2(\nu)$ and $L^2(\cP_0)$. Since  $(\cN,d)$ is a separable metric space (cf. Section \ref{GS}), the same holds for $(\O, d)$ and $(\O_0, d)$. By \cite[Theorem~4.13]{Br} we then 
 get that the space  $L^p(\cP_0) $ is  separable for $1\leq p <+\infty$.  The separability of $L^2(\nu)$ is proved in \cite[Lemma~9.2]{Fhom}.

\smallskip
\noindent
$\bullet$ {\bf The functional sets $\cG_1,\cH_1$}. 
We fix a countable set $\cH_1$ of Borel functions $b: \O_0\times \bbR^d\to \bbR$ such that 
$\|b \|_{L^2(\nu)}<+\infty$ for any $b \in \cH_1$ and such that   $\{ {\rm div} \,b\,:\, b \in \cH_1\}$ is a dense subset of $\{ w \in L^2(\cP_0)\,:\, \bbE_0[ w]=0\}$ when thought of as set of $L^2$--functions (recall Lemma \ref{santanna}). 
For each $b \in \cH_1$ we define  the Borel function $g_b : \O_0 \to \bbR$ as  (cf. Definition~\ref{naviglio})
\be \label{lupetto}
g_b (\o):= 
\begin{cases}
{\rm div}_* b (\o) & \text{ if } \o \in \cA_1[b]\cap \cA_1[\tilde b]\,,\\
0 & \text{ otherwise}\,.
\end{cases}
\en
Note that by Lemma \ref{arranco} $g_b={\rm div}\, b$, $\cP_0$--a.s.
Finally we set $\cG_1:=\{ g_b \,:\, b \in \cH_1\}$. 

\smallskip

\noindent
$\bullet$ {\bf The functional sets $\cG_2,\cH_2 $}.  We fix a countable set $\cG_2$  of  bounded Borel functions 
$g: \O_0\to \bbR$ such that  the set $\{ \nabla g\,:\, g \in \cG_2\}$, thought in $L^2(\nu)$,  is  dense in $L^2_{\rm pot}(\nu)$
 (this is possible by the definition of $L^2_{\rm pot}(\nu)$).  We  define $\cH_2$ as the set of Borel functions $h:\O_0 \times \bbR^d\to \bbR$ such that $h=\nabla g$ for some $g\in \cG_2$.

\smallskip

\noindent
$\bullet$ {\bf The functional set $\cW$}. We fix a  countable set $\cW$ of Borel functions $b:\O_0\times \bbR^d\to \bbR$ such that, thought of as subset of $L^2(\nu)$, $\cW$  is dense in $L^2_{\rm sol} (\nu)$.  By   Lemma~\ref{gattonaZZ},  $\tilde b \in L^2_{\rm sol}(\nu)$ for any $b \in L^2_{\rm sol}(\nu)$. Hence, at cost to enlarge $\cW$, we assume that $\tilde b \in \cW$ for any $b \in \cW$ (recall Definition \ref{ometto}).

\begin{Definition}[Definition of the functional  set $\cG$] 
We define $\cG$ as the union of the following countable sets of Borel functions on $\O_0$, which are $\cP_0$--square integrable:
$\{1\}$,  $\cG_1$, $\cG_2$ and $\{ u_{b,i} \mathds{1}( |u_{b,i}|\leq M)\}$ with  $ b \in \cW$, $i \in \{1,\dots, d\}$, $M\in \bbN $ and $u_{b,i}(\o):=
\int d \hat \o (z)  c_{0,z}(\o) z_i b(\o,z)$.
\end{Definition}
\begin{Definition}[Definition of the functional  set $\cH$] We define $\cH$  as the union of the following countable sets  of Borel functions on $\O_0\times \bbR^d$, which are $\nu$--square integrable: $\cH_1$, $\cH_2$, $\cW$, $\{ (\o,z) \mapsto z_i \,:\, 1\leq i \leq d\}$.
\end{Definition}

Recall the transformation $b \mapsto \hat{b}$ given in Definition \ref{artic} and the parameter $\a\in (0,1)$ appearing in Assumption (A6).
\begin{Definition}\label{amen}
The set $\O_{\rm typ}\subset \O$  of typical environments is  the intersection of the following sets:
\begin{itemize}
\item  $\cA[g g'] $ for all $g,g'\in \cG$ (recall that $1\in \cG$); 
\item $\cA_1[b b'] \cap \cA[ \widehat{ b b'}]$ as $b,b'\in \cH$;
\item    $\O_2$ (cf.~ Lemma \ref{paletta});
\item  $\cA_1[ |z|^k]\cap \cA[\l_k]  $ for $k=0,2$;
\item $\cA[   \int d \hat \o (z) c_{0,z} (\o) |z|^2 \mathds{1}_{|z|\geq n}]$ for all $n \in \bbN$;
\item  $\cA_1[ c_{0,z}(\o)^\a] \cap  \cA\left[ \int d \hat \o (z) c_{0,z}(\o) ^\a  \mathds{1}_{\{ |z| \geq  n\}} \right]$ for all $n\in \bbN$;
\item $\cA_1[b]\cap \cA_1[\tilde b]\cap \cA_1[ b^2] \cap \cA_1[ \tilde{b}^2] $ for all $b\in \cH$;
\item $
  \cA[ \widehat{\;b^2\;}]\cap  \cA[ \widehat{\;\tilde{b}^2\;}] \cap  \cA  [\widehat{\,| b|\,} ]\cap  \cA [\widehat{ \, |\tilde b|\,}]$ for all $b\in \cH$;
\item $\cA_1[ \tilde b(\o,z) z_i]$ for $1\leq i \leq d$ for all $b\in \cW$;
\item $\cA[ u _{ b, i, M}]$ for all  $b \in \cW$, $1\leq i \leq d$ and $M\in \bbN$, where 
$u _{ b,i,M}:= | u_{ b ,i}| \mathds{1}\bigl(| u_{ b ,i}|\geq M\bigr)  $ and $u _{  b,i }(\o):= \int d\hat \o (z)
c_{0,z}(\o) z_i  {b} (\o, z)$ (see definition of $\cG$);
\item $ \cA_1[c_{0,z}(\o)^\a z_1^2]\cap \cA[ \int d\hat \o(z) c_{0,z}(\o)^\a z_1^2 ]$;
\item $\cA_d [b]$ for all $b\in \cW$ (recall \eqref{eq_tav}).
\end{itemize}
%
\end{Definition}
As $\l_0, \l_1\in L^1(\cP_0)$,  due to \eqref{sorpresina}, \eqref{zarina} and our definition of $\cG$, $\cH$, $\cW$, the  
sets listed in Definition \ref{amen} are well defined (recall in particular Lemmata \ref{cavallo}, \ref{gattonaZZ}, \ref{arranco}).  As these sets are translation invariant with full $\cP$-measure (see   Proposition \ref{prop_ergodico}, Lemma \ref{cavallo} and Lemma \ref{tav}), the same holds for $\O_{\rm typ}$. 

%


\section{Weak/strong convergence and  2-scale convergence}\label{anatre12}
Recall   $\mu^\e_{\o,\L} $ and $\nu^\e_{\o,\L}$  given in \eqref{atomiche}.
Recall $\mu^\e_{\o} $ and $\nu^\e_{\o}$  given in \eqref{melanzane}.
 We also define
\be\label{acqua}
 \mu^\e_{\o,S}:= \e^d \sum _{x \in \e\hat \o  \cap S} \d_x\,,\qquad  \nu ^\e_{\o,S} : = \sum _{x\in \e\hat \o  \cap S }  \sum _{y\in  \e\hat \o  \cap S } \e^d  
 c_{\frac{x}{\e},\frac{y}{\e} }(\o) \d_{( x ,\frac{y-x}{\e})}\,.
  \end{equation}
In what follows,  $\D$ equals $S$ or $\L$.
\subsection{Weak/strong convergence}\label{sec_weak_strong}

\begin{Definition}\label{debole_forte} Fix $\o\in \O$ and a 
 family of $\e$--parametrized  functions $v_\e \in L^2( \mu^\e_{\o,\D})$.

 $\bullet$  We say  that the family $\{v_\e\}$  \emph{converges weakly} to the function  $v\in L^2( \D, m dx)$, and write  $v_\e \rightharpoonup v$, if the family $\{v_\e\}$ is bounded  (i.e. 
$\limsup  _{\e \da 0} \| v_\e\| _{L^2(\mu^\e_{\o,\D}) } <+\infty)$
and 
\begin{equation}\label{deboluccio}
\lim _{\e \da 0} \int d  \mu^\e _{\o,\D}  (x)   v_\e (x) \varphi (x)= 
\int_\D dx\, m   v(x) \varphi(x) \end{equation}
for all  $\varphi \in C_c(\D)$.

$\bullet$  We  say that  the  family $\{v_\e\}$  \emph{converges strongly} to  $v\in L^2(\D, m dx)$, and write $v_\e\to v$,  if  $\{v_\e\}$ is bounded and   it holds
\begin{equation}\label{fortezza}
\lim _{\e \da 0} \int  d  \mu^\e _{\o ,\D} (x)   v_\e (x) g_\e (x)= 
\int_\D dx\,  m v(x) g(x) \,,\end{equation}
for any family of functions $g_\e \in  L^2(\mu^\e_{\o,\D} )$ weakly converging to $g\in L^2( \D, m dx) $.
\end{Definition}
Trivially,  strong convergence implies weak convergence. 

\begin{Remark}\label{utilino} Given $v_\e$ and $v$ as in Definition \ref{debole_forte}, we have 
that $v_\e \to v $ if $v_\e \rightharpoonup v$ and $\lim  _{\e \da 0} \| v_\e\| _{L^2(\mu^\e_{\o,\D}) } =
\|v\|_{L^2(\D, m dx)}$ (cf. the proof of \cite[Prop.~1.1]{Z}). \end{Remark}

%
%

\subsection{Weak 2-scale convergence}\label{sec_222}

  
\begin{Definition}\label{priscilla}
Fix   $\tilde \o\in \O_{\rm typ}$, an $\e$--parametrized  family of functions  $v_\e \in L^2( \mu^\e_{\tilde \o,\D })$ and  a function $v \in L^2 \bigl(\D \times \O, m dx \times \cP_0\bigr)$.
 We say that \emph{$v_\e$ is weakly 2-scale convergent to $v$}, and write 
$v_\e \stackrel{2}{\rightharpoonup} v$, 
if the family $\{v_\e\}$ is bounded, i.e.
$
 \limsup_{\e\downarrow 0}  \|v_\e\|_{L^2(\mu^\e_{\tilde \o,\D})}<+\infty$, 
 and 
\begin{equation}\label{rabarbaro}
\lim _{\e\downarrow 0} \int d \mu _{\tilde \o, \D }^\e (x)  v_\e (x) \varphi (x) g ( \t _{x/\e} \tilde{\o} ) =\int d\cP_0(\o)\int _\D dx\,  m v(x, \o) \varphi (x) g (\o)  \,,
\end{equation}
for any $\varphi \in  C_c (\D)$ and any $g \in\cG$.
\end{Definition}
One can define also the strong 2-scale convergence, but we will not need it in what follows.
As  $\tilde \o \in \O_{\rm typ}\subset  \cA[g]$ for all $g\in \cG$, by Proposition \ref{prop_ergodico} one  gets  that $v_\e \stackrel{2}{\rightharpoonup} v$ where $v_\e:= \varphi \in L^2(\mu^\e_{\tilde \o,\D})$ and $v:=\varphi \in L^2(\D, m dx)$ for any $\varphi \in C_c(\D)$.
%
%
 \smallskip
 
It is standard to prove the following fact  by using  the first item  in Definition~\ref{amen} (cf.~\cite[Prop.~2.2]{Z}, \cite[Lemma~5.1]{ZP} and in particular \cite[Lemma~10.5]{Fhom}):

\begin{Lemma}\label{compatto1}
 Let $\tilde \o\in \O_{\rm typ}$.    Then, given a bounded family of functions $v_\e\in L^2 ( \mu^\e_{\tilde{\o},\D})$,  there exists a subsequence $\{v_{\e_k}\}$ such that
 $ v_{\e_k}
  \stackrel{2}{\rightharpoonup}   v $ for some 
 $  v \in L^2(\D \times \O, m dx \times \cP_0 )$ with $\|  v\|_{   L^2(\D\times \O, m dx \times \cP_0 )}\leq \limsup_{\e\da 0} \|v_\e\|_{L^2(\mu^\e_{\tilde{\o},\D})}$.
 \end{Lemma}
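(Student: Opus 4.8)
The plan is to prove Lemma~\ref{compatto1} by the standard functional-analytic argument for 2-scale compactness, adapted to the present point-process setting. The key point is to realize weak 2-scale convergence as weak convergence of linear functionals on a separable Hilbert space, so that the Banach--Alaoglu/diagonal argument applies. Concretely, I would consider the Hilbert space $\mathcal{T}:=L^2(\D\times\O,m\,dx\times\cP_0)$, which is separable because $L^2(\D,dx)$ and $L^2(\cP_0)$ are separable (the latter was noted at the start of Section~\ref{sec_tipetto}), and because the countable family $\cG$ together with $C_c(\D)$ generates a dense subset of $\mathcal{T}$. The crucial ergodic input is Proposition~\ref{prop_ergodico}, which gives, for $\tilde\o\in\O_{\rm typ}$, $g,g'\in\cG$ and $\varphi,\varphi'\in C_c(\D)$,
\be
\lim_{\e\da0}\int d\mu^\e_{\tilde\o,\D}(x)\,\varphi(x)^2\, (gg')(\t_{x/\e}\tilde\o)=\int d\cP_0(\o)\int_\D dx\, m\,\varphi(x)^2 (gg')(\o)\,,
\en
which is exactly why $\O_{\rm typ}\subset\cA[gg']$ was built into Definition~\ref{amen} (the first bullet).

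The argument then proceeds as follows. First I would fix $\tilde\o\in\O_{\rm typ}$ and a bounded family $v_\e\in L^2(\mu^\e_{\tilde\o,\D})$ with $C:=\limsup_{\e\da0}\|v_\e\|_{L^2(\mu^\e_{\tilde\o,\D})}<+\infty$. For each pair $(\varphi,g)$ with $\varphi\in C_c(\D)$ and $g\in\cG$, consider the scalar
\be
L_\e(\varphi,g):=\int d\mu^\e_{\tilde\o,\D}(x)\,v_\e(x)\,\varphi(x)\,g(\t_{x/\e}\tilde\o)\,.
\en
By Cauchy--Schwarz, $|L_\e(\varphi,g)|\le\|v_\e\|_{L^2(\mu^\e_{\tilde\o,\D})}\,\big(\int d\mu^\e_{\tilde\o,\D}(x)\,\varphi(x)^2 g(\t_{x/\e}\tilde\o)^2\big)^{1/2}$, and the second factor converges to $\big(\int\int m\varphi(x)^2 g(\o)^2\,dx\,d\cP_0\big)^{1/2}=\|\varphi g\|_{\mathcal{T}}$ by Proposition~\ref{prop_ergodico} applied with $g^2\in L^1(\cP_0)$ (here one uses $\O_{\rm typ}\subset\cA[g^2]$, which is the $g=g'$ case of the first bullet). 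Hence $\limsup_{\e\da0}|L_\e(\varphi,g)|\le C\,\|\varphi g\|_{\mathcal{T}}$. Now take a countable dense sequence $(\varphi_j)$ in $C_c(\D)$ (in the sup-norm on a fixed exhaustion, say) and enumerate $\cG=\{g_1,g_2,\dots\}$; by a diagonal extraction one obtains a subsequence $\e_k\da0$ along which $L_{\e_k}(\varphi_j,g_i)$ converges for every $i,j$. By the uniform bound $|L_{\e_k}(\varphi,g)|\le C\|\varphi g\|_{\mathcal{T}}$ and density, $\Lambda(\varphi,g):=\lim_k L_{\e_k}(\varphi,g)$ exists for all $\varphi\in C_c(\D)$, $g\in\cG$, and extends to a bounded linear functional of norm $\le C$ on the closed subspace of $\mathcal{T}$ spanned by the products $\{\varphi g\}$. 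This subspace is dense in $\mathcal{T}$ (using that $\{\nabla'$-free$\}$ combinations $\varphi\otimes g$ span: finite linear combinations of $\varphi_j\otimes g_i$ are dense because $C_c(\D)$ is dense in $L^2(\D,dx)$ and $\cG$ spans a dense subspace of $L^2(\cP_0)$ — the latter because $1\in\cG$, $\cG_1$ gives a dense subset of the mean-zero subspace by Lemma~\ref{santanna}, so together they are dense in $L^2(\cP_0)$). Therefore, by Riesz representation, there is a unique $v\in\mathcal{T}$ with $\|v\|_{\mathcal{T}}\le C$ and $\Lambda(\varphi,g)=\int\int m\,v(x,\o)\varphi(x)g(\o)\,dx\,d\cP_0$ for all $\varphi,g$, which is precisely $v_{\e_k}\stackrel{2}{\rightharpoonup}v$.

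The main obstacle — and the step deserving the most care — is establishing that the products $\{\varphi g:\varphi\in C_c(\D),\,g\in\cG\}$ span a dense subspace of $\mathcal{T}$, since only then does the limiting functional $\Lambda$ determine a genuine element $v$ of $\mathcal{T}$ rather than merely of a proper subspace; this requires knowing that $\cG$ (not just $L^2(\cP_0)$) is total in $L^2(\cP_0)$, which follows from $\{1\}\cup\cG_1\subset\cG$ combined with Lemma~\ref{santanna}. A secondary point is checking the boundedness estimate $\limsup_k|L_{\e_k}(\varphi,g)|\le C\|\varphi g\|_{\mathcal{T}}$ uniformly, so that $\Lambda$ is continuous for the $\mathcal{T}$-norm and extends by density; this is where one must invoke Proposition~\ref{prop_ergodico} for the function $g^2$ and hence needs $\O_{\rm typ}\subset\cA[g^2]$ for every $g\in\cG$. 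Everything else is the routine Banach--Alaoglu plus diagonal-subsequence machinery, exactly as in \cite[Prop.~2.2]{Z}, \cite[Lemma~5.1]{ZP} and \cite[Lemma~10.5]{Fhom}, to which one can refer for the remaining details.
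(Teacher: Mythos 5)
Your proposal is correct and is essentially the argument the paper has in mind: the paper itself omits the proof, calling it "standard" and pointing to \cite[Prop.~2.2]{Z}, \cite[Lemma~5.1]{ZP}, \cite[Lemma~10.5]{Fhom}, and your Cauchy--Schwarz bound, diagonal extraction, density of the products $\varphi\otimes g$ (via $\{1\}\cup\cG_1$ and Lemma \ref{santanna}), and Riesz representation is exactly that standard scheme, correctly using $\O_{\rm typ}\subset\cA[gg']$ for the cross terms needed to control the limit functional on the span.
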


Recall the definition of the measure $\nu$ given in \eqref{labirinto}.
 
 \begin{Definition}
Given  $\tilde \o\in \O_{\rm typ}$, an $\e$--parametrized  family of functions $w_\e \in L^2( \nu ^\e_{\tilde \o, \D })$ and  a function  $w \in L^2 \bigl( \D\times \O\times \bbR^d\,,m  dx \times d\nu\bigr)$, we say that \emph{$w_\e$ is weakly 2-scale convergent to $w$}, and write $  w_\e \stackrel{2}{\rightharpoonup}w   $,  if   $\{w_\e\}$ is bounded
 in $L^2( \nu ^\e_{\tilde \o,\D})$, i.e. $
 \limsup_{\e \da 0} \|  w_\e\|_{L^2( \nu ^\e_{\tilde \o,\D})}<+\infty$,
   and
\begin{multline}\label{yelena}
 \lim _{\e\downarrow 0} \int   d \nu_{\tilde \o,\D}^\e (x,z) w_\e (x,z ) \varphi (x) b ( \t _{x/\e} \tilde{\o},z )\\
=\int_\D  dx \,m\int d \nu (\o, z) w(x, \o,z ) \varphi (x) b (\o,z )  \,,
\end{multline}
for any    $\varphi \in C_c(\D)$  and  any  $b \in \cH $. 
\end{Definition}

 \smallskip
 
It is standard to prove the following fact  by using the second item  in Definition~\ref{amen} (cf.~\cite[Lemma~10.7]{Fhom}):

\begin{Lemma}\label{compatto2} Let $\tilde \o\in \O_{\rm typ}$.  Then, given a bounded family of functions $w_\e\in L^2 ( \nu^\e_{\tilde{\o},\D })$,  there exists a subsequence $\{w_{\e_k}\}$   such that  $w_{\e_k} \stackrel{2}{\rightharpoonup} w$ for some 
 $ w \in L^2(  \D\times \O\times \bbR^d\,,\, m dx \times \nu )$ with  $\|  w\|_{   L^2(\D\times \O\times \bbR ^d\,,\, m dx \times \nu )}\leq \limsup_{\e\da 0} \|w_\e\|_{L^2( \nu^\e_{\tilde{\o},\D})}$.
\end{Lemma}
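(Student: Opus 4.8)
The statement to be proved (Lemma~\ref{compatto2}) is the standard weak 2-scale compactness criterion for forms, and the plan is to mimic the classical argument (as in \cite[Lemma~10.7]{Fhom}, \cite[Prop.~2.2]{Z}) adapted to the present setup where the role of the extra item in Definition~\ref{amen} (the $\cA_1[bb']\cap\cA[\widehat{bb'}]$ sets) guarantees all the relevant ergodic limits hold simultaneously for $\tilde\o$. First I would fix a bounded family $w_\e\in L^2(\nu^\e_{\tilde\o,\D})$ with $M:=\limsup_{\e\da0}\|w_\e\|_{L^2(\nu^\e_{\tilde\o,\D})}<+\infty$. For each pair $(\varphi,b)$ with $\varphi\in C_c(\D)$ and $b\in\cH$ I consider the linear functional
\[
\Lambda_\e(\varphi,b):=\int d\nu^\e_{\tilde\o,\D}(x,z)\,w_\e(x,z)\,\varphi(x)\,b(\t_{x/\e}\tilde\o,z)\,.
\]
By Cauchy--Schwarz on $L^2(\nu^\e_{\tilde\o,\D})$,
\[
|\Lambda_\e(\varphi,b)|\le \|w_\e\|_{L^2(\nu^\e_{\tilde\o,\D})}\Big(\int d\nu^\e_{\tilde\o,\D}(x,z)\,\varphi(x)^2\,b(\t_{x/\e}\tilde\o,z)^2\Big)^{1/2},
\]
and since $\tilde\o\in\O_{\rm typ}\subset\cA_1[b^2]\cap\cA[\widehat{b^2}]$, Lemma~\ref{cavallo}(ii) together with Proposition~\ref{prop_ergodico} shows the integral on the right converges to $\int_\D m\,\varphi(x)^2\,dx\int d\nu(\o,z)\,b(\o,z)^2$. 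Hence $\limsup_\e|\Lambda_\e(\varphi,b)|\le M\,\|\varphi\|_{L^2(\D,m\,dx)}\,\|b\|_{L^2(\nu)}$.

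\textbf{Extracting the limit.} Next I would take a countable dense family $\{\varphi_n\}\subset C_c(\D)$ (in an appropriate sense) and use that $\cH$ is countable; by a diagonal argument over the countably many pairs $(\varphi_n,b)$ I extract a subsequence $\e_k\da0$ along which $\Lambda_{\e_k}(\varphi_n,b)$ converges for every $n$ and every $b\in\cH$. Call the limit $\Lambda(\varphi_n,b)$. The bound just established shows $(\varphi,b)\mapsto\Lambda(\varphi,b)$ extends to a bounded bilinear form on $L^2(\D,m\,dx)\times L^2(\nu)$ of norm $\le M$; equivalently, $\Lambda$ defines a bounded linear functional on the Hilbert space $L^2(\D,m\,dx)\otimes L^2(\nu)\cong L^2(\D\times\O\times\bbR^d,\,m\,dx\times\nu)$ (using the density of $\cH$ in $L^2(\nu)$, which holds since $\cH\supset\cH_1,\cH_2,\cW$ and these span dense subspaces — here one needs that $L^2_{\rm pot}(\nu)\oplus L^2_{\rm sol}(\nu)=L^2(\nu)$ together with Lemma~\ref{santanna}, so finite linear combinations of elements of $\cH$ are dense). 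By the Riesz representation theorem there is a unique $w\in L^2(\D\times\O\times\bbR^d,\,m\,dx\times\nu)$ with $\|w\|_{L^2(\D\times\O\times\bbR^d,\,m\,dx\times\nu)}\le M$ such that
\[
\Lambda(\varphi,b)=\int_\D dx\,m\int d\nu(\o,z)\,w(x,\o,z)\,\varphi(x)\,b(\o,z)
\]
for all $\varphi,b$. This is exactly \eqref{yelena} along the subsequence $\{\e_k\}$, first for the dense families and then, by the uniform bound above, for all $\varphi\in C_c(\D)$ and $b\in\cH$.

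\textbf{Main obstacle.} The routine parts are the Cauchy--Schwarz bound and the diagonal extraction; the only genuinely delicate point is the passage from "$\Lambda$ bounded on the algebraic tensor product generated by $C_c(\D)\times\cH$" to "$\Lambda$ bounded (hence Riesz-representable) on the full Hilbert space $L^2(\D\times\O\times\bbR^d,\,m\,dx\times\nu)$." This requires that the linear span of products $\varphi(x)b(\o,z)$ with $\varphi\in C_c(\D)$, $b\in\cH$ be dense in that $L^2$ space. Density in the $x$-variable is standard ($C_c(\D)$ dense in $L^2(\D,m\,dx)$); density in the $(\o,z)$-variable is the content of $\cH$ being chosen so that $\mathrm{span}(\cH_1\cup\cH_2\cup\cW)$ is dense in $L^2(\nu)$, which follows from $L^2(\nu)=L^2_{\rm pot}(\nu)\oplus L^2_{\rm sol}(\nu)$, the density of $\{\nabla g:g\in\cG_2\}$ in $L^2_{\rm pot}(\nu)$, and the density of $\cW$ in $L^2_{\rm sol}(\nu)$. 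Once this density is in hand, the uniform operator bound propagates the convergence and representation to all test pairs, completing the proof. I would also remark that the measurability/well-posedness of all the ergodic averages used is guaranteed precisely because $\tilde\o$ was taken in $\O_{\rm typ}$, which was built in Definition~\ref{amen} to contain $\cA_1[bb']\cap\cA[\widehat{bb'}]$ for all $b,b'\in\cH$.
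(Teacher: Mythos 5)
Your proposal is correct and follows essentially the same route the paper intends: the paper does not write out a proof but declares the lemma "standard" via the second item of Definition \ref{amen} and the argument of \cite[Lemma~10.7]{Fhom}, \cite[Prop.~2.2]{Z}, which is exactly your Cauchy--Schwarz bound (with the ergodic limit of $\int d\nu^\e_{\tilde\o,\D}\,\varphi^2\,b(\t_{x/\e}\tilde\o,\cdot)^2$ controlled through $\cA_1[b^2]\cap\cA[\widehat{b^2}]$, Lemma \ref{cavallo} and Proposition \ref{prop_ergodico}), diagonal extraction over the countable family $\cH$ and a countable dense set of test functions, and Riesz representation on $L^2(\D\times\O\times\bbR^d, m\,dx\times\nu)$ using that ${\rm span}(\cH_2\cup\cW)$ is dense in $L^2(\nu)=L^2_{\rm pot}(\nu)\oplus L^2_{\rm sol}(\nu)$. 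No gaps worth flagging.
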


\section{$2$-scale limits of uniformly bounded functions} 
\label{sec_bike}


We fix $\tilde \o \in \O_{\rm typ}$. The domain $\D$ below  can be $\L,S$. We  consider a family of functions $\{f_\e\}$  with $f_\e: \e \widehat{\tilde \o} \cap S  \to \bbR $
such that 
\begin{align}
& \limsup_{\e\da 0} \|  f_\e\|_\infty <+\infty\,,\label{re1}\\
&\limsup_{\e \da 0}   \| f_\e \|_{L^2(  \mu^\e_{\tilde \o, \D})} <+\infty \,,\label{re2} \\
& \limsup_{\e \da 0}   \|\nabla_\e  f_\e \|_{L^2(\nu^\e_{\tilde \o,\D})} < +\infty\,.\label{re3}
\end{align}

Due to Lemmata \ref{compatto1} and \ref{compatto2}, along a subsequence $\{\e_k\}$ we have 
\begin{align}
&L^2(  \mu^\e_{\tilde \o, \D})\ni  f_\e \stackrel{2}{\toup} v  \in L^2(\D \times \O,  m dx \times \cP_0 ) \,,\label{totani1}\\
& L^2(\nu^\e_{\tilde \o,\D})\ni \nabla_\e{f}_\e \stackrel{2}{\toup} w\in   L^2(\D\times \O\times \bbR^d  \,,\, m dx \times \nu ) \,, \label{totani2}
\end{align}
for suitable functions $v,w$. 
\begin{Warning}
 In this section (with exception of Lemma \ref{lunghissimo} and Claim \ref{mengoni}), when taking the limit $\e\da 0$, we understood that $\e$ varies along the subsequence  $\{\e_k\}$ satisfying \eqref{totani1} and \eqref{totani2}.  We set 
 $  \bar f_\e(x):=0 $ for $ x\in \e \widehat{\tilde  \o} \setminus S $.
\end{Warning}

The structural results presented below (cf. Propositions \ref{prova2} and \ref{oro})  correspond to  a general strategy in homogenization by 2-scale convergence (see Propositions 12.1 and 14.1 in \cite{Fhom}, Lemmata 5.3 and  5.4 in \cite{ZP}, Theorems 4.1 and 4.2 in \cite{Z}).
 Condition \eqref{re1} would not be strictly necessary, but it allows important technical simplifications, and in particular it allows to avoid the cut-off procedures developed in  \cite[Sections 11,13]{Fhom} in order to deal with the long jumps in the Markov generator  \eqref{degregori}. We will apply Propositions \ref{prova2} and \ref{oro} only to the following  cases: $\D=\L$ and $f_\e={ V_\e}$; $\D=S$ and $f_\e=V_\e-\psi$. In both cases  \eqref{re1}, \eqref{re2} and \eqref{re3}  are satisfied by   Remark \ref{fiorellino3} and Lemma \ref{paletta}. 

\smallskip
 In what follows we will use the following control on long filaments (recall \eqref{melanzane}):
 
 \begin{Lemma}\label{lunghissimo}  Given $\tilde \o \in \O_{\rm typ}$,  $\ell>0$ and $\varphi \in C_c(\bbR^d)$,  it holds
 \be\label{lionello}
\lim_{\e\da 0}  \e^{-2} \int d\nu ^\e_{\tilde \o} (x,z) \,|\varphi(x)| \, \mathds{1}(|z|\geq \ell/\e)=0\,.
\en
\end{Lemma}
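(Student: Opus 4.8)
The plan is to reduce the statement to an application of the ergodic-type averaging result, Proposition~\ref{prop_ergodico}, combined with the tail-integrability assumption \eqref{zarina} in (A6). First I would unpack the left-hand side of \eqref{lionello}: by the definition of $\nu^\e_{\tilde\o}$ in \eqref{melanzane} we have
\[
\e^{-2}\int d\nu^\e_{\tilde\o}(x,z)\,|\varphi(x)|\,\mathds{1}(|z|\ge \ell/\e)
=\e^{d-2}\sum_{x\in\e\widehat{\tilde\o}}\;\sum_{y\in\e\widehat{\tilde\o}}c_{x/\e,y/\e}(\tilde\o)\,|\varphi(x)|\,\mathds{1}\bigl(|y-x|\ge\ell\bigr).
\]
Rescaling, this equals $\e^{d-2}\sum_{x\in\widehat{\tilde\o}}|\varphi(\e x)|\sum_{y\in\widehat{\tilde\o}}c_{x,y}(\tilde\o)\mathds{1}(|y-x|\ge\ell/\e)$. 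The inner sum over $y$ is $\int d\widehat{\tilde\o}(z)c_{0,z}(\tau_x\tilde\o)\mathds{1}(|z|\ge\ell/\e)$ by covariance (A4). The factor $\e^{-2}$ is the obstruction to a naive application of ergodicity, so I would exploit the cutoff $|z|\ge\ell/\e$ together with \eqref{downtown}.

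The key step is to absorb the $\e^{-2}$ using the decay in \eqref{downtown}. Fix $\delta\in(0,1)$. On the event $|z|\ge\ell/\e$ we have, for $\e$ small (so that $\ell/\e$ exceeds the threshold implicit in \eqref{downtown}), $c_{0,z}(\tau_x\tilde\o)^{1-\a}\le C|z|^{-2}\le C(\e/\ell)^2$, hence $\e^{-2}c_{0,z}(\tau_x\tilde\o)=\e^{-2}c_{0,z}^{\,\a}\,c_{0,z}^{\,1-\a}\le C\ell^{-2}c_{0,z}(\tau_x\tilde\o)^{\a}$. More precisely I would write $\e^{-2}c_{0,z}\mathds{1}(|z|\ge\ell/\e)\le C\ell^{-2}\,c_{0,z}^{\,\a}\,|z|^2\,|z|^{-2}\cdot(\e^{-2}|z|^2c_{0,z}^{1-\a})$; using \eqref{downtown} in the form $|z|^2 c_{0,z}^{1-\a}\le C$ for $|z|\ge\ell/\e$ and $|z|^{-2}\le\e^2/\ell^2$ gives the bound $C\ell^{-2}c_{0,z}(\tau_x\tilde\o)^{\a}|z|^2$. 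Therefore
\[
\e^{-2}\int d\nu^\e_{\tilde\o}(x,z)\,|\varphi(x)|\,\mathds{1}(|z|\ge\ell/\e)
\le C\ell^{-2}\,\e^{d}\sum_{x\in\widehat{\tilde\o}}|\varphi(\e x)|\int d\widehat{\tilde\o}(z)\,c_{0,z}(\tau_x\tilde\o)^{\a}|z|^2
=C\ell^{-2}\int d\mu^\e_{\tilde\o}(x)\,|\varphi(x)|\,g(\tau_{x/\e}\tilde\o),
\]
where $g(\o):=\int d\widehat\o(z)c_{0,z}(\o)^{\a}|z|^2$.

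Now I would invoke Proposition~\ref{prop_ergodico}: since $\tilde\o\in\O_{\rm typ}\subset\cA\bigl[\int d\widehat\o(z)c_{0,z}(\o)^\a|z|^2\bigr]$ (this is one of the sets in Definition~\ref{amen}, coming from \eqref{zarina}), and $g\in L^1(\cP_0)$ by \eqref{zarina}, we get
\[
\lim_{\e\da0}\int d\mu^\e_{\tilde\o}(x)\,|\varphi(x)|\,g(\tau_{x/\e}\tilde\o)=\int dx\,m\,|\varphi(x)|\,\bbE_0[g]<+\infty.
\]
Hence the left-hand side of \eqref{lionello} is bounded above, for every fixed $\ell$, by a quantity whose $\limsup$ as $\e\da0$ is at most $C\ell^{-2}\,m\,\bbE_0[g]\int|\varphi(x)|\,dx$. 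But actually $\ell$ is fixed in the statement, so I would instead note that the constant $C$ in \eqref{downtown} is absorbed and simply conclude $\limsup_{\e\da0}(\text{LHS})\le C\ell^{-2}m\bbE_0[g]\int|\varphi|\,dx$; to get the limit is exactly $0$ rather than merely bounded, I observe that the threshold in \eqref{downtown} means the bound $|z|^2c_{0,z}^{1-\a}\le C$ holds for $|z|\ge\ell/\e$ only once $\ell/\e$ is large, and more importantly that one can insert an extra $\mathds{1}(|z|\ge R)$ for any fixed $R$ and take $R\to\infty$ after $\e\to 0$, using dominated convergence against $\bbE_0[g]<\infty$ to send the tail to zero. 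So the refined estimate is
\[
\limsup_{\e\da0}\e^{-2}\int d\nu^\e_{\tilde\o}(x,z)|\varphi(x)|\mathds{1}(|z|\ge\ell/\e)
\le C\ell^{-2}m\int|\varphi(x)|\,dx\;\bbE_0\!\Bigl[\int d\widehat\o(z)c_{0,z}(\o)^\a|z|^2\mathds{1}(|z|\ge R)\Bigr],
\]
valid for every $R$; letting $R\to\infty$ forces the right-hand side to $0$. (Here I use $\tilde\o\in\cA[\int d\widehat\o(z)c_{0,z}(\o)^\a|z|^2\mathds{1}_{|z|\ge n}]$, also listed in Definition~\ref{amen}.) I expect the only delicate point to be bookkeeping the relation between the threshold $\ell_0$ implicit in \eqref{downtown} and the vanishing cutoff $\ell/\e\to\infty$, i.e.\ making sure the inequality $|z|^2c_{0,z}^{1-\a}\le C$ is legitimately applied on the whole range $|z|\ge\ell/\e$ for $\e$ small; this is routine. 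The main conceptual step — trading the singular prefactor $\e^{-2}$ against the $|z|^{-2}$-type decay of the conductances on long edges supplied by \eqref{downtown}, thereby landing in $L^1(\cP_0)$ and invoking the refined ergodic theorem — is the heart of the argument.
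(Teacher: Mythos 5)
Your argument is, at its core, the same as the paper's: factor $c_{0,z}=c_{0,z}^{\a}\,c_{0,z}^{1-\a}$, use \eqref{downtown} on the range $|z|\ge \ell/\e$ to trade the singular prefactor $\e^{-2}$ for a constant of order $\ell^{-2}$, and then control the remaining average of $c_{0,z}^{\a}$ with the refined ergodic theorem (Proposition \ref{prop_ergodico}) plus a cutoff $\mathds{1}(|z|\ge R)$ whose contribution vanishes as $R\to\infty$ by dominated convergence. That is exactly the heart of the paper's proof.

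There is, however, one concrete slip. Your bound retains a gratuitous factor $|z|^2$: the clean computation on $\{|z|\ge \ell/\e\}$ is $\e^{-2}c_{0,z}^{1-\a}\le \e^{-2}\,C|z|^{-2}\le C\ell^{-2}$, hence $\e^{-2}c_{0,z}\le C\ell^{-2}c_{0,z}^{\a}$ with no $|z|^2$ whatsoever. Because you carry the extra $|z|^2$, you end up invoking $\tilde\o\in\cA\bigl[\int d\hat\o(z)\,c_{0,z}(\o)^{\a}|z|^2\,\mathds{1}_{\{|z|\ge R\}}\bigr]$ and the finiteness \eqref{zarina}; but these sets are \emph{not} among those intersected in Definition \ref{amen} (the listed ones are $\cA[\int d\hat\o(z)\,c_{0,z}(\o)^{\a}\mathds{1}_{\{|z|\ge n\}}]$ without the weight $|z|^2$, $\cA[\int d\hat\o(z)\,c_{0,z}(\o)|z|^2\mathds{1}_{\{|z|\ge n\}}]$ without the exponent $\a$, and $\cA[\int d\hat\o(z)\,c_{0,z}(\o)^{\a}z_1^2]$ with $z_1^2$ rather than $|z|^2$). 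So, as written, your proof uses properties of $\O_{\rm typ}$ that its definition does not guarantee. The fix is immediate: drop the $|z|^2$, bound the left-hand side of \eqref{lionello} by $\e^{-2}\k(\ell/\e)\int d\mu^\e_{\tilde\o}(x)|\varphi(x)|\,h_{\a,n}(\t_{x/\e}\tilde\o)$ with $h_{\a,n}(\o):=\int d\hat\o(z)c_{0,z}(\o)^{\a}\mathds{1}(|z|\ge n)$ and $\k(t):=\sup_{\o\in\O_0}\sup_{|z|\ge t}c_{0,z}(\o)^{1-\a}$, note $\limsup_{\e\da 0}\e^{-2}\k(\ell/\e)<+\infty$ by \eqref{downtown}, use the listed sets $\cA_1[c_{0,z}(\o)^{\a}]\cap\cA[h_{\a,n}]$, and send $n\to\infty$ via \eqref{sorpresina} instead of \eqref{zarina}. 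Your muddled intermediate inequality chain (the expression $C\ell^{-2}c_{0,z}^{\a}|z|^2|z|^{-2}(\e^{-2}|z|^2c_{0,z}^{1-\a})$) also disappears once the $|z|^2$ is removed.
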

\begin{proof} Let $\a\in (0,1)$ be as in  (A6).
We set  
$\k(t):=\sup_{\o \in \O_0, |z| \geq t} c_{0,z}(\o) ^{1-\a} $ and
$
h_{\a,n}(\o):= \int d \hat \o  (z)  c_{0,z}(\o)^\a \mathds{1}(|z|\geq n)$ 
for $n \in \bbN$.
For $\ell /\e\geq n$, we can bound the l.h.s. of \eqref{lionello} by 
\be\label{yom}
\begin{split}
& \e^{-2}  \int d \mu^\e_{\tilde \o} (x) |\varphi(x)|\int d \widehat{\t_{x/\e} \tilde \o}(z)  c_{0,z}(\t_{x/\e}\tilde \o)\mathds{1}(|z|\geq \ell/\e)\\
&  \leq\e^{-2} \k(\ell/\e)  \int d \mu^\e_{\tilde \o} (x)| \varphi(x) |  \int d \widehat{\t_{x/\e} \tilde \o}(z)  c_{0,z}(\t_{x/\e}\tilde \o)^\a \mathds{1}(|z|\geq n)\\
& =\e^{-2} \k(\ell/\e)   \int d \mu^\e_{\tilde \o} (x) |\varphi(x)|  h_{\a,n}(\t_{x/\e} \tilde \o)\,.
\end{split}
\en
By \eqref{downtown} we have $\limsup_{\e \da 0}\e^{-2}\k(\ell/\e) <+\infty$. Since $\tilde \o \in \O_{\rm typ}\subset \cA_1[ c_{0,z}(\o)^\a] \cap  \cA[h_{\a,n}]$,   we have $\int d \mu^\e_{\tilde \o} (x) |\varphi(x)|\ h_\a(\t_{x/\e} \tilde \o) \to \int dx \,m |\varphi(x)|  \bbE_0[h_{\a,n}] $ as $\e\da 0$. By taking  the limit $n\to \infty$ we get \eqref{lionello} due to \eqref{sorpresina}.
\end{proof}

\begin{Proposition}\label{prova2}
 For $dx$--a.e. $x\in \D$, the map  $ v(x,\o)$ given in \eqref{totani1}  does not depend on $\o$.
\end{Proposition}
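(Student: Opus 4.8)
The plan is to exploit that the test functions $g_b\in\cG_1$ are divergences, $g_b={\rm div}\, b$, together with the discrete integration by parts of Lemma~\ref{lunetta}. The goal is to prove that
\begin{equation*}
\int_\D dx\, m\int d\cP_0(\o)\, v(x,\o)\,\varphi(x)\,g_b(\o)=0\qquad\text{for all }b\in\cH_1\text{ and }\varphi\in C_c(\D).
\end{equation*}
Granting this, recall that $\{g_b:b\in\cH_1\}$ is a countable dense subset of $\{w\in L^2(\cP_0):\bbE_0[w]=0\}$ (Lemma~\ref{santanna} and the construction of $\cG_1$). Varying $\varphi$ in the displayed identity gives, for each fixed $b$, that $\int v(x,\o)g_b(\o)\,d\cP_0(\o)=0$ for $dx$-a.e.\ $x\in\D$; since $\cH_1$ is countable one may use a single exceptional null set, so that for $dx$-a.e.\ $x$ the function $\o\mapsto v(x,\o)\in L^2(\cP_0)$ is orthogonal to every mean-zero element of $L^2(\cP_0)$. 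Hence $v(x,\cdot)-\bbE_0[v(x,\cdot)]$ is orthogonal to itself, i.e.\ $v(x,\cdot)$ is $\cP_0$-a.s.\ constant, which is the assertion.

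To prove the displayed identity, apply the weak 2-scale convergence \eqref{totani1} (along the subsequence $\{\e_k\}$): since $g_b\in\cG$ and $\varphi\in C_c(\D)$, its left-hand side equals $\lim_{\e\da 0}\int d\mu^\e_{\tilde\o,\D}(x)\,f_\e(x)\varphi(x)\,g_b(\t_{x/\e}\tilde\o)$. As $\varphi$ is supported in a compact subset of $\D$, this integral is unchanged when $\mu^\e_{\tilde\o,\D}$ is replaced by $\mu^\e_{\tilde\o}$; moreover $\tilde\o\in\O_{\rm typ}\subset\cA_1[b]\cap\cA_1[\tilde b]$ with $\cA_1[\cdot]$ translation invariant, so $g_b(\t_{x/\e}\tilde\o)=\mathrm{div}_* b(\t_{x/\e}\tilde\o)$ for every charged point $x\in\e\widehat{\tilde\o}$. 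Applying Lemma~\ref{lunetta} with the compactly supported function $u=\bar f_\e\varphi$ (note $\bar f_\e=f_\e$ on $S\supseteq\mathrm{supp}\,\varphi$) yields
\begin{equation*}
\int d\mu^\e_{\tilde\o}(x)\,\bar f_\e(x)\varphi(x)\,\mathrm{div}_* b(\t_{x/\e}\tilde\o)=-\e\int d\nu^\e_{\tilde\o}(x,z)\,\nabla_\e(\bar f_\e\varphi)(x,z)\,b(\t_{x/\e}\tilde\o,z),
\end{equation*}
so everything reduces to showing that the right-hand side vanishes as $\e\da 0$.

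This error estimate is the main obstacle; it is where conditions \eqref{re1}, \eqref{re2}, \eqref{re3} and the control of long filaments enter. By the Leibniz rule \eqref{leibniz}, $\nabla_\e(\bar f_\e\varphi)(x,z)=\varphi(x)\nabla_\e\bar f_\e(x,z)+\bar f_\e(x+\e z)\nabla_\e\varphi(x,z)$. For the term carrying $\nabla_\e\varphi$, bound $|\bar f_\e|$ by \eqref{re1} and $|\nabla_\e\varphi(x,z)|$ by \eqref{paradiso} (with the cutoff $\phi$), move the resulting $\phi(x+\e z)$ factor onto $\phi(x)$ via Lemma~\ref{gattonaZ}(i) (replacing $b$ by $\tilde b$), and then use Cauchy--Schwarz: the factor $\int d\nu^\e_{\tilde\o}(x,z)\,\phi(x)|z|^2\to m\,\bbE_0[\l_2]\int\phi\, dx$ stays bounded (using $\tilde\o\in\cA_1[|z|^2]\cap\cA[\l_2]$) and so do the factors $\int d\nu^\e_{\tilde\o}(x,z)\,\phi(x)\,b(\t_{x/\e}\tilde\o,z)^2$ and $\int d\nu^\e_{\tilde\o}(x,z)\,\phi(x)\,\tilde b(\t_{x/\e}\tilde\o,z)^2$ (using that $\tilde\o$ lies in $\cA_1[b^2]\cap\cA_1[\tilde b^2]\cap\cA[\widehat{b^2}]\cap\cA[\widehat{\tilde b^2}]$), so the prefactor $\e$ makes this term $O(\e)$.

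For the term carrying $\varphi(x)\nabla_\e\bar f_\e(x,z)$, fix $\ell>0$ with $\ell<\mathrm{dist}(\mathrm{supp}\,\varphi,\partial\D)$ and split the $z$-integral at $|z|=\ell/\e$. On $\{|z|<\ell/\e\}$ one has $x+\e z\in\D\subseteq S$ whenever $\varphi(x)\neq 0$, hence $\bar f_\e$ agrees with $f_\e$ at both $x$ and $x+\e z$ and $\nabla_\e\bar f_\e=\nabla_\e f_\e$ there (and the relevant pairs are charged by $\nu^\e_{\tilde\o,\D}$); Cauchy--Schwarz against $b$ then bounds this part by $\e\,\|\varphi\|_\infty\,\|\nabla_\e f_\e\|_{L^2(\nu^\e_{\tilde\o,\D})}\bigl(\int d\nu^\e_{\tilde\o}(x,z)|\varphi(x)|\,b(\t_{x/\e}\tilde\o,z)^2\bigr)^{1/2}=O(\e)$, using \eqref{re3} and that the last integral converges by $\tilde\o\in\cA_1[b^2]\cap\cA[\widehat{b^2}]$ and $\|b\|_{L^2(\nu)}<+\infty$. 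On $\{|z|\ge\ell/\e\}$ use the crude bound $|\nabla_\e\bar f_\e|\le 2\|f_\e\|_\infty/\e$ from \eqref{re1}, which cancels the prefactor $\e$; after one more Cauchy--Schwarz in $b$ this part is at most a constant times $\bigl(\int d\nu^\e_{\tilde\o}(x,z)|\varphi(x)|\,\mathds{1}(|z|\ge\ell/\e)\bigr)^{1/2}$, which tends to $0$ by Lemma~\ref{lunghissimo}. Collecting these bounds shows the right-hand side of the integration-by-parts identity vanishes, completing the proof.
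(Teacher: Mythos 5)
Your proof is correct and follows essentially the same route as the paper's: the same reduction to testing against $g_b=\mathrm{div}\,b$ for $b\in\cH_1$, the same integration by parts via Lemma~\ref{lunetta} and the Leibniz rule, and the same error estimates using \eqref{re1}--\eqref{re3}, \eqref{paradiso}, Lemma~\ref{gattonaZ}(i), the typicality conditions and Lemma~\ref{lunghissimo}. The only cosmetic difference is that you split the $\varphi(x)\nabla_\e\bar f_\e$ term at $|z|=\ell/\e$ before applying Cauchy--Schwarz, whereas the paper applies Cauchy--Schwarz first and then splits the resulting gradient integral; both yield the same $O(\e)$ and $o(1)$ bounds.
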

\begin{proof} 
Recall the definition of the functional sets $\cG_1, \cH_1$ given in Section \ref{sec_tipetto}.
We  claim that $\forall \varphi \in C^1_c(\D)$ and $\forall \psi \in \cG_1$ it holds 
\begin{equation}\label{chiavetta}
\int _\D dx \,m \int d\cP_0 (\o) v (x,\o) \varphi (x) \psi(\o)=0\,.
\end{equation}
Before proving our claim, let us explain how it leads to the thesis. Since $\varphi $ varies among $C^1_c(\D)$ while $\psi$ varies in a countable set,  \eqref{chiavetta} implies that,  $dx$--a.e. on $\D$,  $ \int \cP_0 (\o) v (x,\o)\psi(\o)
=0$ for any $\psi \in \cG_1$. We conclude that,  $dx$--a.e. on $\D$, $v(x,\cdot)$ is orthogonal in $L^2(\cP_0 )$ to $\{ w \in L^2(\cP_0 )\,:\, \bbE_0[ w]=0\}$ (due to the density of $\cG_1$), which is equivalent to the fact that $v(x,\o)= \bbE_0[ v(x, \cdot)]$ for $\cP_0$--a.a. $\o$.

It now remains to prove \eqref{chiavetta}. 
 We first note  that, by   \eqref{rabarbaro},  \eqref{totani1} and since  $\tilde \o \in \O_{\rm typ}$ and $\psi \in \cG_1\subset \cG$,
\begin{equation}\label{nord1}
\text{l.h.s. of }\eqref{chiavetta}= \lim_{\e\da 0} \int d \mu^\e_{\tilde \o,\D} (x) f_\e(x) \varphi (x) \psi( \t_{x/\e}\tilde \o) \,.
\end{equation}
Let us take $\psi= g_b$  with $b \in \cH_1$ as in \eqref{lupetto}. 
 By Lemma \ref{lunetta}  and since $\tilde \o \in \O_{\rm typ}\subset \cA_1[b]\cap \cA_1[\tilde b]$, 
we have
\be\label{incubi}
\begin{split}
\int d \mu^\e_{\tilde \o,\D} (x) f_\e(x) \varphi (x) \psi( \t_{x/\e}\tilde \o) & =\int d\mu^\e_{\tilde \o} (x) \bar f_\e(x) \varphi (x) \psi( \t_{x/\e}\tilde \o) \\
&=
- \e \int d  \nu ^\e_{\tilde \o} (x,z) \nabla_\e( \bar f_\e  \varphi)(x,z) b ( \t_{x/\e} \tilde \o, z)\,.
\end{split}
\en
As usual,   we think $C_c(\D)\subset  C_c(\bbR^d)$ and we keep the same notation for  $\varphi$ thought in $C_c(\bbR^d)$. By \eqref{leibniz} we have
\be \label{nord2}
 -\e \int d\nu ^\e_{\tilde \o} (x,z) \nabla_\e( \bar f_\e \varphi ) (x,z) b (\t_{x/\e} \tilde{\o}, z)= -\e C_1(\e)+ \e C_2(\e)\,,
\en
where
 \begin{align*}
& C_1(\e):= \int d\nu ^\e_{\tilde \o} (x,z) \nabla_\e \bar f_\e (x,z) \varphi(  x) b (\t_{x/\e} \tilde{\o}, z)\,,\\
&  C_2(\e):=  \int d \nu ^\e_{\tilde \o} (x,z) \bar f_\e ( x+\e z  ) \nabla_\e \varphi( x,z) b (\t_{x/\e} \tilde{\o}, z)   \,.
\end{align*}

Due to \eqref{nord1}, \eqref{incubi} and \eqref{nord2}, to get \eqref{chiavetta} we only need to show that $\lim_{\e \da 0} \e C_1(\e)=0$ and $\lim_{\e \da 0} \e C_2(\e)=0$.

We start with $C_1(\e)$. By Schwarz inequality  and since $\tilde \o \in \O_{\rm typ}\subset   \cA _1 [ b^2  ] $
\bes
|C_1(\e)|\leq  \Big[ \int d\nu ^\e_{\tilde \o} (x,z) |\varphi(x)| \nabla _\e \bar f _\e (x,z) ^2\Big]^{1/2} 
\Big[ \int \mu^\e_{\tilde \o} (x) | \varphi(x) |  \widehat{\, b^2\, }(\t_{x/\e} \tilde\o)
\Big]^{1/2}\,.
\ens
Since $\tilde \o \in \O_{\rm typ}\subset   \cA _1 [ b^2  ] \cap \cA [  \widehat{\,b^2\,}  ]$, the last integral in the  r.h.s. converges to a finite constant as $\e \da 0$. 
It remains to prove that $ \int d\nu ^\e_{\tilde \o} (x,z) |\varphi(x)| \nabla _\e \bar f _\e (x,z) ^2 $
remains bounded from above as $\e \da 0$.  
We call $\ell $ the distance between  the support  of $\varphi$ (which is contained in $\D$ as $\varphi \in C^1_c(\D)$) and $\partial \D$.  Then, between the pairs $(x,z) $ with $x+\e z \not \in S$  contributing to the above integral, only the pairs 
$(x,z)$ such that $x\in \D$ and $|z| \geq \ell/\e$ can give a nonzero contribution.
In both cases $\D=\L$ and $\D=S$ we can estimate 
\be\label{olivia}
\begin{split}
\int d\nu ^\e_{\tilde \o} (x,z) |\varphi(x)| & \nabla _\e \bar f _\e (x,z) ^2  
 \leq \int d\nu ^\e_{\tilde \o,\D} (x,z) |\varphi(x)| \nabla _\e \bar f _\e (x,z) ^2  \\
&+
\int d\nu ^\e_{\tilde \o} (x,z)| \varphi(x) |\nabla _\e \bar f _\e (x,z) ^2\mathds{1}(|z| \geq \ell /\e)\,.
\end{split}
\en
The first addendum in the r.h.s. of \eqref{olivia} is bounded due to \eqref{re3}. The second addendum goes to zero due to  \eqref{re1} (implying that $|\nabla_\e \bar f|\leq C/\e$ for small $\e$) and  Lemma \ref{lunghissimo}. Hence the l.h.s. of \eqref{olivia} remains bounded as $\e\da 0$.
This completes the proof that $\lim_{\e \da 0} \e C_1(\e)=0$.

 We move to $C_2(\e)$. Let $\phi$ be  as in \eqref{paradiso}. 
Using  \eqref{paradiso} and \eqref{re1}, and afterwards Lemma \ref{gattonaZ}--(i), for some $\e$--independent constants $C$'s  (which can change from line to line), for $\e$ small   we can bound 
\be \label{crepe}
\begin{split}
& |C_2(\e)|  \leq C
 \int d \nu ^\e_{\tilde \o} (x,z)\bigl | \nabla_\e \varphi( x,z) b (\t_{x/\e} \tilde{\o}, z)\bigr| \\
& \leq C \int d   \nu ^\e _{\tilde \o} (x,z) |z|\,|  b (\t_{x/\e}\tilde \o, z) | \bigl( \phi(x)+ \phi(x+\e z) \bigr) \\
 &\leq  C \int d  \nu ^\e_{\tilde \o} (x,z) 
 \phi(x) |z| (| b|+|\tilde b|)  (\t_{x/\e} \tilde{\o}, z) \\
 & \leq C \Big[ \int d  \nu ^\e_{\tilde \o} (x,z) 
 \phi(x)  |z| ^2 \Big]^{1/2} \Big[ 2 \int d  \nu ^\e_{\tilde \o} (x,z) 
 \phi(x)  (b^2 +\tilde b^2 )  (\t_{x/\e} \tilde{\o}, z)  \Big]^{1/2}\,.
 \end{split}
 \en
 The first integral in the last line of \eqref{crepe} equals $\int d \mu ^\e _{\tilde \o} (x)\phi(x)\l_2 (\t_{x/\e}\tilde \o)$. Since $\tilde \o \in \O_{\rm typ}\subset \cA_1[|z|^2] \cap \cA[\l_2]$, this integral converges to  a finite constant  as $\e\da0$.  The second  integral in the last line of  \eqref{crepe} equals 
 \begin{equation}\label{nuoto}
 \int d  \mu ^\e_{\tilde \o} (x) 
 \phi(x)  (\widehat{\, b^2\,} +\widehat{\,\tilde b^2\,} )  (\t_{x/\e} \tilde{\o})
 \end{equation}
 as $\tilde \o \in \O_{\rm typ}\subset \cA_1[b^2 ]\cap  \cA_1[ \tilde{b}^2 ]$. 
 Since    $\tilde \o \in \O_{\rm typ}\subset \cA[  \widehat{\, b^2\,}   ]\cap \cA[\widehat{\,\tilde b^2\,} ] $, the integral \eqref{nuoto} converges
 to a finite constant. This implies  that $\lim_{\e \da 0} \e C_2(\e)=0$.
 \end{proof}

Due to  Proposition \ref{prova2} we can  write $v(x)$ instead of $v(x,\o)$, where $v$ is given by \eqref{totani1}. Recall the index $d_*$ introduced in Warning \ref{stellina5} and recall \eqref{aria7}.

\begin{Proposition}\label{oro} Let $v$ and $w$ be as in \eqref{totani1} and \eqref{totani2}. Then it holds:
\begin{itemize}
\item[(i)]
 $v $ has weak derivatives $\partial_j v \in L^2(\D,dx)$ for  $j: 1\leq j \leq d_*$;  
\item[(ii)] $w(x,\o,z)= \nabla_* v (x) \cdot z   + v_1 (x,\o,z)$,
where $v_1\in L^2\bigl( \D, dx;L^2_{\rm pot} (\nu)\bigr )$.
\end{itemize}
\end{Proposition}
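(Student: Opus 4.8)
The plan is to follow the standard two-scale homogenization strategy (as in \cite[Prop.~14.1]{Fhom}, \cite[Lemma~5.4]{ZP}, \cite[Thm.~4.2]{Z}), exploiting the countable test families $\cG_2,\cH_2$ collected in Section \ref{sec_tipetto} and the divergence identities for solenoidal forms. For part (i), the idea is to test the weak $2$-scale convergence \eqref{totani2} of $\nabla_\e \bar f_\e$ against a product of the form $\varphi(x)\, b(\t_{x/\e}\tilde\o,z)$ with $b\in\cW\subset L^2_{\rm sol}(\nu)$ a solenoidal form. Using Lemma \ref{lunetta} (or rather Lemma \ref{gattonaZ}--(ii) and the Leibniz rule \eqref{leibniz}) one rewrites
\[
\int d\nu^\e_{\tilde\o}(x,z)\,\nabla_\e \bar f_\e(x,z)\,\varphi(x)\,b(\t_{x/\e}\tilde\o,z)
\]
as $-\int d\mu^\e_{\tilde\o}(x)\,\bar f_\e(x)\,{\rm div}_*b(\t_{x/\e}\tilde\o)\,\varphi(x)$ plus a term carrying $\nabla_\e\varphi$, exactly as in the proof of Proposition \ref{prova2}. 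Since $\tilde\o\in\O_{\rm typ}\subset\cA_d[b]$, we have ${\rm div}_*b(\t_{x/\e}\tilde\o)=0$ for all relevant $x$, so the first term vanishes; the second term, after controlling long jumps via Lemma \ref{lunghissimo} and \eqref{re1}, converges to $-\int_\D dx\,m\,v(x)\,\nabla\varphi(x)\cdot\bigl(\int d\nu(\o,z)\,b(\o,z)z\bigr)$. On the other hand the left-hand side converges to $\int_\D dx\,m\,\varphi(x)\int d\nu(\o,z)\,w(x,\o,z)b(\o,z)$. Comparing, and using Corollary \ref{rock_bis} to identify $\{\int d\nu(\o,z)b(\o,z)z : b\in L^2_{\rm sol}(\nu)\}={\rm Span}\{e_1,\dots,e_{d_*}\}$, one extracts the bound $|\int_\D v\,\partial_j\varphi\,dx|\le C\|\varphi\|_{L^2(\D)}$ for $1\le j\le d_*$; this is the distributional criterion for $\partial_j v\in L^2(\D,dx)$.

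For part (ii), the plan is to determine the microscopic dependence of $w$ in the $z$-variable. Writing $v_1(x,\o,z):=w(x,\o,z)-\nabla_*v(x)\cdot z$, one must show that for a.e.\ $x$ the form $v_1(x,\cdot,\cdot)$ lies in $L^2_{\rm pot}(\nu)$, equivalently that it is orthogonal in $L^2(\nu)$ to every solenoidal form $b\in\cW$. The computation just described in part (i) already gives, for every $\varphi\in C_c^1(\D)$ and $b\in\cW$,
\[
\int_\D dx\,m\,\varphi(x)\int d\nu(\o,z)\,w(x,\o,z)b(\o,z)=\int_\D dx\,m\,\bigl(\nabla v(x)\cdot\textstyle\int d\nu\,bz\bigr)\varphi(x),
\]
where I have used $v(x)\partial_j\varphi$ integrated by parts (legitimate by part (i)) and then Corollary \ref{rock_bis} to see that only the first $d_*$ components of $\int d\nu\,bz$ survive, so the right-hand side equals $\int_\D dx\,m\,\varphi(x)\int d\nu(\o,z)\,(\nabla_*v(x)\cdot z)\,b(\o,z)$. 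Subtracting, $\int d\nu(\o,z)\,v_1(x,\o,z)b(\o,z)=0$ for a.e.\ $x$, for all $b$ in the countable dense set $\cW\subset L^2_{\rm sol}(\nu)$, hence for all $b\in L^2_{\rm sol}(\nu)$; thus $v_1(x,\cdot,\cdot)\in L^2_{\rm pot}(\nu)$ a.e. Finally $v_1\in L^2(\D,dx;L^2_{\rm pot}(\nu))$ follows from $w\in L^2(\D\times\O\times\bbR^d,\,m\,dx\times\nu)$ (Lemma \ref{compatto2}) together with $\nabla_*v\in L^2(\D,dx)$ and $\int|z|^2d\nu<\infty$ (since $\l_2\in L^1(\cP_0)$), which make $x\mapsto\nabla_*v(x)\cdot z$ square integrable.

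The main obstacle is the careful handling of boundary effects and long jumps in the integration-by-parts step. Because $\bar f_\e$ is extended by zero outside $S$ and $\L\subset S$, when $\D=\L$ the discrete gradient $\nabla_\e\bar f_\e(x,z)$ can ``feel'' the boundary $\partial\L$ through long filaments with $|z|\ge\ell/\e$; one must confine the test function $\varphi$ to have support at positive distance from $\partial\D$ and then invoke Lemma \ref{lunghissimo} (control on long filaments) together with the uniform bound \eqref{re1} to show that the spurious boundary contributions are $o(1)$, exactly as in the estimate \eqref{olivia} in the proof of Proposition \ref{prova2}. A second, milder point is ensuring that all the rewritings via Lemma \ref{gattonaZ}--(ii) and Lemma \ref{lunetta} apply: this is exactly why $\O_{\rm typ}$ was defined to contain $\cA_1[b]\cap\cA_1[\tilde b]$, $\cA_d[b]$, $\cA_1[\tilde b(\o,z)z_i]$ and $\cA[u_{b,i,M}]$ for $b\in\cW$, so that the series in \eqref{sea}, \eqref{micio3} converge absolutely and the identities hold for the fixed environment $\tilde\o$; one simply quotes these membership facts. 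Once the boundary/long-jump terms are dispatched, the algebra is routine linear functional analysis in $L^2(\nu)$.
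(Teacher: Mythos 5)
Your proposal is correct and follows essentially the same route as the paper: you derive exactly the key identity \eqref{kokeshi} by testing \eqref{totani2} against $\varphi(x)b(\t_{x/\e}\tilde\o,z)$ with $b\in\cW$, using Lemma \ref{lunetta} together with $\tilde\o\in\cA_d[b]$, the Leibniz rule, \eqref{micio3}, and Lemma \ref{lunghissimo} for the boundary/long-jump terms, and then obtain (i) via Corollary \ref{rock_bis} and (ii) via orthogonality of $w-\nabla_*v\cdot z$ to $L^2_{\rm sol}(\nu)$. The only compressed step is the passage from $\nabla_\e\varphi$ to $\nabla\varphi\cdot z$ and the truncation of $u_{\tilde b,i}$ needed to apply the weak $2$-scale convergence of $f_\e$ (the paper's Claim \ref{mengoni} and \eqref{favorita2}--\eqref{favorita4}), but you correctly identify the membership facts in $\O_{\rm typ}$ that make this work.
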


We stress that $L^2\bigl( \D, dx;L^2_{\rm pot} (\nu)\bigr )$ denotes the space of square integrable maps $f:\D \to L^2_{\rm pot} (\nu)$,
where $\D$ is endowed with the Lebesgue measure.

%
%
%

\begin{proof}
Given a square integrable form $b$, we define $\eta_b := \int d\nu (\o,z) z b(\o,z)$.
Note that $\eta_b$ is well defined since both $b$ and the map $(\o,z) \mapsto z$ are in $ L^2(\nu)$ (for the latter use that $\bbE_0[\l_2]<+\infty$).
We observe that $\eta_b=- \eta_{\,\tilde b}$  by Lemma \ref{lucertola1}  with $k(\o,\o') := z c_{0,z} (\o) b (\o,z)$ if $\o'$ can be written as $\t_z \o$ with $z\in \hat \o$ and $k(\o,\o'):=0$ otherwise (the function $k$ is well defined $\cP_0$--a.s. due to Assumption (A3)).
We claim that 
 for   each  solenoidal form $b  \in L^2_{\rm sol}(\nu) $ and each  function $\varphi \in C^2_c(\D)$, it holds 
 \begin{equation}\label{kokeshi}
 \int _\D dx \,  m \varphi(x) \int  d \nu (\o ,z) w(x,\o,z) b(\o, z) =  
  -\int  _\D dx\,m   v(x) \nabla \varphi(x) \cdot \eta_b\,.
\end{equation}

Before proving   \eqref{kokeshi} we show how to conclude the proof of Proposition~\ref{oro}. We  start with Item (i). Due to Corollary \ref{rock_bis}
there are solenoidal forms $b_1,b_2,\dots, b_{d_*}$ such that $\eta_{b_1}, \eta_{b_2}, \dots, \eta_{b_{d_*}}$ equals $e_1, e_2, \dots, e_{d_*}$. Given $1\leq i \leq d_*$ consider the measurable function 
  \begin{equation}
  g_i (x):=  \int   d \nu (\o ,z) w(x,\o,z) b_i(\o, z)\,, \qquad x\in \D\,.
  \end{equation}
We have that $g_i \in L^2(\D,dx) $. Indeed, by Schwarz inequality and  since  $w\in   L^2(\D\times \O\times \bbR^d  \,,\, dx \times \nu )$, we can bound
\begin{multline}\label{greta}
\int _\D g_i(x) ^2 dx =
   \int _\D dx \left[ \int  d \nu (\o ,z) w(x,\o,z) b_i(\o, z)\right]^2\\
\leq  \|b _i \|^2_{L^2(\nu)}\int_\D dx   \ \int  d \nu (\o ,z) w(x,\o,z) ^2 <\infty\,.
\end{multline}
   Moreover, we have that $\int_\D dx \,\varphi(x) g_i(x) =  -\int _\D  dx\, v(x) \partial_i \varphi(x) $  by \eqref{kokeshi} and since $\eta_{b_i}=e_i$. This proves that    $\partial_i v(x)= 
- g_i(x) \in L^2(\D,dx) $, $\partial_i v$ being the weak derivative of $v$ w.r.t. the $i$--th coordinate.
  This concludes the proof of Item (i).
  
  We move to Item (ii) (always assuming 
  \eqref{kokeshi}). By Item (i) and Corollary \ref{rock_bis}  we can  replace  the r.h.s. of \eqref{kokeshi}  by $ \int _\D dx\, m \, (\nabla_* v(x) \cdot \eta_b ) \varphi(x)$. Hence   \eqref{kokeshi} can be rewritten as 
 \begin{equation}\label{cocco}
 \int _\D dx  \varphi(x) \int   d \nu (\o ,z)
 \left[ w(x,\o,z) -\nabla_* v(x)  \cdot z
 \right] b(\o, z) =  
  0\,.
  \end{equation}
By the arbitrariness of $\varphi$ we conclude that $dx$--a.s.  on $\D$
\begin{equation}\label{criceto13}
\int  d \nu (\o ,z)
 \left[ w(x,\o,z) -\nabla_* v(x)  \cdot z
 \right] b(\o, z) =  
  0\,, \qquad \forall b \in L^2 _{\rm sol}(\nu)\,.
\end{equation}
Let us now show that the map 
$w(x,\o,z) -\nabla_* v(x) \cdot z $ belongs to $L^2(\D,dx; L^2(\nu) )$. Indeed,  we have 
$
\int _\D dx \|w(x, \cdot, \cdot)\|_{L^2(\nu) }^2 =
 \| w\|^2_{ L^2(\D\times \O, dx \times \nu)}<+\infty$ and also
 \begin{equation}
\int _\D dx \|\nabla_* v(x) \cdot z \|_{L^2(\nu) }^2 \leq \int _\D dx | \nabla_* v(x) |^2  \int d\nu (\o, z) |z|^2 
<\infty\,,
\end{equation}
 where the last bound follows from the fact that  $\nabla_* v\in L^2(\D,dx)$ (see Item (i)) and  that $\bbE_0[\l_2]<+\infty $.

As the map 
$w(x,\o,z) -\nabla_* v(x) \cdot z $ belongs to $L^2(\D,dx; L^2(\nu) )$, for $dx$--a.e. $x$ in $\D$ we have that the map 
  $(\o,z) \mapsto w(x,\o,z) -\nabla_* v(x) \cdot z$ belongs to $ L^2(\nu)$ and therefore, by \eqref{criceto13}, to   $ L^2_{\rm pot} (\nu)$.
This concludes the proof of Item (ii).

\smallskip
It remains to prove \eqref{kokeshi}. Since both sides of \eqref{kokeshi} are continuous as functions of $b \in L^2_{\rm sol}(\nu)$, it is enough to prove it for $b\in \cW$. 
  Since $\tilde \o \in \O_{\rm typ}$,  along $\{\e_k\}$ it holds $\nabla_\e  f_\e \stackrel{2}{\toup} w$ as in  \eqref{totani2} and since  $b \in \cW \subset \cH$ (cf.~\eqref{yelena}) we can write 
  \be\label{bruna1}
  \begin{split}
  \text{l.h.s. of }\eqref{kokeshi}&= \lim _{\e \da 0} \int d \nu ^\e _{\tilde \o, \D}(x,z) \nabla_\e f_\e(x,z) \varphi (x) b ( \t_{x/\e} \tilde \o, z)\\
  &= \lim _{\e \da 0} \int d \nu ^\e _{\tilde \o, \D}(x,z) \nabla_\e \bar f_\e(x,z) \varphi (x) b ( \t_{x/\e} \tilde \o, z)\,.
  \end{split}
  \en
  Since $b \in \cW\subset L^2_{\rm sol}(\nu)$ and   $\tilde \o \in \O_{\rm typ}\subset \cA_d[b] $, from     Lemma  \ref{lunetta}  we get 
    \[
  \int d  \nu ^\e _{\tilde \o}(x,z) \nabla_\e ( \bar f_\e \varphi ) (x,z)  b ( \t_{x/\e} \tilde \o, z)=0\,.
  \]
  Above we used the natural inclusion $C_c(\D) \subset C_c(\bbR^d)$.
Using the above identity and  \eqref{leibniz}, we get 
\be\label{bruna2}
\begin{split}
\int d  \nu ^\e _{\tilde \o}(x,z) &\nabla_\e \bar f_\e (x,z) \varphi (x) b ( \t_{x/\e} \tilde \o, z)\\
&=- \int d  \nu ^\e _{\tilde \o}(x,z)  \bar f_\e (x+ \e z) \nabla_\e \varphi (x,z) b ( \t_{x/\e} \tilde \o, z)\,.
\end{split}
\en
As a byproduct of \eqref{bruna2} and 
  \eqref{micio3} in Lemma 
\ref{gattonaZ}--(ii), we get
\be\label{mammina76}
\int d  \nu ^\e _{\tilde \o}(x,z) \nabla_\e \bar f_\e (x,z) \varphi (x) b ( \t_{x/\e} \tilde \o, z)=
 \int d  \nu ^\e _{\tilde \o}(x,z)\bar   f_\e (x) \nabla_\e \varphi (x,z) \tilde b ( \t_{x/\e} \tilde \o, z)
 \,.
 \en
 By combining \eqref{bruna1} and \eqref{mammina76} we therefore have that 
\be\label{alba1}
  \text{l.h.s. of }\eqref{kokeshi}= \lim _{\e \da 0}  \left( -R_1(\e)+R_2(\e)
 \right)\,,
 \en 
  where 
  \begin{align*}
  & R_1(\e):=\int d \left[  \nu ^\e _{\tilde \o} -\nu ^\e _{\tilde \o, \D}\right](x,z) \nabla_\e \bar f_\e(x,z) \varphi (x) b ( \t_{x/\e} \tilde \o, z)\,,\\
& R_2(\e):=   \int d  \nu ^\e _{\tilde \o}(x,z)  \bar f_\e (x) \nabla_\e \varphi (x,z) \tilde b ( \t_{x/\e} \tilde \o, z)\,.
\end{align*}
  We claim that $\lim_{\e \da 0} R_1(\e)=0$.    
  We call $\ell $ the distance between  the support $\D_\varphi\subset \D $  of $\varphi$ and $\partial \D$.  Then in $R_1(\e)$ the contribution comes only from pairs $(x,z)$ such that $x\in \D_\varphi $ and $x+\e z \not \in S$ and therefore from pairs $(x,z)$ such that $x\in \D$ and $|z| \geq \ell/\e$:
  \be
   R_1(\e)=\int d \nu ^\e _{\tilde \o} (x,z) \nabla_\e \bar f_\e(x,z) \varphi (x) b ( \t_{x/\e} \tilde \o, z) \mathds{1}(x\in \D\,,\; |z| \geq \ell/\e)\,.
   \en
  By Schwarz inequality  we have therefore that 
 $
  R_1(\e)^2 \leq  I_1(\e) I_2(\e) $, where 
  \begin{align}
  & I_1(\e):=\int d \nu ^\e _{\tilde \o} (x,z) \nabla_\e \bar  f_\e(x,z)^2 | \varphi (x)|\mathds{1}(|z| \geq \ell/\e) \,,\\
  & I_2(\e):= \int d \nu ^\e _{\tilde \o} (x,z) |\varphi(x)| b ( \t_{x/\e} \tilde \o, z)^2 =\int d\mu^\e_{\tilde \o}(x)| \varphi(x)| \widehat{\, b^2\,}(\t_{x/\e}\tilde \o) \,.
  \end{align}
  Note that the last identity concerning $I_2(\e)$ uses that 
  $\tilde \o
 \in \O_{\rm typ}\subset \cA_1[ b^2]$. 
  Then $\lim_{\e\da 0} I_1(\e)=0$ due to Lemma \ref{lunghissimo}, while 
  $I_2(\e)$ converges to a bounded constant when $\e\da 0$ since  $\tilde \o \in \O_{\rm typ}\subset  \cA[ \widehat{\, b^2\,}]$. This proves that $R_1(\e)\to 0$.  

\smallskip  
    We now move to $R_2(\e)$. 
    \begin{Claim}\label{mengoni} We have 
    \be\label{charaba}
 \lim _{\e \da 0} \int d   \nu ^\e _{\tilde \o}(x,z) \Big{|} \bar f_\e (x) \bigl[ \nabla_\e \varphi (x,z)-\nabla \varphi (x) \cdot z\bigr] \tilde b ( \t_{x/\e} \tilde \o, z)\Big{|}=0\,.
    \en
    \end{Claim}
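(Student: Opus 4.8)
The plan is to first strip off the bounded factor $\bar f_\e$ using \eqref{re1}, then control the increment $\nabla_\e\varphi(x,z)-\nabla\varphi(x)\cdot z$ by splitting the integral over jumps according to the length $|z|$: for short jumps I will use the sharp Taylor bound \eqref{mirra}, which carries an extra factor $\e$, and for long jumps the cruder bound coming from \eqref{paradiso} together with $|\nabla\varphi(x)\cdot z|\le\|\nabla\varphi\|_\infty|z|\phi(x)$ (the latter holds since $\nabla\varphi$ is supported in $\{|x|\le\ell\}$, where $\phi\equiv1$). With $\ell,\phi$ as before \eqref{paradiso}, $b\in\cW$, $\varphi\in C^2_c(\D)\subseteq C^2_c(\bbR^d)$, and $C<\infty$ such that $\|\bar f_\e\|_\infty\le C$ for all small $\e$ by \eqref{re1}, the integral in \eqref{charaba} is at most $C\int d\nu^\e_{\tilde\o}(x,z)\,\bigl|\nabla_\e\varphi(x,z)-\nabla\varphi(x)\cdot z\bigr|\,|\tilde b(\t_{x/\e}\tilde\o,z)|$, and for a parameter $R\in\bbN$ I split this as a sum over $\{|z|<R\}$ and over $\{|z|\ge R\}$, aiming to show that the first part tends to $0$ as $\e\da0$ for each fixed $R$, and that $\limsup_{\e\da0}$ of the second part vanishes as $R\to\infty$.

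On $\{|z|<R\}$, \eqref{mirra} and $|z|^2<R^2$ bound the integrand by $\e\,C(\varphi)R^2\bigl(\phi(x)+\phi(x+\e z)\bigr)\,|\tilde b(\t_{x/\e}\tilde\o,z)|$; the $\phi(x)$--term equals $\e\,C(\varphi)R^2\int d\mu^\e_{\tilde\o}(x)\,\phi(x)\,\widehat{|\tilde b|}(\t_{x/\e}\tilde\o)$ by Lemma \ref{cavallo}(ii), and for the $\phi(x+\e z)$--term I will use \eqref{micio2} (with the nonnegative form $|\tilde b|$, whose transform is $|b|$ on $\hat\o$ by Lemma \ref{gattonaZZ}) to replace $\phi(x+\e z)$ by $\phi(x)$ and $|\tilde b|$ by $|b|$; both resulting averages converge by Proposition \ref{prop_ergodico}, so this contribution is $O(\e)$ and vanishes for fixed $R$. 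On $\{|z|\ge R\}$ the integrand is $\le 2C\|\nabla\varphi\|_\infty\,|z|\bigl(\phi(x)+\phi(x+\e z)\bigr)\,|\tilde b(\t_{x/\e}\tilde\o,z)|\,\mathds{1}(|z|\ge R)$, and I will apply the Schwarz inequality with respect to the symmetric weight $W(x,z):=\phi(x)+\phi(x+\e z)$, splitting $|z|$ into $(|z|^2)^{1/2}\cdot1$. Folding the $\phi(x+\e z)$ parts back by \eqref{micio2} (here $|z|^2\mathds{1}(|z|\ge R)$ is invariant under $z\mapsto-z$, and $\widetilde{\tilde b^2}=b^2$ on $\hat\o$), the first Schwarz factor becomes $\bigl(2\int d\mu^\e_{\tilde\o}(x)\,\phi(x)\,h_R(\t_{x/\e}\tilde\o)\bigr)^{1/2}$ with $h_R(\o):=\int d\hat\o(z)\,c_{0,z}(\o)|z|^2\mathds{1}(|z|\ge R)$, which tends to $\bigl(2m\,\bbE_0[h_R]\int\phi\,dx\bigr)^{1/2}$ by Proposition \ref{prop_ergodico}, while the second Schwarz factor tends to a finite constant not depending on $R$. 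Hence $\limsup_{\e\da0}$ of the $\{|z|\ge R\}$ contribution is at most $C'\sqrt{\bbE_0[h_R]}$ for some $R$--independent $C'$; since $h_R\le\l_2\in L^1(\cP_0)$ and $h_R\downarrow0$ $\cP_0$--a.s., dominated convergence gives $\bbE_0[h_R]\to0$, so letting first $\e\da0$ and then $R\to\infty$ yields \eqref{charaba}.

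The main obstacle, and the reason for this two--regime split, is that $b\in L^2(\nu)$ does not a priori control $\int|z|^2b^2\,d\nu$, so one cannot combine the sharp estimate \eqref{mirra} with a single Schwarz inequality against $\tilde b$: short jumps are handled by the crude replacement $|z|^2\le R^2$, absorbed by the prefactor $\e$, and long jumps by the smallness of the tail $\bbE_0[h_R]$. The rest is the same $2$--scale bookkeeping already used in the proofs of Lemma \ref{blocco} and Proposition \ref{prova2}: one checks that $\tilde\o\in\O_{\rm typ}$ belongs to each of the sets $\cA_1[|\tilde b|],\ \cA[\widehat{|\tilde b|}],\ \cA_1[|b|],\ \cA[\widehat{|b|}],\ \cA_1[|z|^2],\ \cA[h_R],\ \cA_1[\tilde b^2],\ \cA[\widehat{\tilde b^2}],\ \cA_1[b^2],\ \cA[\widehat{b^2}]$ used above, which holds since $b\in\cW\subseteq\cH$ and these are among the sets listed in Definition \ref{amen}, together with the elementary inclusions $\cA_1[1]\cap\cA_1[\tilde b^2]\subseteq\cA_1[|\tilde b|]$ (and likewise for $b$, from $2|\tilde b|\le1+\tilde b^2$) and $\cA_1[|z|^2]\subseteq\cA_1[|z|^2\mathds{1}(|z|\ge R)]$.
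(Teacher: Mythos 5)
Your proof is correct and follows essentially the same route as the paper: the same split of the $\nu^\e_{\tilde\o}$--integral into short jumps (handled by \eqref{mirra}, the crude bound $|z|^2\le R^2$ and the prefactor $\e$) and long jumps (handled by \eqref{paradiso}, \eqref{micio2} and a Schwarz inequality whose tail factor $\bbE_0[h_R]^{1/2}$ vanishes as $R\to\infty$ by dominated convergence against $\l_2$). The only cosmetic differences are the order in which you fold $\phi(x+\e z)$ back via \eqref{micio2} relative to applying Schwarz, and your use of $2|\tilde b|\le 1+\tilde b^2$ where the paper simply invokes the sets $\cA_1[|b|]$, $\cA[\widehat{\,|b|\,}]$ already listed in Definition \ref{amen}.
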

    \begin{proof}
     Given $\ell \in \bbN$ we write the integral in \eqref{charaba} as $A_\ell (\e)+B_\ell(\e)$, where $A_\ell(\e)$ is the contribution coming from $z$ with $|z|\leq \ell$ and $B_\ell(\e)$  is the contribution coming from $z$ with $|z|> \ell$.
Due to \eqref{mirra} and \eqref{re1} we can bound
\be
A_\ell(\e) \leq   C\ell^2  \e   \int  d  \nu ^\e _{\tilde \o}(x,z)  \bigl(  \phi (x) +\phi (x+\e z) \bigr) |\tilde b ( \t_{x/\e} \tilde \o, z)|\,.
\en
Hence, using now 
\eqref{micio2} in Lemma \ref{gattonaZ}, 
we can bound 
\be\label{ld1}
A_\ell (\e) \leq C\ell^2  \e   \int  d  \nu ^\e _{\tilde \o}(x,z) \phi (x) (|b|+ |\tilde b|)  ( \t_{x/\e} \tilde \o, z)\,.
\en
Since   $\o \in \O_{\rm typ}\subset \cA_1 [b]=  \cA_1 [|b|]$   (recall that $\tilde b\in \cW$ for all $b \in \cW$), the r.h.s. of \eqref{ld1} can be written as 
\be\label{ld2}
 C\ell ^2 \e   \int  d  \mu ^\e _{\tilde \o}(x) \phi (x) \left[ \widehat{ \, |b|\,}  +\widehat{ \, |\tilde b|\,}\right] ( \t_{x/\e} \tilde \o, z) 
 \en
 Since  $\o \in \O_{\rm typ}\subset \cA  [\widehat{\,| b|\, } ]\cap  \cA [\widehat{ \,  |\tilde b|\, }]$  (recall that $\tilde b\in \cW$ for all $b \in \cW$),
  the integral in  \eqref{ld2} converges to a finite constant as $\e\da 0$. Hence, coming back to \eqref{ld1}, $\lim _{\e \da 0} A_\ell (\e)=0$.

It remains to prove that $\lim _{\ell \uparrow \infty} \limsup_{\e\da 0} B_\ell (\e)=0$. We reason as above but now we apply  \eqref{paradiso} and a similar bound for $\nabla \varphi (x) \cdot z$. 
Due to \eqref{re1}, \eqref{paradiso}  and 
\eqref{micio2} in Lemma \ref{gattonaZ}, we can bound 
\be
B_\ell (\e) \leq C    \int  d  \nu ^\e _{\tilde \o}(x,z) \phi (x) (|b|+ |\tilde b|)  ( \t_{x/\e} \tilde \o, z) |z|\mathds{1}(|z|\geq \ell)\,.
\en
By Schwarz inequality
\be
B_\ell (\e) \leq C\,  C_\ell (\e)^{1/2} D_\ell (\e)^{1/2}
\en
where 
\begin{align*}
 C_\ell (\e):& =2\int  d  \nu ^\e _{\tilde \o}(x,z) \phi (x) (|b|^2 + |\tilde b|^2)  ( \t_{x/\e} \tilde \o, z) \\
 & = 2 \int  d  \mu ^\e _{\tilde \o}(x) \phi (x) \left( \widehat{|b|^2} +\widehat{ |\tilde b|^2}\right)  ( \t_{x/\e} \tilde \o)  \\
 D_\ell(\e):& = \int  d  \nu ^\e _{\tilde \o}(x,z) \phi (x)  |z|^2\mathds{1}(|z|\geq \ell)= \int  d  \mu ^\e _{\tilde \o}(x) \phi (x) \hat h_\ell (\t_{x/\e}\tilde \o)\,,
\end{align*}   
where $h_\ell (\o,z):= |z|^2 \mathds{1}(|z|\geq \ell)$. 
Note that in the identities concerning $C_\ell(\e)$ and $D_\ell(\e)$ we have used that
$\tilde \o \in \O_{\rm typ} \subset \cA_1[ b^2]\cap \cA_1[ \tilde b ^2]$ and $\tilde \o \in \O_{\rm typ}\subset \cA_1[|z|^2]\subset \cA_1[ h_\ell]$.
As $\tilde \o \in \O_{\rm typ}$, which is included in the sets 
$\cA_1[ |b|^2]$, $ \cA_1[ |\tilde b|^2] $, $\cA[ \widehat{|b|^2}]$, $\cA[\widehat{ |\tilde b|^2}]$, $\cA_1[ h_\ell]$ and $\cA[\hat h_\ell]$,
we get
\be 
\limsup_{\e \da 0} B_\ell(\e) \leq C\left[ \int dx\, m \phi(x)   \bbE_0[\widehat{|b|^2} +\widehat{ |\tilde b|^2}] \right]^{1/2}\bbE_0[\hat h_\ell]^{1/2}\,,
 \en
and the r.h.s. goes to zero as $\ell\to \infty$.
\end{proof}
     \smallskip
      We come back to \eqref{kokeshi}. By combining \eqref{alba1}, \eqref{charaba}  and the limit $R_1(\e)\to 0$, we conclude that 
     \be\label{alba2}
  \text{l.h.s. of }\eqref{kokeshi}= \lim _{\e \da 0} \int d   \nu ^\e _{\tilde \o}(x,z)  \bar f_\e (x) \nabla \varphi (x)\cdot z  \tilde b ( \t_{x/\e} \tilde \o, z)\,.
    \en
 Due to \eqref{alba2} and    since $\eta_{\tilde b }=- \eta_b$, to prove \eqref{kokeshi} we only need to show that 
      \begin{equation}\label{tramontoA}
 \lim _{\e \da 0} \int d  \nu ^\e _{\tilde \o}(x,z) \bar  f_\e (x) \nabla \varphi (x) \cdot z \tilde b ( \t_{x/\e} \tilde \o, z)=
 \int   dx \,m  v (x) \nabla \varphi(x) \cdot \eta_{\tilde b}\,.
 \end{equation}
 To this aim we observe that 
 \be\label{favorita1}
\int d  \nu^\e_{\tilde \o} (x,z) \bar f _\e (x) \partial _i \varphi (x) z_i \tilde{b} (\t_{x/\e} \tilde\o, z)=
\int d  \mu^\e_{\tilde \o}(x)  \bar f_\e (x) \partial _i \varphi (x) u_{\tilde b,i}( \t_{x/\e} \tilde \o)\,,
\en
where $u _{\tilde b,i }(\o):= \int d\hat \o (z)
c_{0,z}(\o) z_i \tilde{b} ( \o, z)$ (recall that  $\tilde \o\in \O_{\rm typ} \subset \cA_1[ \tilde b(\o,z) z_i]$). 
We claim that 
 \be\label{favorita2}
 \lim_{\e \da 0}
 \int d   \mu^\e_{\tilde \o}(x)\bar   f_\e (x) \partial _i \varphi (x) u_{\tilde b,i}( \t_{x/\e} \tilde \o)= \int _\D dx \,m  v (x) \partial_i \varphi (x) 
\bbE_0[ u _{\tilde b,i}]
\,.
 \en
Since the r.h.s. equals   $\int _\D dx \,m  v (x) \partial_i \varphi (x) 
 (\eta_{\tilde b} \cdot e_i)$,   our target \eqref{tramontoA} then would  follow as a byproduct of \eqref{favorita1} and \eqref{favorita2}. 
 It remains therefore to prove \eqref{favorita2}. 
 Given $M\in \bbN$ let $u _{\tilde b,i,M}:= | u_{\tilde b ,i}| \mathds{1}\bigl(| u_{\tilde b ,i}|\geq M\bigr)  $. 
 Due to   Prop.~\ref{prop_ergodico} (recall that $\tilde b\in \cW$ for any $b \in \cW$ and that 
 $\tilde \o\in \O_{\rm typ}\subset \cA[ u _{ b, i, M}]$ for all $b\in \cW$)
 \[ \lim _{\e \da 0}  \int d \mu^\e_{\tilde \o}(x)  | \partial_i \varphi (x)|  u _{\tilde b,i,M}  ( \t_{x/\e} \tilde \o)   =\int dx \, m | \partial_i \varphi (x)| \bbE_0[  u_{\tilde b,i,M}]\,.
 \]
 As $u_{\tilde b, i } \in L^1(\cP_0)$ we then get that 
 \be\label{colline}
\lim_{M\uparrow \infty}  \lim _{\e \da 0}  \int d  \mu^\e_{\tilde \o}(x) | \partial_i \varphi (x)|  u_{\tilde b,i,M}  ( \t_{x/\e} \tilde \o) = \lim _{M\uparrow \infty}  \int dx\,m | \partial_i \varphi (x)| \bbE_0[  u_{\tilde b,i,M}] =0\,.
 \en
Due to \eqref{re1} and \eqref{colline}, to get \eqref{favorita2} it is enough to show that 
  \be\label{favorita3}
  \begin{split}
\lim_{M\uparrow \infty}  \lim_{\e \da 0} &
 \int d  \mu^\e_{\tilde \o}(x)  \bar f_\e (x) \partial _i \varphi (x) u_{\tilde b,i }( \t_{x/\e} \tilde \o) \mathds{1}( |u_{\tilde b,i}( \t_{x/\e} \tilde \o)|\leq M)  \\
& =\int _\D dx \,m  v (x) \partial_i \varphi (x) 
\bbE[ u _{\tilde b,i}]
\,.
\end{split}
 \en
 Note that in \eqref{favorita3} we can replace
 $d  \mu^\e_{\tilde \o}(x)  \bar  f_\e (x) \partial _i \varphi (x)$ by $d  \mu^\e_{\tilde \o,\D}(x)   f_\e (x) \partial _i \varphi (x)$.
 Due to \eqref{re1} and since    $ u_{\tilde b,i} \mathds{1}( |u_{\tilde b,i}|\leq M) \in \cG$, by \eqref{rabarbaro} we have 
   \be\label{favorita4}
  \begin{split}
 \lim_{\e \da 0} &
 \int d \mu^\e_{\tilde \o,\D}(x)  f_\e (x) \partial _i \varphi (x) u_{\tilde b,i }( \t_{x/\e} \tilde \o) \mathds{1}( |u_{\tilde b,i}( \t_{x/\e} \tilde \o)|\leq M)  \\
& =\int _\D dx \,  m v (x) \partial_i \varphi (x) 
\bbE[ u _{\tilde b,i}  \mathds{1}( |u_{\tilde b,i}|\leq M)   ]
\,.
\end{split}
 \en
By dominated convergence, we get \eqref{favorita3} from \eqref{favorita4}.
 \end{proof}


\section{2-scale limit  points of $ V_\e$ and $\nabla_\e V_\e$}\label{limit_points}

In this section $\tilde \o$ is a fixed configuration  in  $ \O_{\rm typ}$.  Due to Lemmas \ref{paletta}, \ref{compatto1} and \ref{compatto2} along a subsequence $\e_k$ we have that  
\begin{align}
& L^2(\mu^\e_{\tilde \o, \L}) \ni V_\e \stackrel{2}{\toup} v  \in L^2(\L \times \O, m  dx \times \cP_0 ) \,,\label{totani1a}\\
& L^2(\nu^\e _{\tilde \o,\L})\ni   \nabla_\e V_\e \stackrel{2}{\toup} w\in   L^2(\L\times \O\times \bbR^d \,,\, m dx \times \nu ) \,, \label{totani2a}
\end{align}
for suitable functions $v$ and $w$.
In the rest of this section, when considering the limit $\e\da 0$, we understand that $\e$ varies in the sequence $\{\e_k\}$.

\begin{Proposition}\label{diamanti}  Let $v$  be as in \eqref{totani1a}. Then
   $v -\psi_{|\L}   \in H^1_0(\L, F,d_*)$.
\end{Proposition}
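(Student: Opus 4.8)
The plan is to realize $v-\psi_{|\L}$ as the restriction to $\L$ of a function of $H^1(S,d_*)$ that vanishes on $S\setminus\L$, and then to invoke the implication (iii)$\Rightarrow$(i) of Proposition \ref{monti}. Concretely, I would run the structural results of Section \ref{sec_bike} in the case $\D=S$, $f_\e=V_\e-\psi$, which is the second of the two cases flagged in the Warning following Proposition \ref{oro}.

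\emph{First step.} Set $g_\e:=(V_\e-\psi)_{|\e\widehat{\tilde \o}\cap S}$. From $V_\e\in K^\e_{\tilde\o}$ and the definition \eqref{def_psi} of $\psi$ one has $g_\e\equiv 0$ on $\e\widehat{\tilde \o}\cap(S\setminus\L)=\e\widehat{\tilde \o}\cap(S_-\cup S_+)$; hence $\|g_\e\|_{L^2(\mu^\e_{\tilde \o,S})}=\|g_\e\|_{L^2(\mu^\e_{\tilde \o,\L})}$, and since $\nabla_\e g_\e(x,z)=0$ whenever $\{x,x+\e z\}$ does not meet $\L$, also $\|\nabla_\e g_\e\|_{L^2(\nu^\e_{\tilde \o,S})}=\|\nabla_\e g_\e\|_{L^2(\nu^\e_{\tilde \o,\L})}$. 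Combining this with Remark \ref{fiorellino3} and Lemma \ref{paletta}, the family $\{g_\e\}$ satisfies \eqref{re1}, \eqref{re2}, \eqref{re3} with $\D=S$. By Lemmas \ref{compatto1} and \ref{compatto2}, passing to a further subsequence of $\{\e_k\}$ (still denoted $\{\e_k\}$, along which \eqref{totani1a} continues to hold) we get $g_\e\stackrel{2}{\toup}\tilde v$ in $L^2(\mu^\e_{\tilde \o,S})$ and $\nabla_\e g_\e\stackrel{2}{\toup}\tilde w$ in $L^2(\nu^\e_{\tilde \o,S})$ for suitable $\tilde v,\tilde w$. Propositions \ref{prova2} and \ref{oro}(i) then give that $\tilde v$ does not depend on $\o$ and that $\tilde v\in H^1(S,d_*)$.

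\emph{Second step: identification of $\tilde v$.} Testing $g_\e\stackrel{2}{\toup}\tilde v$ against $g=1\in\cG$ and $\varphi\in C_c(S\setminus\L)$, and using that $g_\e$ vanishes on $\operatorname{supp}\varphi\subset S_-\cup S_+$, we obtain $\int_{S\setminus\L}\tilde v\,\varphi\,dx=0$ for every such $\varphi$, whence $\tilde v=0$ $dx$--a.e.\ on $S\setminus\L$ (the hyperplanes $\{x_1=\pm 1/2\}$ being Lebesgue--null). On the other hand, for $\varphi\in C_c(\L)$ the measures $\mu^\e_{\tilde \o,S}$ and $\mu^\e_{\tilde \o,\L}$ agree on $\operatorname{supp}\varphi$, so, writing $g_\e=V_\e-\psi$ and using \eqref{totani1a} for the $V_\e$--term and Proposition \ref{prop_ergodico} for the $\psi$--term (legitimate since $\psi\varphi\in C_c(\L)$, as $\psi$ is continuous on $\bar\L$), we get $\int_\L\tilde v\,\varphi\,dx=\int_\L v\,\varphi\,dx-\int_\L\psi\,\varphi\,dx$ for all $\varphi\in C_c(\L)$; hence $\tilde v=v-\psi$ $dx$--a.e.\ on $\L$ (here we again used $\o$--independence of $v$, cf.\ Proposition \ref{prova2}).

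\emph{Conclusion.} Put $u:=v-\psi_{|\L}\in L^2(\L)$. By the second step $u=\tilde v_{|\L}$, and by the first step $\tilde v=0$ a.e.\ on $S\setminus\L$; therefore the zero--extension $\bar u$ of $u$ to $S$ coincides $dx$--a.e.\ with $\tilde v$, so $\bar u\in H^1(S,d_*)$. Proposition \ref{monti}, implication (iii)$\Rightarrow$(i), then yields $u=v-\psi_{|\L}\in H^1_0(\L,F,d_*)$, which is the assertion. The only slightly delicate points are (a) the uniform bounds \eqref{re1}--\eqref{re3} for $g_\e$ on the \emph{unbounded} strip $S$, which hold precisely because $g_\e$ is supported in $\L$, and (b) the compatibility of the two $2$-scale limits of $g_\e$ (obtained on $S$ directly, and on $\L$ via $V_\e$ and $\psi$), which rests on uniqueness of the $2$-scale limit together with $\psi_{|\L}$ being an admissible macroscopic test profile.
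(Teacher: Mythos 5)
Your proof is correct and follows essentially the same route as the paper: both arguments run the structural results of Section \ref{sec_bike} on $\D=S$ with $f_\e=V_\e-\psi$, identify the weak $2$-scale limit as $v-\psi_{|\L}$ on $\L$ and as $0$ on $S\setminus\L$, and conclude via Proposition \ref{monti}. The only (immaterial) difference is that you invoke Proposition \ref{oro}--(i) on the strip and use criterion (iii) of Proposition \ref{monti}, whereas the paper extracts the bound of criterion (ii) directly from the identity \eqref{kokeshi} on $S$ together with Corollary \ref{rock_bis}.
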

\begin{proof} 
We apply the results of Section \ref{sec_bike} to the case $\D=S$ and $f_\e:= V_\e -\psi$.
Since $f_\e$ is zero on $S\setminus \L$ and takes  values in $[-1,1]$ on $\L$, conditions \eqref{re1} and \eqref{re2} are satisfied.
In addition, we have $\nabla _\e f_\e (x,z)=0$ if $\{x,x+\e z\}$ does not intersect $\L$ and therefore  $\| f_\e\|_{L^2(\nu^\e_{\tilde \o, S})}= \| f_\e\|_{L^2(\nu^\e_{\tilde \o, \L})}
$.  By Lemma \ref{paletta}  we therefore conclude that also \eqref{re3} is satisfied.

At cost to refine the subsequence $\{\e_k\}$, without loss of generality we can assume that along $\{\e_k\}$ itself  we have 
\begin{align}
&L^2(  \mu^\e_{\tilde \o, S})\ni  f_\e \stackrel{2}{\toup} \hat v  \in L^2(S \times \O,  m dx \times \cP_0 ) \,,\label{totani101}\\
& L^2(\nu^\e_{\tilde\o,S})\ni \nabla_\e{f}_\e \stackrel{2}{\toup} \hat w\in   L^2(S\times \O\times \bbR^d  \,,\, m dx \times \nu ) \,, \label{totani202}
\end{align}
for suitable functions $\hat v,\hat w$. 
By Proposition \ref{prova2} we have $\hat v= \hat v(x)$.
\smallskip
We recall that in the proof of Proposition~\ref{oro} we have in particular derived \eqref{kokeshi}:
 for   each  solenoidal form $b  \in L^2_{\rm sol}(\nu) $ and each  function $\varphi \in C^2_c(S)$, it holds 
 \begin{equation}\label{kokeshi74}
 \int _S dx \,  \varphi(x) \int  d \nu (\o ,z) \hat w(x,\o,z) b(\o, z) =  
  -\int  _S dx\,   \hat v(x) \nabla \varphi(x) \cdot \eta_b\,.
\end{equation}
Since $f_\e \equiv 0$ on $S\setminus \L$, it is simple to derive from the definition of $2$--scale convergence  that $\hat v(x) \equiv 0$ $dx$--a.e.  on $S\setminus \L$ and that $\hat w(x,\cdot,\cdot)\equiv 0 $ $dx$--a.e. on $S\setminus \L$.  Therefore \eqref{kokeshi74} implies that 
\be\label{anatra}
\begin{split}
&  \Big |
\int  _{\L} dx\,  \hat  v(x) \nabla \varphi(x) \cdot \eta_b
\Big|
 = \Big|\int _\L dx \,  \varphi(x) \int  d \nu (\o ,z)\hat  w(x,\o,z) b(\o, z)\Big|\,.
 \end{split}
\en
By Schwarz inequality  we can bound
\begin{multline}
 C^2:=\int _\L dx \Big[ \int  d \nu (\o ,z) \hat w(x,\o,z) b(\o, z) \Big]^2  \\
 \leq 
  \int _\L dx  \int  d \nu (\o ,z) \hat  w(x,\o,z)  ^2 \int d \nu (\o,z) b(\o,z)^2\\
   = \| \hat w \| ^2 _{L^2( \L\times \O, dx\times \nu) }\|b \|^2_{L^2(\nu)}<+\infty\,.
 \end{multline}
 By applying now Schwarz inequality to \eqref{anatra} we conclude that 
 \be
  \Big |
\int  _{\L} dx\,   \hat v(x) \nabla \varphi(x) \cdot \eta_b
\Big | \leq C \| \varphi\|_{L^2(\L,dx)}\,.
\en
The above bound,   Proposition \ref{monti} and Corollary \ref{rock_bis} imply that $\hat v \in H^1_0(\L, F,d_*)$. To  get the thesis it remains to  observe that $\hat v= v-\psi_{|\L}$ $dx$--a.e. on $\L$, which follows from the definition of $2$--scale convergence, \eqref{totani1a}  and since $L^2(\mu^\e_{\tilde \o,\L}) \ni \psi _{\L}  \stackrel{2}{\rightharpoonup}  \psi_{\L} \in L^2(\L, dx)$.
\end{proof}

\begin{Proposition}\label{prova1} Let $w$  be as in \eqref{totani2a}. For $dx$--a.e. $x\in \L$, the map  $(\o,z) \mapsto w(x,\o,z)$ belongs to $L^2_{\rm sol}(\nu)$.
\end{Proposition}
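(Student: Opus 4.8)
The statement asserts that the weak $2$-scale limit $w$ of $\nabla_\e V_\e$ (cf.\ \eqref{totani2a}) is, for $dx$--a.e.\ $x\in\L$, a solenoidal form. Since $L^2_{\rm sol}(\nu)=L^2_{\rm pot}(\nu)^\perp$ and $\{\nabla g:g\in\cG_2\}$ is dense in $L^2_{\rm pot}(\nu)$ with $\cG_2$ countable, the plan is to prove
\be\label{sol:goal}
\int_\L dx\,m\,\varphi(x)\int d\nu(\o,z)\,w(x,\o,z)\,\nabla g(\o,z)=0\qquad\forall\,\varphi\in C^\infty_c(\L),\ \forall\, g\in\cG_2\,.
\en
For fixed $g$ the map $x\mapsto\int d\nu(\o,z)\,w(x,\o,z)\,\nabla g(\o,z)$ is in $L^2(\L,dx)$ (by Schwarz's inequality in $\nu$ and $w\in L^2(\L\times\O\times\bbR^d,m\,dx\times\nu)$), so \eqref{sol:goal} forces it to vanish $dx$--a.e.; letting $g$ run over the countable set $\cG_2$ then gives $w(x,\cdot,\cdot)\perp L^2_{\rm pot}(\nu)$, i.e.\ $w(x,\cdot,\cdot)\in L^2_{\rm sol}(\nu)$, for a.e.\ $x$.

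To prove \eqref{sol:goal} I would fix $\varphi\in C^\infty_c(\L)$, $g\in\cG_2$ and use the corrector-type test function $u_\e(x):=\e\,\varphi(x)\,g(\t_{x/\e}\tilde\o)$, $x\in\e\widehat{\tilde\o}\cap S$. Since $\varphi$ is supported in a compact subset of the open set $\L$, $u_\e$ vanishes on $\e\widehat{\tilde\o}\cap(S_-\cup S_+)$ for every $\e$, and it is bounded with $\tilde\o\in\O_1$, so $u_\e\in H^{1,\e}_{0,\tilde\o}$; by Lemma~\ref{benposto}(ii), $\la\nabla_\e u_\e,\nabla_\e V_\e\ra_{L^2(\nu^\e_{\tilde\o,\L})}=0$. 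Expanding $\nabla_\e u_\e$ via the Leibniz rule \eqref{leibniz} and using $\nabla_\e\big(g(\t_{\cdot/\e}\tilde\o)\big)(x,z)=\e^{-1}\nabla g(\t_{x/\e}\tilde\o,z)$ (cf.\ \eqref{cantone}) gives $\nabla_\e u_\e(x,z)=\e\,\nabla_\e\varphi(x,z)\,g(\t_{x/\e}\tilde\o)+\varphi(x+\e z)\,\nabla g(\t_{x/\e}\tilde\o,z)$, whence $0=\e A(\e)+B(\e)$ with
\begin{align*}
A(\e) &:= \int d\nu^\e_{\tilde\o,\L}(x,z)\,\nabla_\e V_\e(x,z)\,\nabla_\e\varphi(x,z)\,g(\t_{x/\e}\tilde\o)\,,\\
B(\e) &:= \int d\nu^\e_{\tilde\o,\L}(x,z)\,\nabla_\e V_\e(x,z)\,\varphi(x+\e z)\,\nabla g(\t_{x/\e}\tilde\o,z)\,.
\end{align*}

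For $A(\e)$ I would apply Schwarz's inequality, bound $\|\nabla_\e V_\e\|_{L^2(\nu^\e_{\tilde\o,\L})}=O(1)$ by Lemma~\ref{paletta}, estimate $g^2\le\|g\|_\infty^2$, and use \eqref{paradiso} to get $(\nabla_\e\varphi(x,z))^2\le 2\|\nabla\varphi\|_\infty^2|z|^2(\phi(x)^2+\phi(x+\e z)^2)$; then, arguing exactly as in the proof of Lemma~\ref{blocco} (using \eqref{micio2} of Lemma~\ref{gattonaZ} to reduce the $\phi(x+\e z)^2$ term to the $\phi(x)^2$ term, and Proposition~\ref{prop_ergodico} applied to $\l_2\in L^1(\cP_0)$, both legitimate since $\tilde\o\in\O_{\rm typ}$), one sees $\int d\nu^\e_{\tilde\o,\L}(x,z)\,(\nabla_\e\varphi(x,z))^2\,g(\t_{x/\e}\tilde\o)^2$ stays bounded, so $\e A(\e)\to0$. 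For $B(\e)$, I would write $\varphi(x+\e z)=\varphi(x)+\e\,\nabla_\e\varphi(x,z)$: the contribution of $\e\,\nabla_\e\varphi$ is $O(\e)$ by the same Schwarz estimate (now with $|\nabla g|\le2\|g\|_\infty$ in place of $g$), while, since $\nabla g\in\cH_2\subset\cH$ and $\nabla_\e V_\e\stackrel{2}{\toup}w$ in $L^2(\nu^\e_{\tilde\o,\L})$, the definition \eqref{yelena} of weak $2$-scale convergence gives $\int d\nu^\e_{\tilde\o,\L}(x,z)\,\nabla_\e V_\e(x,z)\,\varphi(x)\,\nabla g(\t_{x/\e}\tilde\o,z)\to\int_\L dx\,m\,\varphi(x)\int d\nu(\o,z)\,w(x,\o,z)\,\nabla g(\o,z)$. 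Thus $B(\e)$ converges to the left-hand side of \eqref{sol:goal}, and passing to the limit in $0=\e A(\e)+B(\e)$ yields \eqref{sol:goal}.

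The main obstacle is not conceptual but the uniform (in $\e$) control of the error term $\e A(\e)$ and of the $\varphi(x+\e z)-\varphi(x)$ correction in $B(\e)$: this rests on combining the a priori bound $\limsup_{\e\da0}\|\nabla_\e V_\e\|_{L^2(\nu^\e_{\tilde\o,\L})}<+\infty$ from Lemma~\ref{paletta} with the boundedness of the $(\nabla_\e\varphi)^2$--type integrals, whose proof requires the careful treatment of the $\phi(x+\e z)$ contributions (and, if present, of long filaments with $|z|$ large, handled via \eqref{micio2} and Lemma~\ref{lunghissimo}). All of this is exactly the sort of estimate packaged into the definition of $\O_{\rm typ}$ and already carried out in Lemma~\ref{blocco} and in the proof of Proposition~\ref{prova2}.
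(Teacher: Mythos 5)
Your proposal is correct and follows essentially the same route as the paper: the same corrector test function $u_\e(x)=\e\,\varphi(x)\,g(\t_{x/\e}\tilde\o)$, the orthogonality from Lemma~\ref{benposto}(ii), the Leibniz expansion with the $O(\e)$ error controlled via \eqref{paradiso}, Lemma~\ref{gattonaZ} and the a priori bound of Lemma~\ref{paletta}, and the identification of the surviving term by weak $2$-scale convergence against $\nabla g\in\cH_2$, followed by density of $\{\nabla g: g\in\cG_2\}$ in $L^2_{\rm pot}(\nu)$. The only cosmetic difference is that the paper applies the Leibniz rule in the form putting $\varphi(x)$ (rather than $\varphi(x+\e z)$) in the main term, which spares it your extra step of writing $\varphi(x+\e z)=\varphi(x)+\e\nabla_\e\varphi(x,z)$ in $B(\e)$; that step is handled correctly in your argument.
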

\begin{proof} 
We use that $\la  \nabla_\e u,\nabla_\e V_\e \ra _{ L ^2( \nu ^\e_{\tilde\o,\L}) }=0$ for any 
$u\in H^{1,\e}_{\tilde \o, 0}$ (cf. Lemma \ref{benposto}--(ii)).  We take $u(x):=\e \varphi(x) g(\t_{x/\e} \tilde \o)$, where $\varphi \in C_c (\L)$ and  $g\in \cG_2$ (cf. Section \ref{sec_tipetto}).
Due to \eqref{leibniz} we have 
 \begin{equation}\label{aquilotto}
\nabla_\e  u (x,z)
= \e \nabla _\e \varphi (x,z) g(\t_{z+x/\e}  \tilde \o)+  \varphi(x) \nabla  g ( \t_{x/\e} \tilde  \o, z)\,,
\end{equation}
where $\nabla g(\o,z)= g(\t_z \o)-g(\o)$.
 Due to \eqref{aquilotto}, the identity $\la  \nabla_\e u,\nabla_\e V_\e \ra _{L^2 (\nu ^\e_{\tilde \o,\L} )}=0$ 
can be rewritten as 
\begin{equation}\label{cioccolata}
\begin{split}
& \e \int d  \nu ^\e_{ \tilde \o,\L} (x,z)  \nabla _\e \varphi (x,z) g(\t_{z+x/\e} \tilde \o) \nabla _\e V_\e (x,z) +\\
&  \int d  \nu ^\e_{ \tilde \o,\L}  (x,z)  \varphi(x) \nabla g ( \t_{x/\e}\tilde  \o, z)\nabla_\e  V_\e (x,z)  =0\,.
\end{split}
\end{equation}
We first show that 
\be\label{calmuccia}
\limsup_{\e \da 0} \Big| \int d  \nu ^\e_{ \tilde \o,\L} (x,z)  \nabla _\e \varphi (x,z) g(\t_{z+x/\e} \tilde \o) \nabla _\e V_\e (x,z)\Big| <+\infty\,.
\en
 By applying Schwarz inequality,
using that  $g$ is bounded as $g\in \cG_2$ and that  $\limsup_{\e\da 0}
\| \nabla_\e V_\e \| _{L^2 (\nu ^\e_{\tilde \o,\L} )}<+\infty $ due to \eqref{parasole} and since $\tilde \o \in \O_{\rm typ}\subset \O_2$, to get \eqref{calmuccia} it is enough to show that  $\limsup_{\e\da 0}
\| \nabla_\e \varphi    \| _{L^2 (\nu ^\e_{\tilde \o,\L} )}<+\infty $.  As $\tilde \o \in \O_{\rm typ}$,  by Lemma \ref{blocco}   it remains to prove that $\limsup_{\e\da 0}
\|\nabla \varphi (x)\cdot z  \| _{L^2 (\nu ^\e_{\tilde \o,\L} )}<+\infty $. To conclude we 
observe that, since  $\tilde \o \in \O_{\rm typ}\subset \cA_1[|z|^2]\cap \cA[\l_2]$, 
\begin{multline} \label{tiglio}
 \int d  \nu ^\e_{ \tilde \o,\L} (x,z)  |\nabla  \varphi (x)|^2 |z|^2\\
 =\int d  \mu ^\e_{ \tilde \o}(x)   |\nabla  \varphi (x)|^2  \l_2(\t_{x/\e}  \tilde \o)\to \int dx\, m   |\nabla  \varphi (x)|^2 \bbE_0[\l_2]
<+\infty\,.
\end{multline}
This completes the proof of \eqref{calmuccia}.

 Coming back to \eqref{cioccolata}, using \eqref{calmuccia} to treat the first addendum  and applying the 2-scale convergence  $\nabla_\e V_\e \stackrel{2}{\toup} w$ in \eqref{totani2a}   to  treat the second addendum,  we conclude that 
\be
\int _\L dx \int d\nu (\o,z) \varphi(x) \nabla g (\o, z) w (x,\o,z) =0 \qquad \forall g \in \cG_2\,.
\en
Note that above we have applied   \eqref{yelena} as  $\nabla g \in \cH_2\subset \cH$.
Since $\{\nabla g \,:\, g \in \cG_2\}$ is dense in $L^2_{\rm pot}(\nu)$,  the above identity implies that, for $dx$--a.e. $x\in \L$, the map  $(\o,z) \mapsto w(x,\o,z)$ belongs to $L^2_{\rm sol}(\nu)$. 
\end{proof}

\section{2-scale limit   of $ V_\e$: proof of  Theorem \ref{teo2} for $D_{1,1}>0$} \label{limitone}

In this section we  give the  proof of Theorem \ref{teo2} assuming that $D_{1,1}>0$. In particular, we will get \eqref{mortisia}.

\subsection{Convergence of $V_\e$ to $\psi$}\label{limitone_p1} We fix $\tilde \o \in \O_{\rm typ}$ and prove the convergences in Theorem \ref{teo2} for $\tilde \o$ instead of $\o$ there.
Due to Lemmas \ref{compatto1} and \ref{compatto2}
along a subsequence $\{\e_k\}$  we have
that  
$L^2(\mu^\e_{\tilde \o, \L})   \ni V_\e \stackrel{2}{\toup} v  \in L^2(\L \times \O,  m dx \times \cP_0 ) $ and $L^2(\nu^\e_{\tilde \o, \L})\ni \nabla_\e V_\e \stackrel{2}{\toup} w\in   L^2(\L\times \O\times \bbR^d\,,\, m dx \times \nu ) $ (cf. \eqref{totani1a} and \eqref{totani2a}). We claim that  for  $dx$--a.e. $x\in \L$ it holds 
\begin{equation}\label{mattacchione}
\int d \nu(\o, z)  w(x,\o, z)  z= 2 D \nabla_* v (x)\,.
\end{equation}
By Proposition~\ref{prova1} for $dx$--a.e. $x\in \L$, the map  $(\o,z) \mapsto w(x,\o,z)$ belongs to $L^2_{\rm sol}(\nu)$.
 On the other hand, by Proposition~\ref{oro} we know that 
$w(x,\o,z)= \nabla_*  v (x) \cdot z + v_1(x,\o,z)$, 
where $v_1\in L^2\bigl( \L, L^2_{\rm pot} (\nu)\bigr )$. Hence, by \eqref{jung}, for $dx$--a.e. $x\in \L$ we have that 
$v_1 (x,\cdot,\cdot ) = \mathbf{v}^a $, where  $ a:= \nabla_* v (x)$.
As a consequence (using also \eqref{solare789}),  for $dx$--a.e. $x\in \L$, we have
\begin{equation*}
\int d \nu(\o, z)  w(x,\o, z)  z= \int d \nu(\o,z) z [  \nabla_* v (x)\cdot z + \mathbf{v} ^{ \nabla_*  v (x)}(\o,z)]= 2 D \nabla_* v (x)\,,
\end{equation*}
thus proving \eqref{mattacchione}.

We now take a function  $\varphi \in C^2_c(\bbR^d)$ which is zero on $S\setminus \L$ (note that we are not taking $\varphi \in C^2_c (S)$). By Lemma \ref{benposto}--(ii) we have the identity $\la \nabla _\e \varphi, \nabla_\e V_\e\ra_{L^2(\nu^\e_{\tilde \o,\L})}=0$. The above identity and Lemma \ref{blocco} (use that $\tilde \o \in \O_{\rm typ}$)  imply that 
\be\label{george}
 0=\la \nabla _\e \varphi, \nabla_\e V_\e\ra_{L^2(\nu^\e_{\tilde \o,\L})}= \int d\nu^\e_{\tilde \o,\L} (x,z) \nabla \varphi(x) \cdot z \nabla_\e V_\e(x,z)+o(1)\,.
\en
Hence
\be\label{rino_gaetano}
0= \lim_{\e \da 0} \int d\nu^\e_{\tilde \o,\L} (x,z) \nabla \varphi(x) \cdot z \nabla_\e V_\e(x,z)\,.
\en
For each $n\geq 3$ let $A_n:=[-1/2+1/n,1/2-1/n]^d$ and let $\phi_n\in C_c( \L)$ be a function with values in $[0,1]$ such that $\phi_n\equiv 1$ on $A_n$.
By Schwarz inequality
\begin{multline}\label{manovra}
\Big| \int d\nu^\e_{\tilde \o,\L} (x,z) (\phi_n(x)-1)\nabla \varphi(x) \cdot z \nabla_\e V_\e(x,z)
\Big|\\
\leq \| \nabla_\e V_\e\|_{L^2(\nu^\e_{\tilde \o,\L})}
\Big[
\int _{\L \setminus A_n} d\mu^\e_{\tilde \o,\L} (x) \l_2 (\t_{x/\e} \tilde \o)\Big]^{1/2}\,.
\end{multline}
By ergodicity (equivalently by  applying Prop. \ref{prop_ergodico} to  suitable functions $\varphi, \varphi' \in C_c(\bbR^d)$ with $\varphi\leq \mathds{1}_{\L \setminus A_n }\leq \varphi'$ and using that $\tilde \o \in \O_{\rm typ} \subset \cA[\l_2]$)  we have 
$\lim_{\e\da 0} \int _{\L \setminus A_n} d\mu^\e_{\tilde \o}(x) \l_2 (\t_{x/\e} \tilde \o)= \ell(\L \setminus A_n )\bbE_0[\l_2]$.  As a byproduct with   Lemma \ref{paletta} we conclude that 
\be\label{manovra_bis}
\lim_{n\uparrow \infty} \limsup_{\e\da 0} \text{ l.h.s. of } \eqref{manovra}=0\,.
\en
Using \eqref{rino_gaetano} we get 
\be\label{kipur65}
\lim_{n \uparrow \infty} \limsup_{\e \da 0} \int d\nu^\e_{\tilde \o,\L} (x,z) \phi_n(x) \nabla \varphi(x) \cdot z \nabla_\e V_\e(x,z)=0\,.
\en

On the other hand,  due to \eqref{totani2a} and since  $\tilde \o \in \O_{\rm typ}$ (recall that the form $(\o,z) \mapsto z_i $ belongs to $\cH$, recall that $\phi_n \in C_c(\L)$ and  apply \eqref{yelena}), we can rewrite \eqref{kipur65} as 
\be
\lim_{n \uparrow \infty}   \int _\L dx \int d\nu (\o,z)  \phi_n(x) \nabla \varphi(x) \cdot z  w (x,\o,z)=0\,.
\en  Reasoning as in \eqref{manovra} we get

\be\label{flauto}
0= \int _\L dx \int d\nu (\o,z)  \nabla \varphi(x) \cdot z  w (x,\o,z)\,.
\en
As a byproduct of \eqref{mattacchione} and \eqref{flauto} we conclude that $0= \int_\L dx \nabla \varphi (x) \cdot D \nabla_* v(x)= \int_\L dx \nabla_* \varphi (x) \cdot D \nabla_* v(x) $
 for any 
 $\varphi \in  C^2_c(\bbR^d)$ with $\varphi\equiv 0$ on  $S\setminus \L$ (we write $\varphi \in \cC$). 
 If we take $ \varphi \in C^\infty _c(\bbR^d \setminus F)$, then $ \varphi_{|\L} $ can be approximated in the  space $H^1(\L$)  by functions  $\tilde \varphi_{|\L}$ with $\tilde \varphi \in \cC$. 
  Hence by density we conclude that $0= \int_\L dx \nabla_* \varphi (x) \cdot D\nabla_* v(x) $ for any $\varphi \in H_0^1(\L,F,d_*)$.
 Due to Proposition  \ref{diamanti} we  also have that $v\in K$ (cf. \eqref{kafka} in Definition~\ref{vettorino}).  
 Hence, by Definition \ref{fete} and Lemma \ref{unico},   $v$ is the unique weak solution   of the equation $ \nabla_* \cdot ( D \nabla_* v ) =0$ 
with boundary conditions  \eqref{mbc}.
By Corollary \ref{h2o}  we conclude    that $v=\psi_{|\L}$. Since the limit point is always $\psi_{|\L}$ whatever the subsequence $\{\e_k\}$, we get the $V_\e  \in L^2(\mu^\e_{\tilde \o, \L}) $ weakly 2-scale converges to  $ \psi_{|\L}  \in   L^2(\L \times \O, m  dx \times \cP_0 ) $ as $\e\da 0$, and not only along some subsequence. As $\psi_{|\L}$ does not depend from $\o$ and since $1\in \cG$, we derive from \eqref{rabarbaro} that $L^2(\mu^\e_{\tilde \o, \L})\ni V_\e \toup \psi \in L^2( \L, m dx)$ according to Definition \ref{debole_forte}.
\subsection{Convergence of the energy flow}\label{platano} Let us show that, given  
 $\tilde \o \in \O_{\rm typ}$, it holds
$\lim_{\e\da 0}  \frac{1}{2} \la \nabla_\e V_\e, \nabla _\e V_\e \ra 
_{L^2(\nu^\e_{\tilde \o, \L})}=m D_{1,1}\,.$
To this aim we  apply Lemma \ref{benposto}--(ii) with $u:= V_\e - \psi$, which  belongs to $H^{1,\e}_{0, \tilde\o}$.  Then we have 
$ \la \nabla_\e( V_\e - \psi), \nabla_\e V_\e \ra_{L^2(\nu^\e_{\tilde \o, \L})} =0$. This implies that 
\be \label{antiguaz}
\la \nabla_\e V_\e, \nabla_\e V_\e\ra_{L^2(\nu^\e_{\tilde \o, \L})}  = \la \nabla_\e \psi, \nabla_\e V_\e \ra_{L^2(\nu^\e_{\tilde \o, \L})}\,.
\en
\begin{Claim} \label{marlena93}
It holds $\lim _{\e\da 0} \int d \nu^\e_{\tilde \o, \L}(x,z) |\nabla_\e \psi(x,z)- z_1 |^2 =0$.
\end{Claim}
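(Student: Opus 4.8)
The plan is to exploit that $\psi$ is affine on $\L$. Setting $g(x):=\psi(x)-x_1-1/2$ for $x\in S$, one has $\nabla_\e\psi(x,z)-z_1=\nabla_\e g(x,z)$, and $g$ vanishes on $\L$, depends only on $x_1$, and satisfies $\partial_{x_1}g\in\{-1,0\}$, so it is $1$-Lipschitz. Hence $|\nabla_\e\psi(x,z)-z_1|=|\nabla_\e g(x,z)|\le|z_1|\le|z|$ everywhere, and the integrand vanishes $\nu^\e_{\tilde\o,\L}$-a.e. outside the set of pairs for which exactly one of $x,\,x+\e z$ lies in $\L$ (here one uses that $\nu^\e_{\tilde\o,\L}$ charges only pairs with $x,x+\e z\in S$ and $\{x,x+\e z\}\cap\L\neq\emptyset$, on which $g\equiv 0$ forces the difference to vanish when both points are in $\L$). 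On such a pair the point lying in $\L$ has its first coordinate within $\e|z_1|\le\e|z|$ of $\{-1/2,1/2\}$, since the other point lies outside $(-1/2,1/2)$ in the first coordinate and the two first coordinates differ by $\e|z_1|$. Writing $U_\delta:=\{x\in\L:\mathrm{dist}(x_1,\{-1/2,1/2\})<\delta\}$, which has Lebesgue measure $2\delta$ for $\delta<1/2$, this yields, $\nu^\e_{\tilde\o,\L}$-a.e.,
\[
|\nabla_\e\psi(x,z)-z_1|^2\le|z|^2\bigl(\mathds 1[x\in\L]+\mathds 1[x+\e z\in\L]\bigr),
\]
and, when moreover $|z|\le\ell_0$ and $\e\ell_0<\delta$, also $|\nabla_\e\psi(x,z)-z_1|^2\le|z|^2\bigl(\mathds 1[x\in U_\delta]+\mathds 1[x+\e z\in U_\delta]\bigr)$.

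Next I would fix $\ell_0\in\bbN$ and split the integral as $I'(\e)+I''(\e)$, where $I'(\e)$ collects the contribution of the pairs with $|z|\le\ell_0$ and $I''(\e)$ that of $|z|>\ell_0$. Using $\nu^\e_{\tilde\o,\L}\le\nu^\e_{\tilde\o}$ and Lemma~\ref{gattonaZ}--(i) --- applied with $\psi\equiv1$, $\varphi=\mathds 1_A$ and the nonnegative forms $b(\o,z)=|z|^2\mathds 1_{\{|z|\ge\ell_0\}}$, resp.\ $b(\o,z)=|z|^2$, which satisfy $\tilde b=b$ on $\{z\in\hat\o\}$ --- one gets, for every Borel $A\subseteq\bbR^d$, $\int d\nu^\e_{\tilde\o}(x,z)\,\mathds 1[x+\e z\in A]\,b(\t_{x/\e}\tilde\o,z)=\int d\nu^\e_{\tilde\o}(x,z)\,\mathds 1[x\in A]\,b(\t_{x/\e}\tilde\o,z)$; and unfolding the definition of $\nu^\e_{\tilde\o}$ together with covariance (A4) rewrites the right-hand side as $\int d\mu^\e_{\tilde\o}(x)\,\mathds 1[x\in A]\,h_{\ell_0}(\t_{x/\e}\tilde\o)$, resp.\ $\int d\mu^\e_{\tilde\o}(x)\,\mathds 1[x\in A]\,\l_2(\t_{x/\e}\tilde\o)$, with $h_{\ell_0}(\o):=\int d\hat\o(z)\,c_{0,z}(\o)\,|z|^2\,\mathds 1_{\{|z|\ge\ell_0\}}$. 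Combining with the pointwise bounds above gives $I''(\e)\le2\int d\mu^\e_{\tilde\o}(x)\,\mathds 1[x\in\L]\,h_{\ell_0}(\t_{x/\e}\tilde\o)$ and, for $\e<\delta/\ell_0$, $I'(\e)\le2\int d\mu^\e_{\tilde\o}(x)\,\mathds 1[x\in U_\delta]\,\l_2(\t_{x/\e}\tilde\o)$.

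Finally I would apply Proposition~\ref{prop_ergodico} with test functions $\varphi\in C_c(\bbR^d)$ sandwiching $\mathds 1_\L$, resp.\ $\mathds 1_{U_\delta}$: since $\tilde\o\in\O_{\rm typ}$ lies in $\cA_1[|z|^2]\cap\cA[\l_2]$ and in $\cA[h_{\ell_0}]$ (with $\cA_1[|z|^2]\subseteq\cA_1[|z|^2\mathds 1_{\{|z|\ge\ell_0\}}]$, as $\cA_1[b]\subseteq\cA_1[b']$ whenever $|b'|\le|b|$, and $h_{\ell_0}\le\l_2\in L^1(\cP_0)$), this yields $\limsup_{\e\da0}I''(\e)\le2m\,\bbE_0[h_{\ell_0}]$ and $\limsup_{\e\da0}I'(\e)\le4m\delta\,\bbE_0[\l_2]$. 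The second bound holds for every $\delta>0$, hence $\limsup_{\e\da0}I'(\e)=0$, so that $\limsup_{\e\da0}\int d\nu^\e_{\tilde\o,\L}(x,z)\,|\nabla_\e\psi(x,z)-z_1|^2\le2m\,\bbE_0[h_{\ell_0}]$; since $h_{\ell_0}\downarrow0$ pointwise with $h_{\ell_0}\le\l_2\in L^1(\cP_0)$, dominated convergence gives $\bbE_0[h_{\ell_0}]\to0$ as $\ell_0\uparrow\infty$, which proves the claim.

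The main difficulty is that the ``error region'' is not a product set: the endpoint lying in $\L$ is only guaranteed to be within a $z$-dependent distance $\e|z|$ of the boundary $F$. This is exactly why I would first peel off the long jumps $|z|>\ell_0$ --- whose contribution is governed by the tail $\bbE_0[h_{\ell_0}]\to0$ --- and then, on the bounded jumps, shrink the boundary slab $U_\delta$ before finally sending $\ell_0\to\infty$; it is also why Lemma~\ref{gattonaZ}--(i) is invoked, to handle the symmetric case in which $x+\e z$, rather than $x$, is the point that falls in the slab.
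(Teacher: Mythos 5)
Your proof is correct, and it diverges from the paper's at the one point where the argument has real content, namely the control of pairs that exit $\L$ from deep inside. The paper splits spatially first --- bulk $\L_\d:=(-1/2+\d,1/2-\d)^d$ versus boundary layer $\L\setminus\L_\d$ --- notes that a contributing pair with $x\in\L_\d$ must have $|z|\geq \d/\e$, and kills that contribution using the conductance-decay hypothesis \eqref{downtown}, writing $c_{0,z}\leq \k(\d/\e)\,c_{0,z}^{\a}$ with $\k(\d/\e)\to 0$ and invoking the set $\cA_1[c_{0,z}(\o)^\a z_1^2]\cap\cA[\int d\hat\o(z)c_{0,z}(\o)^\a z_1^2]$ from Definition \ref{amen}; the boundary layer is then handled exactly as your $U_\d$ term. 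You instead split in jump length first, and the long jumps $|z|>\ell_0$ are disposed of by the vanishing tail $\bbE_0[h_{\ell_0}]\to 0$ of $\l_2$ via dominated convergence, so your argument uses only (A4), (A5) and the memberships $\tilde\o\in\cA_1[|z|^2]\cap\cA[\l_2]\cap\cA[h_{\ell_0}]$ (all guaranteed by Definition \ref{amen}), dispensing with \eqref{downtown} for this claim --- a mild but genuine economy of hypotheses, and structurally the same mechanism the paper itself deploys in Lemma \ref{blocco}. Two further differences are organizational: your Lipschitz bound on $g=\psi-x_1-\tfrac12$ gives $|\nabla_\e\psi(x,z)-z_1|\leq|z_1|$ uniformly, replacing the paper's case-by-case monotonicity estimate over the four boundary configurations, and your appeal to Lemma \ref{gattonaZ}--(i) with $\tilde b=b$ cleanly symmetrizes the configurations in which the endpoint landing in $\L$ (resp.\ in $U_\d$) is $x+\e z$ rather than $x$, which the paper treats as "similar" cases. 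All steps check out; the proof is complete.
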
 
\begin{proof} If $x, x+\e z\in \L$, then $\nabla_\e \psi(x,z)=z_1$.  We have only 4 relevant alternative cases: 
(a) 
$x\in \L$, $x+\e z\in S_+$; (b) $x\in S_+$, $x+\e z\in \L$; (c) $x\in \L$, $x+\e z\in S_-$; (d) $x\in S_-$, $x+\e z\in \L$.
Below we treat only case (a), since the other cases can be treated similarly. Hence we assume (a) to hold. Then $x_1+\frac{1}{2}=\psi(x) \leq \psi (x+\e z) \leq x_1+\e z_1+\frac{1}{2}$ and therefore $0\leq \nabla_\e \psi (x,z)  \leq z_1$. This implies that $|\nabla_\e \psi(x,z)- z_1 |^2\leq z_1^2$.
Fix $\d \in (0,1/2)$ and set $\L_\d=(-1/2+\d, 1/2-\d)^{d}$. 
We can bound
\be\label{mary1}
\begin{split}
 & \int d \nu^\e_{\tilde \o, \L}(x,z) |\nabla_\e \psi(x,z)- z_1 |^2 \mathds{1}( x\in \L_\d, x+\e z\in S_+
 )
\\
& \leq  \int  d \nu^\e_{\tilde \o}(x,z) z_1^2 \mathds{1}(x\in \L_\d , z_1 \geq \d /\e)\\
& \leq   \int_{\L_\d} d \mu^\e_{\tilde \o} (x)  \int d \widehat{\t_{x/\e} \tilde \o}(z)  c_{0,z}(\t_{x/\e}\tilde \o)z_1^2\mathds{1}(|z|\geq \d/\e)\\
&\leq \k(\d/\e) \int_{\L_\d} d \mu^\e_{\tilde \o} (x)  \int d \widehat{\t_{x/\e} \tilde \o}(z)  c_{0,z}(\t_{x/\e}\tilde \o)^\a z_1^2 \leq  \k(\d/\e) \int_{\L_\d} d \mu^\e_{\tilde \o} (x)  h (\t_{x/\e} \tilde \o)\,,
\end{split}
\en
where $\k (\ell) :=\sup_{\o \in \O_0, |z| \geq \ell} c_{0,z}(\o) ^{1-\a} $ and     
$h (\o) := \int d\hat \o(z) c_{0,z}(\o)^\a z_1^2$. We have that $\lim_{\e\da 0}\k(\d/\e) =0$ by \eqref{downtown}.
Since  $\o \in \O_{\rm typ}\subset \cA_1[c_{0,z}(\o) ^\a z_1^2]\cap \cA[ h]$, the last integral in \eqref{mary1}  converges to a finite constant as $\e\da 0$. This concludes the proof that the l.h.s. of \eqref{mary1} converges to zero as $\e \da 0$.

We can bound
\be \label{mary2}
\begin{split}
 & \int d \nu^\e_{\tilde \o, \L}(x,z) |\nabla_\e \psi(x,z)- z_1 |^2 \mathds{1}( x\in\L\setminus  \L_\d, x+\e z\in S_+
 )
\\
& \leq  \int d \nu^\e_{\tilde \o}(x,z) z_1^2 \mathds{1}( x\in \L\setminus \L_\d) \leq  \int _{\L\setminus \L_\d} d \mu^\e_{\tilde \o}(x) \l_2(\t_{x/\e}\tilde \o)\,.
\end{split}
\en
By Prop. \ref{prop_ergodico} and since  $\tilde \o \in \O_{\rm typ}\subset \cA_1[|z|^2]\cap \cA[\l_2]$,  $\lim_{\e \da 0}  \int _{\L\setminus \L_\d} d \mu^\e_{\tilde \o}(x) \l_2(\t_{x/\e}\tilde \o)= \ell(\L\setminus \L_\d) \bbE_0[\l_2]$. It then follows that the l.h.s. of \eqref{mary1} converges to zero as $\e \da 0$ and afterwards $\d \da 0$.
\end{proof}
As a  byproduct of Claim \ref{marlena93} and \eqref{antiguaz}, we get 
\be\label{antigua}
\lim_{\e\da 0}\la \nabla_\e V_\e, \nabla_\e V_\e\ra_{L^2(\nu^\e_{\tilde \o, \L})}  =   \lim_{\e \da 0} \int d \nu^\e_{\tilde \o,\L} (x,z) z_1 \nabla_\e V_\e (x,z)\,.
\en
By applying Schwarz inequality as in \eqref{manovra}, we get that \begin{multline}\label{antigua_bis}
\lim_{\e \da 0}  \int d \nu^\e_{\tilde \o,\L} (x,z) z_1 \nabla_\e V_\e (x,z)=
\lim_{n \uparrow \infty}
\lim_{\e \da 0} \int d \nu^\e_{\tilde \o,\L} (x,z) \phi_n(x) z_1 \nabla_\e V_\e (x,z)\,.
\end{multline}

By Lemma \ref{compatto2} from any vanishing  sequence $\{\e_k\}$  we can extract a 
sub-subsequence $\{\e_{k_n}\}$ such that 
$  \nabla_\e V_\e \stackrel{2}{\rightharpoonup}w   $ along the sub-subsequence as in \eqref{totani2a}. Since $\phi_n \in C_c(\L)$,  
as a byproduct of \eqref{antigua} and \eqref{antigua_bis} we obtain that 
\be\label{emoji}
\begin{split}
\lim_{\e\da 0}  \la \nabla_\e V_\e, \nabla_\e V_\e\ra _{L^2(\nu^\e_{\tilde \o, \L})} &  =   \lim _{n\uparrow \infty} \int _{\L} dx\,   m
\phi_n(x) \int d\nu  (x,z) z_1   w(x,\o,z)\\
 &=    \int _{\L} dx\,  m
 \int d\nu  (x,z) z_1  w(x,\o,z)
\end{split}
\en
along $\{\e_{k_n}\}$.
Due to \eqref{mattacchione} the last term equals $m \int_\L 2 (D \nabla v(x)) \cdot e_1 dx $. Since $v=\psi_{|\L}$ as derived  in the first part of the proof,    we get that $\nabla v(x)=e_1$. As a consequence, the last term of \eqref{emoji} equals $2m  D_{11}$, thus allowing to conclude the proof.

\appendix

\section{Proof of Equations \eqref{eq_ailo0} and  \eqref{eq_ailo}}\label{ailo}
For simplicity of notation we write $i_{x,y}$ instead of $i_{x,y}(\o)$.
It is  also convenient to set $A_0:=\hat \o \cap \L_\ell$, $A_{- 1}:= \{x \in \hat \o \cap S_\ell\,:\, x_1 
\leq -\ell/2\}$ and  $A_{ 1}:= \{x \in \hat \o \cap S_\ell\,:\, x_1 
\geq \ell/2\}$.

We start proving \eqref{eq_ailo0}. Due to definition \eqref{ide1} of $\s_\ell(\o)$ we can write the r.h.s. of \eqref{eq_ailo0} as 
\be\label{goccina}
\s_\ell(\o)- \sum _{x\in A_{-1}} \sum_{\substack{y\in A_0:\\ y_1\leq \g}}i_{x,y} + \sum 
_{\substack{x\in A_0:\\ x_1\leq \g}} \, \sum_{\substack{y\in A_0\cup A_1:\\ y_1> \g}}i_{x,y}\,.
\en
By antisymmetry  $- \sum _{x\in A_{-1}} \sum_{\substack{y\in A_0:\\ y_1\leq \g}}i_{x,y} =  \sum _{x\in A_{-1}}\sum_{\substack{y\in A_0:\\ y_1\leq \g}}  i_{y,x}$. Hence, \eqref{goccina} can be rewritten as 
\be\label{goccina1}
\s_\ell(\o)+ \sum_{\substack{x\in A_0:\\ x_1\leq \g}} \sum _{y\in A_{-1}} i_{x,y} +\sum 
_{\substack{x\in A_0:\\ x_1\leq \g}} \, \sum_{\substack{y\in A_0\cup A_1:\\ y_1> \g}}i_{x,y}\,.
\en
By antisymmetry  $
\sum_{\substack{x\in A_0:\\ x_1\leq \g}}  \sum_{\substack{y\in A_0:\\ y_1\leq \g}} i_{x,y}=0$. By adding this zero sum to \eqref{goccina} we get 
$\s_\ell(\o)+ \sum_{\substack{x\in A_0:\\ x_1\leq \g}}  ({\rm div}\, i)_x $,  $ ({\rm div}\, i)_x$ being  the divergence of the current field at $x$ given by 
 $({\rm div}\, i)_x :=\sum _{y \in \hat \o \cap S_\ell} i_{x,y}$. To conclude the proof of \eqref{eq_ailo0} we observe that $({\rm div}\, i)_x =0$  for any $x \in A_0$ by \eqref{thenero}.

\smallskip
We move to the proof of \eqref{eq_ailo}. Due to \eqref{ahahah} we can write the r.h.s. of \eqref{eq_ailo} as 
\begin{equation}\label{nebula}
 2^{-1} \sum_{ \substack{(x,y):\\ \{x,y\} \in \bbB^\o_\ell} } c_{x,y}(\o) \bigl( V^\o _\ell (x) - V^\o _\ell (y) \bigr)^2=2^{-1} C_1-2^{-1} C_2\,,
 \en
 where 
$ C_1:= \sum_{ \substack{(x,y):\\ \{x,y\} \in \bbB^\o_\ell} } i_{x,y}   V^\o _\ell (y)$ and 
  $C_2:=  \sum_{ \substack{(x,y):\\ \{x,y\} \in \bbB^\o_\ell} } i_{x,y}   V^\o _\ell (x)$.
 We analyze the two contributions $C_1$ and $C_2$ separately.
  As $V\equiv 0$ on $A_{-1}$ and $V\equiv 1$ on $A_1$ we can write
  \be\label{oasi}
  C_1=
   \sum_{ x\in A_{-1}, y \in A_0 } i_{x,y}   V^\o _\ell (y)+  \sum_{ x\in A_{0}, y \in A_0 } i_{x,y}  V^\o _\ell (y)+ \sum_{ x\in A_{0}, y \in A_1 } i_{x,y}+
    \sum_{ x\in A_{1}, y \in A_0 } i_{x,y} V^\o _\ell (y)
   \,.  \en
   Note that, by antisymmetry of the current, we can rewrite \eqref{oasi} as
   \be\label{oasi1}
   C_1= \sum_{ x\in A_{0}, y \in A_1 } i_{x,y}-  \sum_{ y \in A_0 } V^\o_\ell (y) \sum _{x\in A_0\cup A_{-1}\cup A_1} i_{y,x}=\sum_{ x\in A_{0}, y \in A_1 } i_{x,y}\,,
   \en
   where the last identity follows from the fact that $({\rm div}\, i)_x=0$ for any $x\in A_0$.
   
   We now move to $C_2$. Always by the above zero divergence property,  in $C_2$ we can remove the contribution from $x\in A_0$. Hence, using also \eqref{thebianco}, we get 
   \be\label{oasi2}
   C_2=  \sum_{ x\in A_{-1}, y\in A_0 } i_{x,y}  V^\o _\ell (x)+\sum_{ x\in A_{1}, y\in A_0 } i_{x,y}   V^\o _\ell (x)= \sum_{ x\in A_{1}, y\in A_0 } i_{x,y}  \,.
      \en
      By combining \eqref{nebula}, \eqref{oasi1} and \eqref{oasi2} we conclude that the r.h.s. of \eqref{eq_ailo} equals $ \sum_{ x\in A_{0}, y\in A_1 } i_{x,y}$. This last term equal $\s_\ell(\o)$ due to \eqref{eq_ailo0} with $\g$ very near to $\ell/2$ (as $\hat \o$ is a locally finite set).
\bigskip
\bigskip
\noindent
{\bf Acknowledgements}.
 I thank  Andrey Piatnitski  for useful discussions. 
I thank Annibale Faggionato and Bruna Tecchio for their warm hospitality in Codroipo, where part of this work has been completed. 


\end{document}